\newcommand{\Plane}{\ensuremath{\mathbb{R}^2}}
\newcommand{\bd}{\ensuremath{\partial}}
\newcommand{\cut}{\ensuremath{\mathrm{cut}}}
\newcommand{\poc}{\ensuremath{\mathrm{poc}}}
\newcommand{\Kernel}{\mathcal{K}}
\newcommand{\RKernel}{\mathcal{R}}
\newtheorem{lemma}{Lemma}
\newtheorem{theorem}{Theorem}
\newtheorem{observation}{Observation}
\theoremstyle{definition}
\newcommand\Tstrut{\rule{0pt}{2.6ex}}       
\newcommand\Bstrut{\rule[-1.1ex]{0pt}{0pt}} 
\let\geq\geqslant
\let\leq\leqslant
\title{Improved Bounds for Beacon-Based Coverage and Routing in Simple Rectilinear Polygons
\thanks{%
Work by S.W. Bae was supported by Basic Science Research Program through the National
Research Foundation of Korea (NRF) funded by the Ministry of Science, ICT \& Future Planning
(2013R1A1A1A05006927).
Work by C.-S. Shin was supported by Research Grant of Hankuk University of Foreign Studies.
}
}
\author{%
Sang Won Bae\footnote{%
Department of Computer Science, Kyonggi University, Suwon, Korea.
Email: \texttt{swbae@kgu.ac.kr}
}
\and %
Chan-Su Shin\footnote{%
Division of Electronic Systems \& Computer Engineering,
Hankuk University of Foreign
Studies, Yongin, Korea.
Email: \texttt{cssin@hufs.ac.kr, chansu@gmail.com}.
}
\and %
Antoine Vigneron\footnote{%
Visual Computing Center,
King Abdullah University of Science and	Technology (KAUST), Thuwal 23955-6900, Saudi Arabia.
Email: \texttt{antoine.vigneron@kaust.edu.sa}
}
}
\date{%
\today
}
\begin{document}
\maketitle

\begin{abstract}
We establish tight bounds for beacon-based coverage problems, and improve the bounds for beacon-based routing problems in simple rectilinear polygons. 
Specifically, we show that $\lfloor \frac{n}{6} \rfloor$ beacons are always sufficient and sometimes necessary to cover a simple rectilinear polygon $P$ with $n$ vertices. 
 We also prove tight bounds for the case where $P$ is monotone,
and we present an optimal linear-time algorithm that computes the beacon-based kernel
of $P$.
For the routing problem, we show that $\lfloor \frac{3n-4}{8} \rfloor - 1$ beacons
are always sufficient, and $\lceil \frac{n}{4}\rceil-1$ beacons are sometimes necessary to route between all pairs of points in $P$.
\end{abstract}

\section{Introduction} \label{sec:intro}

A \emph{beacon} is a facility or a device that attracts objects within a given domain.
We assume that objects in the domain, such as mobile agents or robots, know the exact location
or the direction towards an activated beacon in the domain, even if it is not directly visible.
More precisely, given a polygonal domain $P$, a beacon is placed at a fixed point in $P$.
When a beacon $b \in P$ is activated, an object $p \in P$ moves along the ray
starting at  $p$ and towards the beacon $b$ until it either  hits the boundary $\bd P$ of $P$,
or it reaches $b$. (See Figure~\ref{fig:intro}a.)
If $p$ hits an edge $e$ of $P$, then it continues to move along $e$ in the direction such
that the Euclidean distance to $b$ decreases.
When $p$ reaches an endpoint of $e$, it may move along the ray from the current
position of $p$ towards $b$, if possible, until it again hits the boundary $\bd P$ of $P$.
So, $p$ is pulled by $b$ in a greedy way, so that the Euclidean distance to $b$
is monotonically decreasing, as an iron particle is pulled by a magnet.
There are two possible outcomes: Either $p$ finally reaches $b$, or it stops at a local minimum,
called a \emph{dead point}, where there is no direction in which, locally, the distance to $b$ 
strictly decreases. In the former case, $p$ is said to be \emph{attracted} by the beacon $b$.

This model of beacon attraction was recently suggested by 
Biro~\cite{b-bbrg-13, bgikm-cccg-13, bikm-wads-13},
and extends the classical notion of visibility.
We consider two problems based on this model:
the \emph{coverage} and the \emph{routing} problem,
introduced by Biro~\cite{b-bbrg-13} and Biro et al.~\cite{bgikm-cccg-13,bikm-wads-13}.
In the beacon-based coverage problem, we need to place beacons in $P$
so that any point $p\in P$ is attracted by at least one of the beacons.
In this case, we say the set of beacons \emph{covers} or \emph{guards} $P$.
In the beacon-based routing problem, we want to place beacons in $P$
so that every pair $s, t\in P$ of points can be routed:
We say that $s$ is  \emph{routed} to $t$ if there is a sequence of beacons in $P$
that can be activated and deactivated one at a time,
such that the source $s$ is successively attracted by each of beacon of this sequence,
and finally reaches the target $t$, which is regarded as a beacon.

In this paper, we are interested in combinatorial bounds
on the number of beacons required for coverage and routing,
in particular when the given domain $P$ is a simple rectilinear polygon.
Our bounds are variations on visibility-based guarding results, such
as the well-known \emph{art gallery theorem}~\cite{c-ctpg-75} and
its relatives~\cite{o-apragt-83,ghks-ggprp-96, kkk-tgrfw-83, g-spragt-86, mp-agtgg-03}.
The beacon-based coverage problem is analogous to the art gallery problem,
while the beacon-based routing problem is analogous to the \emph{guarded guards} 
problem~\cite{mp-agtgg-03}, 
which asks for a set of point guards in $P$ such that every point is visible from
at least one guard and every guard is visible from another guard.
For the art gallery problem, it is known that
$\lfloor \frac{n}{3} \rfloor$ point guards are sufficient, and sometimes necessary,
to guard a simple polygon $P$ with $n$ vertices~\cite{c-ctpg-75}.
If $P$ is rectilinear, then $\lfloor \frac{n}{4} \rfloor$ are necessary and 
sufficient~\cite{kkk-tgrfw-83,o-agta-87, g-spragt-86}.
In the guarded guards problem, this number
becomes $\lfloor \frac{3n-1}{7} \rfloor$ for simple polygons
and $\lfloor \frac{n}{3} \rfloor$ for simple rectilinear polygons~\cite{mp-agtgg-03}.
Other related results are mentioned in the book~\cite{o-agta-87} by O'Rourke
or the surveys by Shermer~\cite{s-rrag-92} and Urrutia~\cite{u-agip-00}.

Biro et al.~\cite{bgikm-cccg-13} initiated research on combinatorial bounds
for beacon-based coverage and routing problems,
with several nontrivial bounds for different types of domains such as
rectilinear or non rectilinear polygons, with or without holes.
When the domain $P$ is a simple rectilinear polygon with $n$ vertices, they showed that $\lfloor \frac{n}{4} \rfloor$ beacons are sufficient to cover
any rectilinear polygon with $n$ vertices,
while $\lfloor \frac{n+4}{8} \rfloor$ beacons are necessary to cover
the same example in \figurename~\ref{fig:intro},
and conjectured that $\lfloor \frac{n+4}{8} \rfloor$ would be the tight bound.
They also proved that $\lfloor \frac{n}{2} \rfloor -1$ beacons are always sufficient for routing, and some domains, such as the domain depicted in \figurename~\ref{fig:intro}a, require $\lfloor \frac{n}{4} \rfloor - 1$ beacons.

\begin{table}[tbh]
\centering
\begin{tabular}{c|c|c|c|c}
 \hline
 & \multicolumn{2}{c|}{Lower bound} & \multicolumn{2}{c}{Upper bound}  \Tstrut\Bstrut\\
 \cline{2-5}
 & Best results & Our results & Best results & Our results
 \Tstrut\Bstrut\\ 
  \hline\hline
 Coverage & $\lfloor \frac{n+4}{8} \rfloor$ \quad\; \cite{bgikm-cccg-13}&
 $\lfloor \frac{n}{6}\rfloor$ \qquad\, [Theorem~\ref{thm:covering}] & $\lfloor \frac{n}{4}\rfloor$ \quad\enspace\;\; \cite{bgikm-cccg-13} & $\lfloor \frac{n}{6}\rfloor$ \qquad\quad\enspace [Theorem~\ref{thm:covering}]
  \Tstrut\Bstrut\\ \hline
 Routing & $\lfloor \frac{n}{4}\rfloor-1$  \enspace \cite{bgikm-cccg-13} & $\lceil \frac{n}{4}\rceil -1$ \enspace [Theorem~\ref{thm:routing}] & $\lfloor \frac{n}{2}\rfloor -1$  \enspace \cite{bgikm-cccg-13} &  $\lfloor \frac{3n-4}{8}\rfloor-1$ \enspace [Theorem~\ref{thm:routing}]
  \Tstrut\Bstrut\\
 \hline
\end{tabular}
\caption{Best known results and our results on the number of beacons required for beacon-based coverage and routing in a simple rectilinear polygon $P$ with $n$ vertices. Our lower bound on the routing problem holds for any $n \neq 6$.}
\label{tbl:summary}
\end{table}

\paragraph{Our results.}

In this paper, we first prove tight bounds on beacon-based coverage problems for simple rectilinear polygons. (See \tablename~\ref{tbl:summary}.) 
In Section~\ref{sec:coverage},
we give a lower bound construction that requires $\lfloor \frac{n}{6} \rfloor$ beacons, and then we present a method of placing the same number of beacons to cover $P$, which matches the lower bound we have constructed.
These results settle the open questions on the beacon-based coverage problems for simple rectilinear polygons
posed by Biro et al.~\cite{bgikm-cccg-13}.
We also consider the case of monotone polygons:
For routing in a monotone rectilinear polygon, the same bound 
$\lfloor \frac{n}{4} \rfloor - 1$ holds,
while $\lfloor \frac{n+4}{8} \rfloor$ beacons are always sufficient to cover a monotone rectilinear polygon.

We next improve the upper bound from $\lfloor \frac{n}{2}\rfloor-1$ to $\lfloor \frac{3n-4}{8} \rfloor - 1$ for
the routing problem in Section~\ref{sec:routing}. We also slightly improve the lower bound from $\lfloor \frac{n}{4}\rfloor - 1$ to $\lceil \frac{n}{4}\rceil -1$ for any $n\neq 6$. 

We also present an optimal linear-time algorithm that computes the \emph{beacon kernel}
$\Kernel(P)$ of a simple rectilinear polygon $P$ in Section~\ref{sec:kernel}.
The {beacon kernel} $\Kernel(P)$ of $P$ is defined to be
the set of points $p\in P$ such that placing a single beacon at $p$ is sufficient to completely cover $P$.
Biro first presented an $O(n^2)$-time algorithm that computes the kernel $\Kernel(P)$
of a simple polygon $P$ in his thesis~\cite{b-bbrg-13},
and  Kouhestani et al.~\cite{krs-cccg-14} soon improved it to $O(n\log n)$ time
with the observation that $\Kernel(P)$ has a linear complexity.
Our algorithm is based on a new, yet simple, characterization of the kernel $\Kernel(P)$.

\begin{figure}[tb]
\centering
\includegraphics[width=\textwidth]{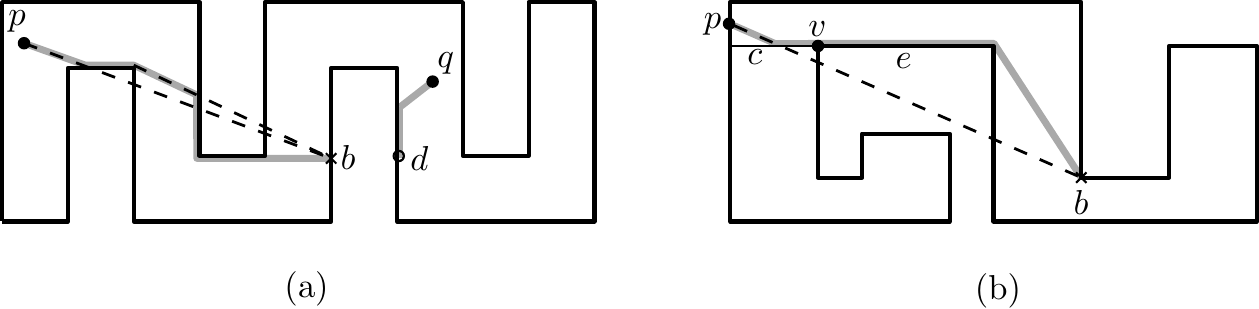}
\caption{(a) A lower bound construction $P$ by Biro et al.~\cite{bgikm-cccg-13}.
 A point $p \in P$ is attracted by a beacon $b$ through the beacon attraction path
 depicted by the thick gray path, while $q \in P$ is not since it stops at the dead point $d$.
 (b) Another rectilinear polygon $P$. If one partitions $P$ by the horizontal cut $c$ (dashed segment) 
 at $v$ into two subpolygons $P^+_c$ and $P^-_c$ and handle each separately,
 then it does not guarantee that $P$ is guarded.
 In this case, $p\in P^+_c$ is attracted by $b$ inside the subpolygon $P^+_c$
 while it is not the case in the whole domain $P$.}
\label{fig:intro}
\end{figure}

\section{Preliminaries} \label{sec:pre}

A \emph{simple rectilinear polygon} is a simple polygon whose edges are either horizontal or vertical.
The internal angle at each vertex of a rectilinear polygon is always $90^\circ$ or $270^\circ$.
We call a vertex with internal angle $90^\circ$ a \emph{convex vertex}, and a vertex with internal 
angle $270^\circ$ is called a \emph{reflex vertex}.
For any simple rectilinear polygon $P$, we let $r = r(P)$ be the number of its reflex vertices.
If $P$ has $n$ vertices in total, then $n = 2r + 4$, because the sum of the signed turning angles
along $\bd P$ is $360^\circ$.
An edge of $P$ between two  convex vertices is called a \emph{convex edge},
and an edge between two reflex vertices is called a \emph{reflex edge}.
Each convex or reflex edge $e$ shall be called \emph{top}, \emph{bottom}, \emph{left} or \emph{right}
according to its orientation:
If $e$ is horizontal and the two adjacent edges of $e$ are downwards from $e$,
then $e$ is a top convex or reflex edge.
(The edge $e$ in \figurename~\ref{fig:intro}b is a top reflex edge.)
For each edge $e$ of $P$, we are often interested in the half-plane $H_e$
whose boundary supports $e$ and whose interior includes the interior of $P$ locally at $e$.
We shall call $H_e$ the \emph{half-plane supporting $e$}.

A rectilinear polygon $P$ is called \emph{$x$-monotone} (or \emph{$y$-monotone})
if any vertical (resp., horizontal) line intersects $P$ in at most one connected component.
If $P$ is both $x$-monotone and $y$-monotone, then $P$ is said to be \emph{$xy$-monotone}.
From the definition of the monotonicity,
we observe the following.
\begin{observation} \label{obs:monotone}
A rectilinear polygon $P$ is $x$-monotone if and only if $P$ has no vertical reflex edge.
Hence, $P$ is $xy$-monotone if and only if $P$ has no reflex edge.
\end{observation}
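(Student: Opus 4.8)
The plan is to reduce the stated ``hence'' clause to the first equivalence and then prove that first equivalence. Reflecting the plane across the line $y=x$ exchanges the roles of $x$ and $y$ and of vertical and horizontal edges, so the claim that $P$ is $y$-monotone iff $P$ has no horizontal reflex edge is just the $90^\circ$-rotation of the first equivalence. Since $P$ is $xy$-monotone exactly when it is both $x$- and $y$-monotone, and $P$ has no reflex edge exactly when it has neither a vertical nor a horizontal reflex edge, the second statement follows from the first applied in both orientations. So it suffices to prove that $P$ is $x$-monotone if and only if $P$ has no vertical reflex edge, which I would establish by proving both directions by contraposition. The workhorse is a local angle computation: if $e$ is a vertical edge whose two incident horizontal edges both extend to the same horizontal side of $e$, then both endpoints of $e$ are reflex precisely when the interior of $P$ lies locally on the side of $e$ opposite to those horizontal edges; equivalently, a vertical reflex edge is exactly the innermost wall of a horizontally-opening notch. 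This is a routine check of the two $270^\circ$ internal angles using rectilinearity and a fixed (say counterclockwise) orientation of $\bd P$.

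For the direction ``vertical reflex edge $\Rightarrow$ not $x$-monotone'', let $e$ be a vertical reflex edge with endpoints $a$ (bottom) and $b$ (top). By the local computation, both horizontal edges incident to $a$ and $b$ extend to the same side of $e$, say to the east, and the interior of $P$ lies locally to the west; thus $e$ is the west wall of a notch that opens eastward. Then a vertical line $\ell$ placed just east of $e$ (at $x = x_e+\varepsilon$ for small $\varepsilon>0$) meets $\intr P$ immediately below the bottom horizontal edge and immediately above the top horizontal edge, while the portion of $\ell$ strictly between these two edges lies in the notch and hence outside $P$. Hence $\ell$ meets $P$ in at least two connected components, so $P$ is not $x$-monotone.

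For the converse, ``not $x$-monotone $\Rightarrow$ vertical reflex edge'', suppose some vertical line meets $P$ in at least two components; then there is an exterior ``gap'' segment on that line with $\intr P$ both directly above and directly below it. Call a point $z\notin P$ a \emph{pocket point} if $\intr P$ meets the vertical line through $z$ both above and below $z$; the gap shows the set of pocket points is nonempty, and it is clearly bounded, so its closure is compact. I would choose a pocket point $z$ whose $x$-coordinate is maximum. By maximality there is no pocket point just to the east of $z$, so the exterior pocket containing $z$ is bounded on its east side by $\bd P$ with $\intr P$ lying to the east; since $z$ also has interior directly above and below and exterior (the pocket) to its west, $z$ must lie on a vertical edge $e$ of $P$ whose interior side faces east. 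Tracing $\bd P$ away from $e$ at each endpoint, the boundary must turn onto a horizontal edge running west so as to keep the pocket on the west and the interior on the east; by the local computation both endpoints of $e$ are then reflex, so $e$ is a vertical reflex edge.

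The main obstacle is this last step: turning the intuitive ``deepest point of the notch'' picture into a rigorous argument. The delicate points are verifying that the extremal pocket point $z$ lies in the relative interior of a vertical edge (rather than being a convex vertex or a grazing contact of the sweeping line with a vertical edge), and checking the orientation of the two horizontal edges emanating from that edge; both are handled by a careful, orientation-aware case analysis together with the angle computation set up at the start, and one may invoke general position (perturbing $x$ and the gap) to dispose of degenerate contacts. Once this is in place, the first equivalence, and with it the whole observation, follows.
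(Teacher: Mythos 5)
The paper records this statement as an immediate observation ``from the definition of the monotonicity'' and gives no proof, so the only question is whether your argument is correct. Your reduction of the second sentence to the first (via the symmetry exchanging $x$ and $y$) is fine, and your forward direction is complete: the local angle computation --- a vertical edge whose two incident horizontal edges lie on the same side has two reflex endpoints exactly when the interior of $P$ lies on the opposite side --- is correct, and the vertical line placed just inside the resulting notch indeed meets $P$ in two components.

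The converse, however, has a genuine gap at precisely the step you flag. The set of pocket points as you define it ($z\notin P$ with $\intr P$ meeting the vertical line through $z$ both above and below $z$) is an \emph{open} set: the complement of $P$ is open and both ray conditions are stable under small perturbations of $z$. An open bounded set has no point of maximum $x$-coordinate, so ``choose a pocket point $z$ whose $x$-coordinate is maximum'' selects nothing; and if you instead pass to a limit point realizing the supremum, that point is no longer a pocket point --- indeed your own conclusion is that $z$ lies on a vertical edge of $P$, which contradicts $z\notin P$. The entire content of the converse is the analysis of this boundary point: one must show it lies in the relative interior of a vertical edge with the interior of $P$ to its east and both incident horizontal edges running west (rather than being a vertex, or an exterior point where the vertical line merely grazes $\bd P$ without entering $\intr P$), and this is only gestured at. A cleaner route avoids the extremal point entirely: as $x$ sweeps left to right, the number of components of $\ell_x\cap P$ changes only at vertical edges whose two incident horizontal edges lie on the same side; such an edge is a birth or death (both endpoints convex) when the interior is on that same side, and a split or merge (both endpoints reflex, i.e.\ a vertical reflex edge) when the interior is on the opposite side. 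With no vertical reflex edges there are no splits or merges, so distinct components of the slices can never join; since $P$ is connected this forces the count to remain at most one, i.e.\ $P$ is $x$-monotone.
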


Our approach to attain  tight upper bounds relies on
partitioning a given rectilinear polygon $P$ into subpolygons by cuts.
More precisely, a \emph{cut} in $P$ is a chord\footnote{%
A \emph{chord} $c$ of a polygon is a line segment between two points on the boundary
such that all points on $c$ except the two endpoints lie in the interior of the polygon.}
of $P$ that is horizontal or vertical.
There is a unique cut at a point $p$ on the boundary $\bd P$ of $P$
unless $p$ is a vertex of $P$.
If $p$ is a reflex vertex, then there are two cuts at $p$,
one of which is \emph{horizontal} and the other is \emph{vertical},
while there is no cut at $p$ if $p$ is a convex vertex.
Any horizontal cut $c$ in $P$ partitions $P$ into two subpolygons:
one below $c$, denoted by $P_c^-$, and the other above $c$ denoted by $P_c^+$.
Analogously, for any vertical cut $c$,
let $P_c^-$ and $P_c^+$ denote the subpolygons to the left and to the right of $c$, respectively.

For a beacon $b$ and a point $p\in P$, the \emph{beacon attraction path} of $p$ with respect to $b$,
or simply the \emph{$b$-attraction path} of $p$,
is the piecewise linear path from $p$ created by the attraction of $b$
as described in Section~\ref{sec:intro}.
(See \figurename~\ref{fig:intro}a.)
If the $b$-attraction path of $p$ reaches $b$, then we say that $p$ is \emph{attracted} to $b$.
As was done for the classical visibility notion~\cite{o-apragt-83,ghks-ggprp-96},
a natural approach would find a partition of $P$ into smaller subpolygons of similar size,
and handle them recursively.
However, we must be careful when choosing a partition of $P$, because
an attraction path within a subpolygon may not be an attraction path
within $P$. (See \figurename~\ref{fig:intro}b.)
So $P$ is not necessarily guarded by the union of the guarding sets of the subpolygons.

Thus, when applying a cut in $P$, we want to make sure that
beacon attraction paths in a subpolygon $Q$ of $P$ do not \emph{hit} the new edge
of $Q$ produced by $c$.
To be more precise, we say that an edge $e$ of $P$ is \emph{hit} by $p$ with respect to $b$
if the $b$-attraction path of $p$ makes a bend along $e$.
\begin{observation} \label{obs:convex_edge}
 Let $b$ be a beacon in $P$ and $p\in P$ be any point
 such that $p$ is attracted by $b$.
 If the $b$-attraction path of $p$ hits an edge $e$ of $P$,
 then $p\in H_e$ and $b\notin H_e$, where $H_e$ denotes the half-plane supporting $e$.
 Therefore, no beacon attraction path hits a convex edge of $P$.
\end{observation}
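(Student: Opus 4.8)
The plan is to prove the two membership claims separately and then derive the statement about convex edges, reducing throughout to the case where $e$ is a \emph{top} edge (so that $H_e$ is the half-plane lying below the line $\ell$ that supports $e$); the other three orientations follow by a $90^\circ$ or $180^\circ$ symmetry. Write $b$-path for the $b$-attraction path of $p$, and recall that it is a polygonal path whose distance to $b$ decreases monotonically, made up of \emph{free} segments aimed straight at $b$ and \emph{boundary} segments running along $\bd P$.

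First I would prove that $b\notin H_e$. Since the path bends along $e$, it reaches a point $x\in e$ at the end of a free segment and then turns to follow $e$. Because $e$ is a top edge, the interior of $P$ lies locally below $\ell$, so this free segment approaches $x$ from strictly below $\ell$; to reach $x\in\ell$ from below while aimed at $b$, the segment must point strictly upward, which forces $b$ to lie strictly above $\ell$. (If $b$ lay on $\ell$, the segment would meet $\ell$ only at $b$ itself, so the point would already have reached $b$ and there would be no bend.) Hence $b\notin H_e$.

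Next I would prove that $p\in H_e$, and here is where I expect the real work to be. The key claim is that the $b$-path never crosses $\ell$ in the downward direction. A free segment cannot do so, since it is aimed at $b$, which by the previous step lies above $\ell$, so its direction cannot point downward across $\ell$. A boundary segment cannot do so either: such a crossing would require the point to descend along a vertical edge through $\ell$, but descending along a vertical edge decreases the distance to $b$ only when $b$ lies below the current point, again contradicting that $b$ is above $\ell$. Since the path ends at $x\in\ell$ and can never pass from above $\ell$ to below it, it can never have been above $\ell$ at all; in particular $p\in H_e$. I would make this precise by taking the last parameter at which the path sits on $\ell$ having arrived from above, and deriving a downward crossing there, which the two cases rule out.

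Finally I would deduce that no attraction path of an attracted point hits a convex edge. Suppose such a path hits a convex top edge $e$. Both endpoints of $e$ are convex vertices whose incident edges descend into the interior, so the region of $P$ immediately above $\ell$ between these two walls is exterior. By the first step $b$ lies strictly above $\ell$, so after bending onto $e$ the point travels along $e$ toward the foot of the perpendicular from $b$ onto $\ell$ and then would have to move upward to approach $b$; but at that foot (or at a convex endpoint of $e$, when the foot lies outside $e$) every locally distance-decreasing direction toward $b$ leaves $P$. The point is therefore trapped at a dead point and never reaches $b$, contradicting that $p$ is attracted by $b$. Hence attraction paths avoid convex edges. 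The main obstacle is the no-downward-crossing argument of the second step, since it must control the free segments and the boundary segments of the path simultaneously.
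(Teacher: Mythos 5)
The paper states this as an \emph{Observation} and offers no proof at all, so there is nothing to match your argument against; what you wrote is a correct and complete justification, and each of your three steps (beacon strictly outside $H_e$, no downward crossing of the supporting line, dead point at the perpendicular foot or at a convex endpoint) checks out. One small point worth tightening: you derive $b\notin H_e$ from the assumption that the path arrives at $e$ at the end of a \emph{free} segment, but the path could also first touch $e$ at an endpoint after sliding along an adjacent edge; the cleanest fix is to argue directly from the sliding itself --- while the path slides along $e$ without heading straight for $b$, the direction toward $b$ must point out of $P$ locally at $e$, hence strictly out of $H_e$ --- which covers both modes of arrival and does not change anything downstream. Your reading of the final clause (that it concerns only paths of points that are actually attracted) is the right one, since the dead-point contradiction is exactly what the hypothesis ``$p$ is attracted by $b$'' is for.
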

Thus, if we choose a cut that becomes a convex edge on both sides,
then we will be able to handle each subpolygon separately.

In this paper, we make the  general position assumption 
that \textit{no cut in $P$ connects two reflex vertices.}
This general position can be obtained by perturbing the reflex vertices of $P$ locally,
and such a perturbation does not harm the upper bounds on our problems in general.
It will be discussed in the full version of the paper.

\section{The Beacon Kernel} \label{sec:kernel}

Before continuing to the beacon-based coverage problem,
we consider simple rectilinear polygons that can be covered by a single beacon.
This is related to the  \emph{beacon kernel} $\Kernel(P)$ of a simple polygon $P$,
defined to be the set of all points $p \in P$
such that a beacon placed at $p$ attracts all points in $P$.
Specifically, we give a characterization of rectilinear polygons $P$ such that $\Kernel(P) \neq \emptyset$.
Our characterization is  simple and constructive, resulting in a linear-time algorithm
that computes the beacon kernel $\Kernel(P)$ of any simple rectilinear polygon $P$.

Let $R$ be the set of reflex vertices of $P$. Let $v\in R$ be any reflex vertex with
two incident edges $e_1$ and $e_2$. 
For $i \in \{1, 2\}$,
define $N_i$ to be the closed half-plane whose boundary is the line orthogonal to $e_i$ through $v$
and whose interior excludes $e_i$.
Let $C_v := N_1 \cup N_2$.
Observe that $C_v$ is a closed cone with apex $v$.
Biro~\cite[Theorem 5.2.8]{b-bbrg-13} showed that the kernel $\Kernel(P)$ of $P$
is the set of points in $P$ that lie in $C_v$ for all reflex vertices $v\in R$:
\begin{lemma}[Biro~\cite{b-bbrg-13}] \label{lem:kernel_biro}
 For any simple polygon $P$ with set $R$ of reflex vertices, it holds that
 \[ \Kernel(P) = \left(\bigcap_{v\in R} C_v\right) \cap P = P\setminus \left(\bigcup_{v\in R}\overline{C_v} \right),\]
 where $\overline{C_v} = \Plane \setminus C_v$ denotes the complement of $C_v$. 
\end{lemma}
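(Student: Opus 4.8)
The second identity is immediate: by De Morgan, $\bigcap_{v\in R} C_v = \Plane\setminus\bigcup_{v\in R}\overline{C_v}$, so intersecting with $P$ yields $P\setminus\bigcup_{v\in R}\overline{C_v}$. The substance is the first identity, which I would prove through the notion of a dead point. The plan is to first reduce the statement $b\in\Kernel(P)$ to a local condition: a beacon at $b$ covers $P$ if and only if $b$ is the only dead point of $b$ in $P$. One direction is trivial, since a dead point $d\neq b$ is itself a point that is not attracted to $b$ (its $b$-attraction path is the single point $d$). For the converse I would argue that along any $b$-attraction path the Euclidean distance to $b$ is strictly decreasing, so the path, lying in the compact set $P$, must terminate either at $b$ or at a local minimum of the distance, that is, at a dead point; if $b$ is the only dead point, then every path ends at $b$.

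Next I would classify dead points. A point in the interior of $P$ other than $b$ is never a dead point, so every dead point lies on $\bd P$; moreover, at a dead point $d$ the ray from $d$ toward $b$ must leave $P$, so $b\notin H_e$ for the edge $e$ carrying $d$ (consistent with Observation~\ref{obs:convex_edge}). I would then separate three cases. First, a reflex vertex is never a dead point: the interior directions there fill a cone of angular width greater than $180^\circ$, while the directions that strictly decrease the distance to $b$ form an open half-plane of width $180^\circ$, and two arcs of total length exceeding $360^\circ$ must overlap, so some interior direction always decreases the distance. Second, a point $d$ in the relative interior of an edge $e$ is a dead point precisely when $b\notin H_e$ and $d$ is the foot of the perpendicular from $b$ onto the line through $e$ (so sliding along $e$ in either direction increases the distance). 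Third, a convex vertex $w$ is a dead point exactly when its (width $90^\circ$) interior cone avoids the decreasing half-plane of $b$. This reduces the lemma to comparing two regions of admissible beacon positions.

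It then remains to prove that, for $b\in P$, a dead point of $b$ exists if and only if $b\in\overline{C_v}$ for some reflex vertex $v\in R$. For the ``only if'' direction I would take a dead point $d$ and walk from $b$ toward $d$ along the perpendicular segment, which starts at $b$ inside $P$ and first crosses $\bd P$; tracking this crossing together with the incident edges, I would locate a reflex vertex $v$ whose two perpendicular half-planes $N_1,N_2$ both exclude $b$, i.e.\ $b\in\overline{C_v}$. For the ``if'' direction, given $b\in\overline{C_v}$ with incident edges $e_1,e_2$, the feet of the perpendiculars from $b$ onto the lines through $e_1$ and $e_2$ land on the rays emanating from $v$ along these edges; when such a foot falls in the relative interior of its edge I obtain an explicit dead point, hence an unattracted point, and I would treat the remaining ``far'' configuration, where $b$ lies beyond both incident edges, by passing to another reflex vertex whose exterior wedge also contains $b$ but for which a foot does land on its edge. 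I expect this correspondence to be the main obstacle: matching the union of the perpendicular-foot and convex-vertex dead-point regions exactly with the union of the reflex wedges $\overline{C_v}$, and in particular ruling out, or rerouting, the configurations in which the attraction path bends several times before getting stuck and the responsible reflex vertex is not incident to the edge carrying the dead point.
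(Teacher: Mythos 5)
First, a point of comparison: the paper does not prove this lemma at all --- it is imported verbatim from Biro's thesis (Theorem 5.2.8), so there is no in-paper argument to measure yours against; what you have written is a from-scratch reproof attempt. Judged on its own terms, your framing is right. The De Morgan identity is indeed immediate. The reduction of $b\in\Kernel(P)$ to ``$b$ is the only dead point'' is sound, modulo the termination of the attraction path, which you justify by compactness; a strictly decreasing function on a compact set need not attain a terminal value, so you actually need the standard finiteness argument (the path slides along each edge at most once, hence has $O(n)$ links). Your local classification of dead points is correct: none in the interior, none at reflex vertices (the two-arcs-exceeding-$360^\circ$ argument works), perpendicular feet on edges $e$ with $b\notin H_e$, and convex vertices whose interior cone misses the open half-plane of decreasing directions.

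The genuine gap is exactly where you flag it yourself: the global equivalence ``a dead point $d\neq b$ exists if and only if $b\in\overline{C_v}$ for some reflex vertex $v$'' \emph{is} the lemma, and for both implications you only state intentions. In the ``only if'' direction, the edge carrying the dead point need not be incident to any reflex vertex (a dead point can sit in the middle of a convex-to-convex edge deep inside a pocket), so ``walking from $b$ toward $d$ along the perpendicular and tracking the crossing'' does not by itself exhibit a reflex vertex $v$ with $b\in\overline{N_1}\cap\overline{N_2}$; one needs a genuine boundary-following or chord/pocket argument to locate the responsible $v$, and you have not given it. In the ``if'' direction, when $b\in\overline{C_v}$ but both perpendicular feet miss the relative interiors of $e_1$ and $e_2$, or $b$ lies on the interior side of the line through $e_1$ or $e_2$, the dead point lives elsewhere in the polygon; ``passing to another reflex vertex whose wedge also contains $b$'' is precisely the induction or case analysis that constitutes the proof, and it is absent. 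As written, the proposal is a correct and well-organized plan whose two central implications remain unproven.
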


Note that Lemma~\ref{lem:kernel_biro} holds for any simple polygon $P$.
We now assume that $P$ is a simple rectilinear polygon.
Then, for any reflex vertex $r\in R$, the set $C_v$ forms a closed cone with aperture angle $270^\circ$
whose boundary consists of two rays following the two edges incident to $v$.
Let $R_1 \subseteq R$ be the set of reflex vertices incident to a reflex edge,
and let $R_2 := R \setminus R_1$.
So a vertex in $R_1$ is adjacent to at least one reflex vertex that also belongs to $R_1$,
and a vertex in $R_2$ is always adjacent to two convex vertices.
We then observe the following.
\begin{lemma}
\label{lem:kernel_lem}
For any simple rectilinear polygon $P$,
\[ \left( \bigcap_{v\in R_1}C_v \right) \cap P  \subseteq  \left( \bigcap_{v\in R_2}C_v \right) \cap P.\]
\end{lemma}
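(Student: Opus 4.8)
The plan is to prove the equivalent inclusion after taking complements inside $P$. By Lemma~\ref{lem:kernel_biro}, each $C_v$ is a $270^\circ$ cone, so its complement $\overline{C_v}$ is the open $90^\circ$ wedge bounded by the two rays extending the edges incident to $v$ away from $P$ — exactly the ``reflex quadrant'' of $v$. Taking complements relative to $P$, the claimed inclusion $\bigl(\bigcap_{v\in R_1}C_v\bigr)\cap P\subseteq\bigl(\bigcap_{v\in R_2}C_v\bigr)\cap P$ is equivalent to $\bigcup_{v\in R_2}(\overline{C_v}\cap P)\subseteq\bigcup_{v\in R_1}(\overline{C_v}\cap P)$. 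I would in fact aim at the stronger statement that the left-hand side is empty, i.e. that $P\cap\overline{C_w}=\emptyset$ for every $w\in R_2$; this immediately gives $P\subseteq\bigcap_{v\in R_2}C_v$, hence the right-hand intersection equals all of $P$ and the lemma follows.

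To set up, I fix $w\in R_2$ and place coordinates so that $w$ is the origin, its two incident edges run along the positive $x$- and $y$-axes to its neighbors $a=(\alpha,0)$ and $b=(0,\beta)$, and $\overline{C_w}=\{(x,y):x>0,\ y>0\}$ is the open first quadrant. Since $w\in R_2$, both $a$ and $b$ are convex; locally at $a$ the interior of $P$ occupies only the quadrant $\{x<\alpha,\ y<0\}$, so the boundary recedes downward just past $a$, and symmetrically it recedes leftward just past $b$. In particular $P$ lies strictly below the edge $wa$ and strictly to the left of the edge $wb$, so $\overline{C_w}$ is sealed off from the interior of $P$ in a neighborhood of $w$.

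The core step is to show this local sealing is global. Suppose for contradiction that some $p\in P\cap\overline{C_w}$ exists. Because $P$ is connected and its interior near $w$ lies in $\{x<0\}\cup\{y<0\}$, any path in $P$ from a near-$w$ interior point to $p$ must cross the positive $x$-axis or the positive $y$-axis; and since the interior is strictly on one side along $wa$ and $wb$, the first such crossing must occur beyond $a$ (respectively beyond $b$). I would then walk along $\bd P$ and track the first place where $P$ appears above the positive $x$-axis: this outside-to-inside transition is witnessed by a vertical left edge of $P$, whose lower endpoint — together with the fact that $wa$ already forces $P$ below the axis up to $a$ — is a reflex vertex lying on the reentrant chain issuing from $w$. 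The upshot is that the edge of $\bd P$ leaving $w$ along $+x$ cannot reach a convex vertex before this reflex vertex, so $w$ has a reflex neighbor and hence a reflex edge, i.e. $w\in R_1$, contradicting $w\in R_2$.

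The main obstacle is precisely this last bookkeeping: turning the intuition ``the boundary cannot re-enter the quadrant without interrupting the edge at $w$'' into a rigorous proof that the first re-entry produces a reflex vertex adjacent to $w$ (and thus a reflex edge), rather than an isolated reflex vertex elsewhere. I expect to handle it by a case analysis on the two orientations (hook crossing the $x$-axis versus the $y$-axis), using the rectilinear structure and the general position assumption that no cut joins two reflex vertices to exclude degenerate colinear configurations. Observation~\ref{obs:monotone} provides a useful sanity check: when $P$ has no reflex edge it is $xy$-monotone, so $R_1=\emptyset$ and every reflex quadrant is disjoint from $P$, matching the emptiness claim.
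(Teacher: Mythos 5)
Your reduction by complementation is correct, but the stronger statement you then aim for --- that $P \cap \overline{C_w} = \emptyset$ for every $w \in R_2$ --- is false, so the strategy cannot be completed; the difficulty you flag at the end is not bookkeeping but a genuine obstruction. Concretely, take the square $[0,10]^2$ and remove the L-shaped notch $\bigl([2,8]\times[4,10]\bigr)\cup\bigl([6,8]\times[2,10]\bigr)$. The result is a simple rectilinear polygon with vertices $(0,0),(10,0),(10,10),(8,10),(8,2),(6,2),(6,4),(2,4),(2,10),(0,10)$ and reflex vertices $(8,2)$, $(6,2)$, $(2,4)$. The vertex $w=(2,4)$ has two convex neighbours $(6,4)$ and $(2,10)$, so $w\in R_2$, yet $\overline{C_w}=\{x>2,\ y>4\}$ contains the point $(9,5)\in P$. (This example also satisfies the paper's general position assumption, so that cannot rescue the argument.) Your ``core step'' fails exactly here: the boundary re-enters the quadrant of $w$ along the edge on the line $x=8$, and the reflex vertex witnessing that re-entry is $(8,2)$, which is not adjacent to $w$; nothing forces $w$ to acquire a reflex neighbour, and the edge leaving $w$ along the positive $x$-direction does reach a convex vertex, namely $(6,4)$.

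What saves the lemma in this example --- and what the paper's proof actually exploits --- is the hypothesis $p\in\bigcap_{v\in R_1}C_v$, which your plan discards. Here the point $(9,5)$ lies outside $C_{(6,2)}\cap C_{(8,2)}=H_e=\{y\leq 2\}$, where $e$ is the reflex edge joining $(6,2)$ and $(8,2)$, so it never belongs to the left-hand side of the claimed inclusion. The paper's argument has this shape in general: assuming some $p\in P$ lies in $\overline{C_v}\cap\bigcap_{v'\in R_1}C_{v'}$ for some $v\in R_2$, it follows $\bd P$ from a neighbour of $v$ to the first re-crossing of $\bd\overline{C_v}$ and extracts from that boundary arc a reflex edge $e$ with $H_e\cap\overline{C_v}=\emptyset$, contradicting $p\in C_{v_1}\cap C_{v_2}=H_e$ for the endpoints $v_1,v_2\in R_1$ of $e$. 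Any correct proof must use the $R_1$ constraints in some such global way; a purely local analysis around $w$ cannot work.
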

\begin{proof}
For a contradiction, suppose that there exists a point $p \in P$ that is included in
$\bigcap_{v\in R_1}C_v$ but avoids $\bigcap_{v\in R_2}C_v$.
Then, there must exist a reflex vertex $v\in R_2$ such that $p \notin C_v$,
or equivalently, $p \in \overline{C_v}$.
That is, $\bigcap_{v'\in R_1}C_{v'}$ and $\overline{C_v}$ have a nonempty intersection.
Let $u$ and $w$ be the two vertices adjacent to $v$ such that $u$, $v$, and $w$ appear
on $\bd P$ in counterclockwise order.
Note that both $u$ and $w$ are convex since $v\in R_2$.

\begin{figure}[tb]
\centering
\includegraphics[width=0.7\textwidth]{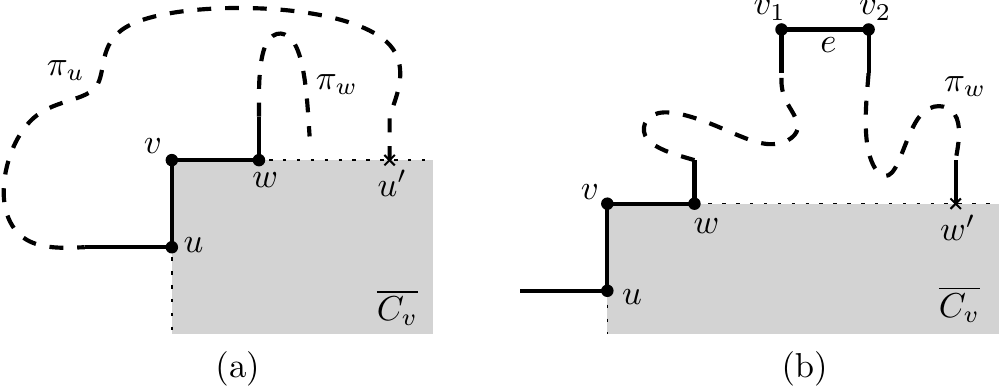}
\caption{Proof of Lemma~\ref{lem:kernel_lem}.}
\label{fig:kernel}
\end{figure}

Since $\overline{C_v} \cap P \neq \emptyset$ and $\overline{C_v}$ is an open set,
the boundary $\bd P$ of $P$ crosses $\bd \overline{C_v}$ at some points other than the two edges $uv$ and $vw$.
Let $w'$ be the first point in $\bd P \cap \bd \overline{C_v}$
that we encounter when traveling along $\bd P$ counterclockwise, starting at $w$.
Analogously, let $u'$ be the first point in $\bd P \cap \bd \overline{C_v}$
that we encounter when traveling along $\bd P$ clockwise, starting at $u$.
Let $\pi_w\subset \bd P$ and $\pi_u\subset \bd P$ be the paths described above
from $w$ to $w'$ and from $u$ to $u'$, respectively.
As  $\pi_w$ and $\pi_u$ are subpaths of $\bd P$, they do not intersect,
and we have $w' \notin uv$ and $u' \notin vw$.

The boundary $\bd \overline{C_v}$ of $\overline{C_v}$ consists of two rays $\rho_w$ and $\rho_u$,
starting from $v$ towards $w$ and $u$, respectively.
We claim that either $w'$ lies on $\rho_w$ or $u'$ lies on $\rho_u$.
(See \figurename~\ref{fig:kernel}a.) 
Indeed, suppose that $u' \notin \rho_u$.
Then $\pi_w$ should be contained in the region bounded by the simple closed curve
$\pi_u \cup u'v \cup vu$, since $\pi_w$ does not intersect $\overline{C_v} \cup uv$.
This implies that $w'$ must lie on $wu' \subset \rho_w$.
Hence, our claim is true.

Without any loss of generality, we assume that $w'\in \rho_w$,
the edge $vw$ is horizontal, and the interior of $P$ lies locally above $vw$,
as shown in \figurename~\ref{fig:kernel}b.
Then, the path $\pi_w$ must contain at least one top reflex edge $e$ lying above
the line through $w$ and $w'$,
since $w$ and $w'$ have the same $y$-coordinate and $\pi_w$ avoids $\overline{C_v}$.
Let $v_1$ and $v_2$ be the two endpoints of $e$,
so $v_1, v_2 \in R_1$.
Then $C_{v_1} \cap C_{v_2}$ is the half-plane $H_e$ supporting $e$.
Since $e$ is a top reflex edge, $H_e \cap \overline{C_v} = \emptyset$.
This is a contradiction to our assumption that $\bigcap_{v'\in R_1} C_{v'}$ intersects $\overline{C_v}$.
\end{proof}

Let $\RKernel(P)$ be the intersection of the half-planes $H_e$ supporting $e$
over all reflex edges $e$ of $P$. We conclude the following.
\begin{theorem} \label{thm:kernel}
 Let $P$ be a simple rectilinear polygon.
 A point $p \in P$ lies in its beacon kernel $\Kernel(P)$ if and only if
 $p \in H_e$ for any reflex edge $e$ of $P$.
 Therefore, it always holds that $\Kernel(P) = \RKernel(P) \cap P$, and
 the kernel $\Kernel(P)$ can be computed in linear time.
\end{theorem}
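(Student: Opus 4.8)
The plan is to chain Biro's formula (Lemma~\ref{lem:kernel_biro}) with the two structural lemmas so as to collapse the intersection of all the cones $C_v$ down to an intersection of half-planes, one per reflex edge. Starting from $\Kernel(P) = \left(\bigcap_{v\in R} C_v\right)\cap P$ and the partition $R = R_1 \cup R_2$, we have $\bigcap_{v\in R}C_v = \left(\bigcap_{v\in R_1}C_v\right)\cap\left(\bigcap_{v\in R_2}C_v\right)$. Lemma~\ref{lem:kernel_lem} says $\left(\bigcap_{v\in R_1}C_v\right)\cap P \subseteq \left(\bigcap_{v\in R_2}C_v\right)\cap P$, so the $R_2$-factor is redundant inside $P$ and
\[ \Kernel(P) = \left(\bigcap_{v\in R_1}C_v\right)\cap P. \]
The whole problem thus reduces to understanding $\bigcap_{v\in R_1}C_v$.

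Next I would identify this cone-intersection with $\RKernel(P)$. The key local fact, already recorded inside the proof of Lemma~\ref{lem:kernel_lem}, is that if $e$ is a reflex edge with endpoints $v_1,v_2$, then $C_{v_1}\cap C_{v_2}=H_e$. I would verify this identity once explicitly (for a top reflex edge, say, the other orientations following by symmetry), writing each $C_{v_i}$ as the union of the two axis-parallel half-planes bounded by the lines through $v_i$ orthogonal to its incident edges. Since every vertex of $R_1$ is an endpoint of a reflex edge, and every reflex edge contributes both of its (reflex) endpoints to $R_1$, the two inclusions then follow at once: for any reflex edge $e$, $\bigcap_{v\in R_1}C_v \subseteq C_{v_1}\cap C_{v_2}=H_e$, giving $\bigcap_{v\in R_1}C_v \subseteq \RKernel(P)$; conversely, for any $v\in R_1$ incident to a reflex edge $e_0$, $\RKernel(P)\subseteq H_{e_0}=C_{v_1}\cap C_{v_2}\subseteq C_v$, giving $\RKernel(P)\subseteq \bigcap_{v\in R_1}C_v$. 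Hence $\bigcap_{v\in R_1}C_v = \RKernel(P)$, and combining this with the displayed reduction yields both the characterization ($p\in\Kernel(P)$ iff $p\in P$ and $p\in H_e$ for every reflex edge $e$) and the formula $\Kernel(P)=\RKernel(P)\cap P$.

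For the algorithmic claim I would exploit that each $H_e$ is an \emph{axis-parallel} half-plane: a top or bottom reflex edge gives a constraint $y\ge c$ or $y\le c$, and a left or right reflex edge one of the form $x\ge c$ or $x\le c$. Therefore $\RKernel(P)$ is an axis-aligned (possibly unbounded, possibly empty) rectangle, whose four defining coordinates are obtained by taking the appropriate minimum or maximum over the reflex edges in a single $O(n)$ scan of $\bd P$. Finally $\Kernel(P)=\RKernel(P)\cap P$ is the intersection of this convex rectangle with the simple polygon $P$, a region of complexity $O(n)$ that can be computed in $O(n)$ time by the standard intersection of a simple polygon with a convex window.

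The step I expect to require the most care is the orientation bookkeeping in the identity $C_{v_1}\cap C_{v_2}=H_e$: one must check that for a reflex edge the two incident non-$e$ edges point to the same side, so that the complementary open quadrants $\Plane\setminus C_{v_1}$ and $\Plane\setminus C_{v_2}$ (one on each end of $e$) have union equal to the whole open half-plane $\Plane\setminus H_e$, whence $C_{v_1}\cap C_{v_2}=H_e$ exactly rather than a proper subset. Everything else is either a direct set-theoretic consequence of the two lemmas or a routine linear-time sweep.
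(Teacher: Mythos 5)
Your proposal is correct and follows essentially the same route as the paper: reduce $\bigcap_{v\in R}C_v$ to $\bigcap_{v\in R_1}C_v$ via Lemma~\ref{lem:kernel_lem}, identify the latter with $\RKernel(P)$ through the identity $C_{v_1}\cap C_{v_2}=H_e$ for each reflex edge $e$, and compute the resulting axis-parallel rectangle and its intersection with $P$ in linear time. The only difference is that you spell out the two inclusions and the orientation check for $C_{v_1}\cap C_{v_2}=H_e$ explicitly, which the paper leaves implicit.
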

\begin{proof}
Recall that $C_v$ for any $v\in R_1$ forms a cone with apex $v$ and aperture angle $270^\circ$.
Since any $v\in R_1$ is adjacent to another reflex vertex $w\in R_1$
the intersection $C_v \cap C_w$ forms exactly the half-plane $H_e$ supporting the reflex edge $e$
with endpoints $v$ and $w$.
It implies that $\RKernel(P) = \bigcap_{v\in R_1} C_v$.
So by Lemma~\ref{lem:kernel_lem}, we have
\[ \Kernel(P) = \bigcap_{v \in R} C_v \cap P = \bigcap_{v \in R_1} C_v \cap P = \RKernel(P) \cap P.\]

The set $\RKernel(P)$ is an intersection of axis-parallel halfplanes,
so it is a (possibly unbounded) axis-parallel rectangle.
In order to compute the kernel $\Kernel(P)$,
we identify the extreme reflex edge in each of the four directions to compute $\RKernel(P)$,
and then intersect it with $P$.
This can be done in linear time.
\end{proof}

\section{Beacon-Based Coverage} \label{sec:coverage}

In this section, we study the beacon-based coverage problem for rectilinear polygons.
A set of beacons in $P$ is said to \emph{cover} or \emph{guard} $P$
if and only if every point $p\in P$ can be attracted by at least one of them.

Our main result is the following. 
\begin{theorem} \label{thm:covering}
 Let $P$ be a simple rectilinear polygon $P$ with $n \geq 6$ vertices and
 $r \geq 1$ reflex vertices.
 Then $\lfloor \frac{n}{6}\rfloor = \lceil \frac{r}{3} \rceil$ beacons are sufficient
 to guard $P$, and sometimes necessary.
 Moreover, all these beacons can be placed at reflex vertices of $P$.
\end{theorem}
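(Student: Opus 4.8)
The two quantities in the statement agree: since $n=2r+4$ we have $\lfloor n/6\rfloor=\lfloor (r+2)/3\rfloor=\lceil r/3\rceil$, so it suffices to prove that $\lceil r/3\rceil$ beacons placed at reflex vertices always guard $P$ (sufficiency), and that some family of polygons requires this many (necessity). I would treat the two directions independently: an explicit lower-bound construction for necessity, and an induction on $r$ for sufficiency, using Theorem~\ref{thm:kernel} to certify when a subpolygon is coverable by a single beacon, and Observation~\ref{obs:convex_edge} to glue together the coverages of subpolygons.

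For necessity, the plan is to build, for every $k\ge 1$, a polygon with exactly $r=3k$ reflex vertices (hence $n=6k+4$) that requires $k$ beacons. I would string $k$ congruent hook-shaped \emph{pockets} along a common corridor, each pocket spending exactly three reflex vertices, and place one \emph{witness} $p_i$ deep inside pocket $i$. The heart of the argument is to show that the set of beacon positions that attract $p_i$ is confined to pocket $i$. This I would verify directly from the attraction dynamics: a beacon outside pocket $i$ lies outside the supporting half-plane $H_e$ of the reflex edge $e$ bounding the pocket, so the witness $p_i\in H_e$ is dragged onto $e$ and stalls at a reflex corner where no direction decreases its distance to the beacon, i.e.\ a dead point. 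Since the pockets are pairwise disjoint, these attracting regions are disjoint, so no single beacon attracts two witnesses and at least $k=\lceil r/3\rceil$ beacons are needed. The residual cases $r\not\equiv 0\pmod 3$ would be handled by appending a partial pocket with one or two reflex vertices, which is where a little care is needed to keep the count exactly $\lceil r/3\rceil$.

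For sufficiency I would induct on $r$. When $r\le 3$, so $\lceil r/3\rceil=1$, I must show a single beacon at a reflex vertex covers $P$: by Theorem~\ref{thm:kernel} this reduces to checking that the supporting half-planes of the at most two reflex edges meet $P$ in a region containing a reflex vertex, a finite case analysis over the arrangements of at most three reflex vertices. For $r\ge 4$ the goal is to find a cut $c$ at a reflex vertex $v$ that splits off a subpolygon $Q$ which (i) has a nonempty beacon kernel containing a reflex vertex of $P$, so that one beacon placed there guards $Q$, and (ii) neutralizes at least three reflex vertices of $P$ --- namely $v$ itself, which the general-position assumption turns convex or flat, together with at least two further reflex vertices lying in $Q$. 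Then the complementary polygon $P'$ has at most $r-3$ reflex vertices, and by induction $\lceil (r-3)/3\rceil=\lceil r/3\rceil-1$ beacons guard it; adding the beacon in $Q$ gives the bound. Since a beacon in a staircase-like ear covers arbitrarily many reflex vertices while the pockets of the lower bound force exactly three, the worst-case amortization of three reflex vertices per beacon is what makes the two bounds match.

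Two points make the induction sound. First, the coverage of $Q$ computed in isolation is valid inside $P$: because $Q$ lies entirely in the half-plane $H_c$ supporting the cut edge $c$, Observation~\ref{obs:convex_edge} shows that no attraction path toward a beacon in $Q$ bends at $c$, so every $b$-attraction path inside $Q$ coincides with the one inside $P$. This is exactly what rules out the failure illustrated in Figure~\ref{fig:intro}(b), where a coverage within a subpolygon did not lift to the whole domain; the same reasoning makes the recursion on $P'$ independent. The crux, and the step I expect to be hardest, is establishing that a cut with properties (i) and (ii) always exists: one must locate a reflex vertex whose cut peels off an ear whose reflex edges are oriented consistently enough for its kernel to be nonempty (by Theorem~\ref{thm:kernel}), while guaranteeing that at least three reflex vertices are absorbed and that some reflex vertex of $P$ lands in the ear's kernel so the beacon can sit there. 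I would attack this by scanning $\bd P$ for an extreme reflex feature (for instance the reflex vertex whose horizontal cut is topmost) and arguing from the rectilinear structure, together with Observation~\ref{obs:monotone}, that the ear it defines is free of conflicting reflex edges and hence kernel-nonempty. The detailed case analysis --- deciding which of $v$'s two cuts to use, and tracking which reflex vertices become convex or flat --- is where the bulk of the work lies.
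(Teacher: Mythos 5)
Your reduction $\lfloor n/6\rfloor=\lceil r/3\rceil$, your base case, and your gluing argument via Observation~\ref{obs:convex_edge} (beacons placed inside $Q\subseteq H_c$ cannot have attraction paths that bend at the cut $c$) are all sound and consistent with the paper. But both halves of your plan have a genuine gap. The decisive one is the lower bound. A ``pocket'' spending only three reflex vertices --- two at its mouth on the corridor and one internal bend, i.e.\ an L-shaped hook --- does \emph{not} confine the attracting region of a deep witness to the pocket. Place a beacon in the corridor directly above the pocket's leg: the witness in the foot of the L is pulled up, hits the foot's ceiling, slides along it toward the internal reflex corner (this strictly decreases its distance to the beacon), and from that corner has an unobstructed greedy run up the leg into the corridor and onto the beacon. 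So a single corridor beacon swallows the whole pocket, and in fact can swallow several adjacent pockets; your witnesses are not pairwise separating. To genuinely trap a witness you need at least two internal bends per pocket, i.e.\ at least four reflex vertices each, which only yields $\lceil r/4\rceil$. This is exactly why the paper does not use parallel pockets: its construction is a single \emph{chained} spiral whose arm lengths grow geometrically (ratio $\rho>2+(r/2+2)\epsilon$), and the $1{:}3$ ratio is extracted from a delicate slope comparison (Lemma~\ref{lem:spiral_sub}) showing that a greedily pushed beacon can be responsible for at most three consecutive turns. That quantitative step has no analogue in your argument and cannot be replaced by the qualitative ``outside $H_e$ implies dead point'' claim, which is false for a single bend.

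On the upper bound, your ear-peeling induction (peel one kernel-nonempty ear absorbing at least three reflex vertices, recurse on the rest) is a legitimately different skeleton from the paper's, which instead seeks a \emph{safe cut} splitting $P$ into two parts satisfying $\lceil r^+/3\rceil+\lceil r^-/3\rceil=\lceil r/3\rceil$ and recurses on both sides. The entire difficulty, however, sits in the existence statement you defer to ``a scan for an extreme reflex feature.'' The paper's Lemmas~\ref{lem:safe_cut} and~\ref{lem:1-cut} and the long case analysis in Lemma~\ref{lem:covering} show that the obvious cuts really do fail for some polygons (when $r\not\equiv 1 \pmod 3$), and the resolutions used there are not of your form: several cases use two recursive calls balanced by ceiling arithmetic rather than a single kernel-coverable ear, one case uses two \emph{overlapping} subpolygons, and others use cuts through reflex vertices whose new edge is not convex on both sides, with correctness rescued only by the inductive invariant that all beacons sit at reflex vertices. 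So it is not established --- and is at least doubtful --- that every $P$ with $r\ge 4$ admits an ear that simultaneously has a nonempty kernel containing a reflex vertex of $P$ and absorbs at least three reflex vertices. Until that existence lemma is proved (or the scheme is broadened to allow the paper's other cut types), the sufficiency half remains a program rather than a proof.
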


We now sketch the proof of Theorem~\ref{thm:covering}. The lower bound construction is 
a rectangular spiral $P_r$ consisting of a sequence of $r+1$ thin rectangles, as depicted in 
\figurename~\ref{fig:spirals}. The sequence of vertices $v_0v_1v_2 \dots v_r$, where $v_1\dots v_r$
are the reflex vertices of $P_r$,  form
a polyline called the {\it spine} of the spiral. The key idea is the following. Consider
the case $r=7$ (\figurename~\ref{fig:spirals}a). At first glance, it looks like the spiral
can be covered by two beacons $b_1$ and $b_2$ placed near $v_2$ and $v_6$, respectively.
However, at closer look, it appears that the small shaded triangular region on the bottom left 
corner is not covered. Hence, $P_7$ requires $3=\lceil \frac 7 3 \rceil$ beacons, as announced.
More generally, we can prove that  for a suitable choice of the edge lengths of the
spine of $P_r$, an optimal coverings for $P_r$ consists in placing a
beacon at every third rectangle of $P_r$, which yields the bound $\lceil \frac r 3 \rceil$.
The spine of $P_r$ is depicted in \figurename~\ref{fig:spirals}b and c, where the aspect
ratio of the rectangles is roughly $4+r/2$.

\begin{figure}[tb]
\centering
\includegraphics[width=\textwidth]{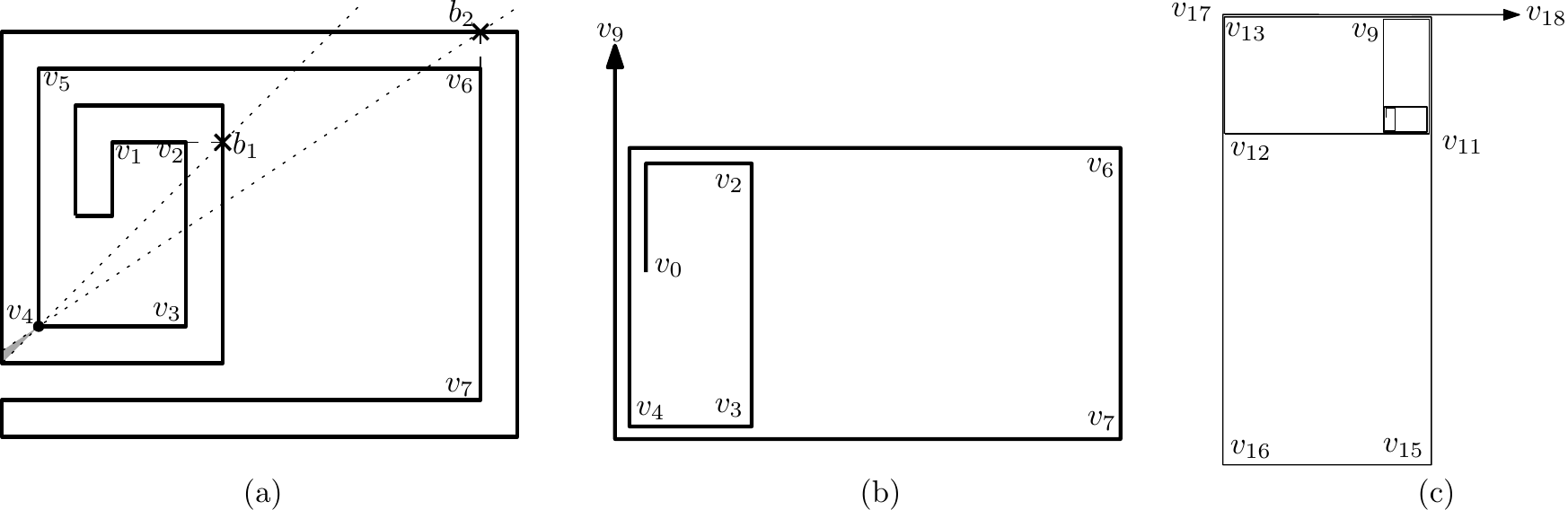}
\caption{Lower bound construction:
(a) Placing two beacons $b_1$ and $b_2$ in $P_7$ near $v_2$ and $v_6$
 is not enough to cover the shaded region near the reflex vertex $v_4$.
 (b)(c) The spine of our construction $P_r$ for $r=9$ and for $r=18$. }
\label{fig:spirals}
\end{figure}

The construction for the upper bound in Theorem~\ref{thm:covering} is more involved. 
We first prove that for any polygon with at most 3 reflex vertices, one beacon placed
at a suitable reflex vertex is sufficient. 
For a larger number $r \geq 4$ of reflex vertices, we proceed by induction.
So we partition $P$ using a cut, and we handle each side recursively.
As mentioned in Section~\ref{sec:pre}, the difficulty is that in some cases, the
union of the two guarding sets of the subpolygons do not cover $P$.
So we will first try to perform a {\it safe cut} $c$, that is, a cut $c$ which 
is not incident to any reflex vertex,  such that there is at least one reflex
vertex on each side, and such that 
$\lceil r(P_c^-)/3 \rceil + \lceil r(P_c^+)/3 \rceil =\lceil r/3 \rceil$.
(See \figurename~\ref{fig:covering_intro}a.)
If such a cut exists, then we can recurse on both side. By Observation~\ref{obs:convex_edge},
the union of the guarding sets of the two subpolygons guards $P$.
Unfortunately, some polygons do not admit any safe cut. In this case, we show
by a careful case analysis that we can always find a suitable cut. 
(See the example in \figurename~\ref{fig:covering_intro}b.)

\begin{figure}[tb]
\centering
\includegraphics[width=.7\textwidth]{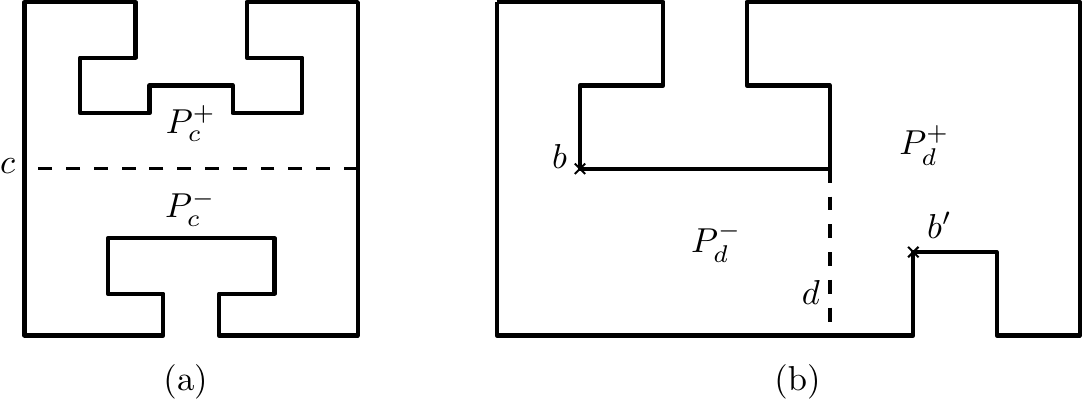}
\caption{Upper bound construction. 
(a) A safe cut $c$ of a polygon with $r=10$ reflex vertices.
(b) This polygon admits no safe cut. We cut along  $d$,
and the polygon is guarded by any two beacons $b$ and $b'$ placed at
reflex vertices of $P_d^-$ and $P_d^+$, respectively.
\label{fig:covering_intro}}
\end{figure}

\subsection{Proof of the lower bound for coverage}\label{subsec:lower_bound_coverage}

In this section, we prove the lower bound in Theorem~\ref{thm:covering}. Our construction is a 
spiral-like rectilinear polygon $P_r$ that cannot be guarded by less than  $\lceil \frac{r}{3} \rceil$
beacons. (See \figurename~\ref{fig:spirals}.) More precisely, a rectilinear polygon is called a \emph{spiral} if all its reflex vertices are consecutive along its boundary.

The \emph{spine} of a spiral $P$ with $r$ reflex vertices is
the portion of its boundary $\bd P$ connecting $r+2$ consecutive vertices $v_0, v_1, \ldots, v_{r+1}$
such that $v_1, \ldots, v_r$ are the reflex vertices of $P$. (See \figurename~\ref{fig:spirals3}a.)
Note that the two end vertices $v_0$ and $v_{r+1}$ of the spine of a spiral
are the only convex vertices that are adjacent to a reflex vertex.
The spine can also be specified by the sequence of edge lengths $(a_0, \ldots, a_r)$
such that $a_i$ is the length of the edge $v_iv_{i+1}$ for $i=0,\ldots, r$.

\begin{figure}[tb]
\centering
\includegraphics[width=\textwidth]{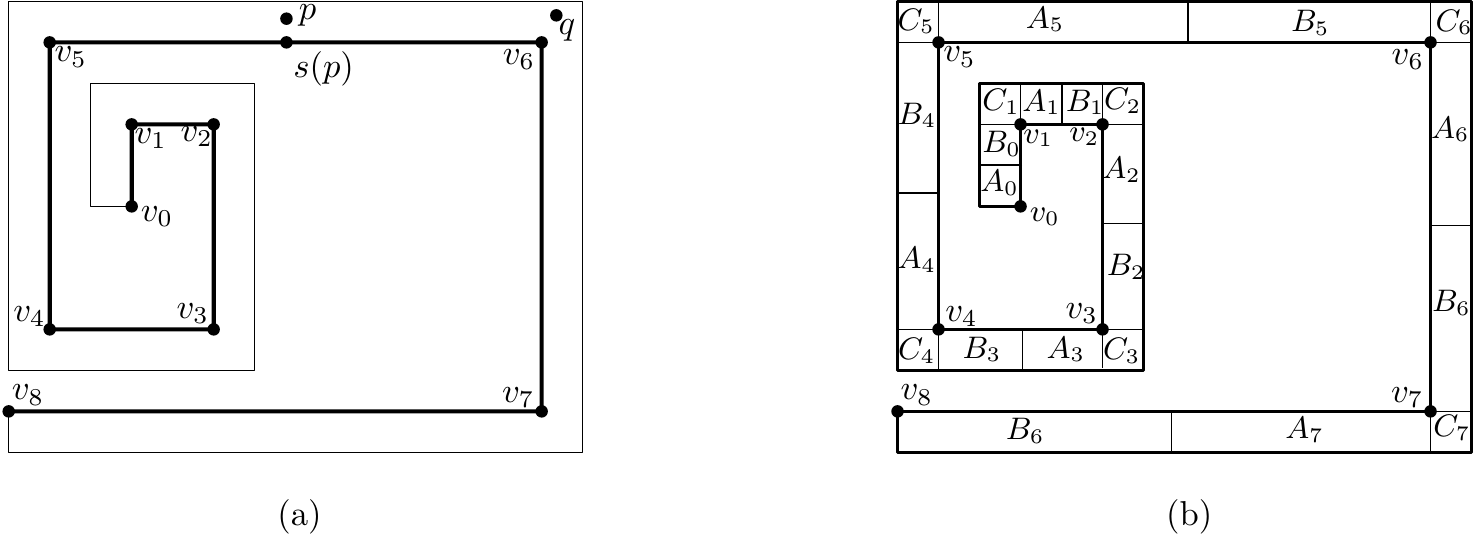}
\caption{(a) The spine (bold) of a spiral. The point $s(p)$ appears before
$v_6=s(q)$ along the spine, so $p \prec q$. (b) The partition of $P_7$ into rectangles $A_i,B_i,C_i$.
\label{fig:spirals3}}
\end{figure}

We define an order $\prec$ among points in any spiral $P$ as follows. Let $p,q$ be two points in $P$.
Let $s(p)$ and $s(q)$ denote the closest point to $p$ and $q$ on the spine, according to the geodesic
distance within $P$.  (See \figurename~\ref{fig:spirals3}a.) Then we say that $p$ precedes $q$, 
which we denote by $p \prec q$, if $s(p)$ precedes $s(q)$ along the spine, that is, $s(p)$ 
is on the portion of the spine between $v_0$ and $s(q)$.

We will use the following partition of a spiral $P$ with $r$ reflex vertices 
into $3r+2$ rectangular subpolygons. It is obtained
by applying the vertical and horizontal cuts at $v_i$ for each $i=1, \ldots, r$
and the cut at the midpoint of edge $v_iv_{i+1}$ for each $i=0, \ldots, r$.
We call these rectangles $A_0, B_0, C_1, A_1, B_1, \ldots, C_r, A_r, B_r$, ordered along
the spine. (See \figurename~\ref{fig:spirals3}b.)

For any integer $r\geq 0$, let $P_r$ be the spiral with $r$ reflex vertices
whose spine is determined by the following edge length sequence $(a_0, \ldots, a_r)$:
for any nonnegative integer $j$,
\[
  a_{2j} =  \begin{cases}
                    1 & (j = 0)\\
                    \rho^{2 \lfloor \frac{j}{3} \rfloor + 1} + j\epsilon & (j \geq 1)

                  \end{cases}
  \qquad\qquad
  a_{2j+1} =  \begin{cases}
                    1 + j\epsilon & (j \leq 1)\\
                    \rho^{2 \lfloor \frac{j+1}{3} \rfloor } + j\epsilon & (j \geq 2)
                  \end{cases},
\]
where $\epsilon > 0$ is a sufficiently small positive number
and $\rho > 2 + (r / 2 + 2)\epsilon$ is a constant.
(See \figurename~\ref{fig:spirals}.)

Therefore, the rectangles $A_i,B_i,C_i$ corresponding to $P_r$ are as follows, for any $i$.
Rectangle  $A_i$ and $B_i$ have side lengths $a_i/2$ and $w_i < \epsilon$.
Rectangle $C_i$ has side lengths $w_{i-1}$ and $w_i$, both of which are strictly less than $\epsilon$.

Let $k=k(r)$ denote the smallest possible number of beacons that can guard $P_r$. We will
say that a sequence of beacons $b_1,\dots, b_k$ is a {\it greedy placement} if 
$s(b_1) \prec \dots \prec s(b_k)$, and the
sequence $s(b_1),\dots,s(b_k)$ is maximum in lexicographical order. 
So intuitively, we obtain the greedy placement
by pushing the beacons as far as possible from the origin $v_0$ of the spiral, and giving
priority to the earliest beacons in the sequence.
Clearly, $b_1$ must be placed in $C_2$.
We then observe the following for $b_2, \ldots, b_{k-1}$.
\begin{lemma} \label{lem:spiral_sub}
 For $2\leq i\leq k-1$, the $i$-th beacon $b_i$ in a greedy placement for $P_r$
 is  in $A_{3i-1}$.
\end{lemma}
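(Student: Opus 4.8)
The plan is to prove Lemma~\ref{lem:spiral_sub} by induction on $i$, where the central claim is that in a greedy placement, the first $i$ beacons $b_1, \ldots, b_i$ together guard exactly the portion of the spiral up to and including $A_{3i-1}$, and no farther. The key geometric fact I would establish first is a \emph{reach lemma}: a single beacon placed in rectangle $A_j$ (or near reflex vertex $v_j$) can attract all points in the preceding three rectangles but cannot attract anything beyond its own rectangle by more than a controlled amount. The aspect ratio $\rho \approx 4 + r/2$ is engineered precisely so that, as a point is pulled toward a beacon, its greedy (distance-decreasing) path cannot turn the corner at the next reflex vertex; it gets stuck at a dead point. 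The crux will be quantifying this: I would show that when a beacon sits in $A_{3i-1}$, the attraction path of a point just past the reflex vertex $v_{3i-1}$ (i.e.\ in $C_{3i}$ or beyond) fails because the Euclidean distance would have to increase to round that corner, and the choice of $\rho$ relative to the small widths $w_i < \epsilon$ guarantees this failure.

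The inductive step would proceed as follows. Assume $b_1, \ldots, b_{i-1}$ form the start of a greedy placement, with $b_{i-1} \in A_{3(i-1)-1} = A_{3i-4}$, and that these beacons guard the spiral up to and including $A_{3i-4}$ but leave everything strictly beyond uncovered. I must determine where greedy placement puts $b_i$. First I would argue, using Observation~\ref{obs:convex_edge} and the thinness of the rectangles, that $b_i$ cannot attract any point in the uncovered region that lies beyond $A_{3i-1}$ if $b_i$ is placed in or before $A_{3i-1}$ itself; conversely, placing $b_i$ in $A_{3i-1}$ does cover all of $B_{3i-4}, C_{3i-3}, A_{3i-3}, B_{3i-3}, C_{3i-2}, A_{3i-2}, B_{3i-2}, C_{3i-1}, A_{3i-1}$, i.e.\ the next three full ``turns'' of the spiral. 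The maximality in lexicographic order then forces $s(b_i)$ as far along the spine as possible subject to still covering the gap left after $b_{i-1}$; I would show the unique position achieving this is inside $A_{3i-1}$, because pushing $b_i$ any farther (into $B_{3i-1}$ or $C_{3i}$) would leave an uncovered sliver behind it near $v_{3i-2}$, exactly analogous to the shaded triangle in \figurename~\ref{fig:spirals}a.

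The two halves of this argument — that $b_i$ reaches back far enough to close the gap, and that it cannot reach forward past $A_{3i-1}$ — are where the explicit edge-length formulas enter. For the backward reach I would track the attraction path from the extreme uncovered point (the dead-point sliver near $v_{3i-2}$) and verify the Euclidean distance to a beacon in $A_{3i-1}$ is monotonically decreasing along a valid greedy path; here the dominant length $\rho^{2\lfloor j/3\rfloor + \cdots}$ ensures the long edges are aligned so the pull succeeds. For the forward limitation I would exhibit a specific point in $C_{3i}$ whose $b_i$-attraction path, for any $b_i \preceq A_{3i-1}$, terminates at a dead point on a bottom or side edge. The main obstacle, and the step deserving the most care, is this forward blocking argument: it requires showing the \emph{dead point} genuinely exists for the worst-case greedy path, which means checking that at the reflex corner $v_{3i-1}$ there is no locally distance-decreasing direction into $C_{3i}$. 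This is a local condition at the corner comparing the direction to $b_i$ against the two edge directions, and it is precisely the inequality $\rho > 2 + (r/2 + 2)\epsilon$ that makes the corner ``too sharp'' to round; the perturbation terms $j\epsilon$ must be shown not to accumulate enough to spoil this across all $i \leq k-1$.
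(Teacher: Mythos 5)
Your overall strategy---induction on $i$, greedy maximality, an uncovered sliver at a reflex corner that constrains where the next beacon may go, with $\rho$ versus the $\epsilon$-perturbations supplying the decisive inequality---is the same as the paper's, and your remark that pushing $b_i$ into $B_{3i-1}$ would strand a sliver near $v_{3i-2}$ is exactly the paper's pinning argument. However, your coverage bookkeeping is wrong, and wrong in a way that makes one of your two key sub-claims unprovable. You assert that $b_1,\dots,b_i$ guard ``exactly the portion up to and including $A_{3i-1}$, and no farther,'' and that there is a point of $C_{3i}$ that no beacon at or before $A_{3i-1}$ can attract. But $A_{3i-1}$ and $C_{3i}$ lie in the same thin convex arm of the spiral (the arm over the spine edge $v_{3i-1}v_{3i}$ is $C_{3i-1}\cup A_{3i-1}\cup B_{3i-1}\cup C_{3i}$), so a beacon in $A_{3i-1}$ sees, and hence attracts, every point of $C_{3i}$; in fact, as the paper's proof records, it attracts all of $C_{3i}\cup A_{3i}\cup B_{3i}$ and the part of $C_{3i+1}$ below the line through $v_{3i+1}$ and the beacon. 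So the first genuinely uncovered points after placing $b_i$ are in $C_{3i+1}$, not $C_{3i}$, and your ``reach lemma'' (backward three rectangles, essentially nothing forward) does not describe the actual attraction region. Your invariant is also internally inconsistent with your own statement that the critical sliver is near $v_{3i-2}$: if the first $i-1$ beacons covered nothing beyond $A_{3i-4}$, the first uncovered point would be in $B_{3i-4}$, two corners earlier.

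The second gap is that you never identify the locus of admissible positions for $b_i$, which is what actually closes the induction. In the paper, the sliver left by $b_{i-1}\in A_{3i-4}$ is the part of $C_{3i-2}$ above the line $\ell_{i-1}$ through $v_{3i-2}$ and $b_{i-1}$, and covering that sliver forces $b_i$ to lie on $\ell_{i-1}$ itself. The induction hypothesis enters precisely here: because $b_{i-1}\in A_{3i-4}$, the slope of $\ell_{i-1}$ measured from $v_{3i-2}$ exceeds $1$, whereas reaching any point of $B_{3i-1}$ from $v_{3i-2}$ requires slope less than $1$; these two one-line estimates, using $\rho>2+(r/2+2)\epsilon$, are the entire quantitative content, and greedy maximality then places $b_i$ at the far end of $\ell_{i-1}\cap P$, inside $A_{3i-1}$. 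Your proposal instead gestures at a dead-point analysis at the corner without saying how the position of $b_{i-1}$ feeds into the estimate, so the inductive step as described cannot be completed. With the corrected accounting and the line $\ell_{i-1}$ made explicit, your argument becomes the paper's.
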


\begin{figure}[tb]
\centering
\includegraphics[width=\textwidth]{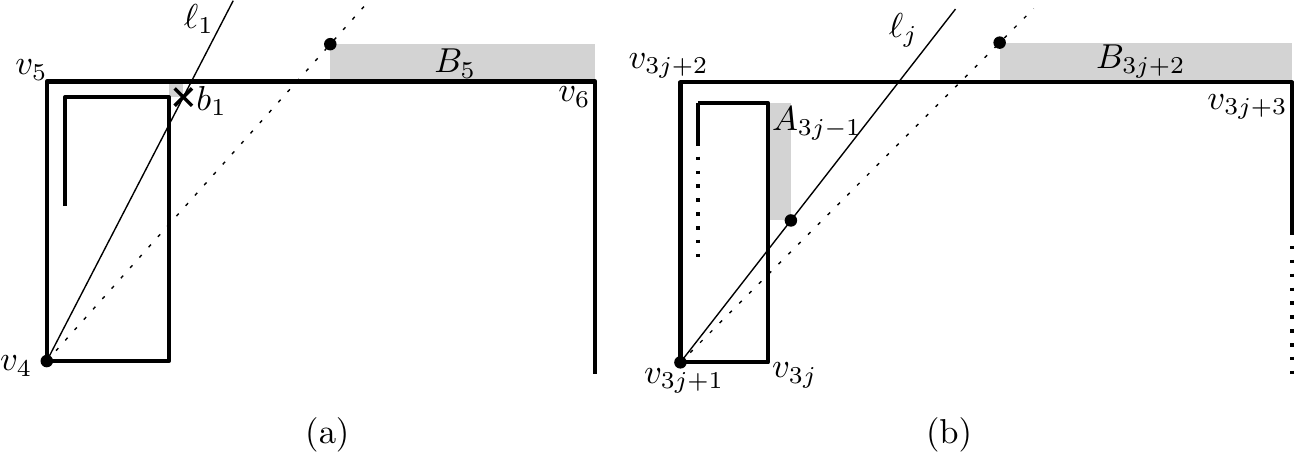}
\caption{Proof of Lemma~\ref{lem:spiral_sub}.}
\label{fig:spirals2}
\end{figure}

\begin{proof}
We prove the lemma by induction on $i$.
We first verify the lemma for $b_2$.
Without loss of generality, we assume that the edge $v_1v_2$ is a top reflex edge.
Let $\ell_1$ be the line through $v_4$ and $b_1$.
Observe that $b_1$ attracts all points in $C_3$, but not all of those in $C_4$.
More precisely, $b_1$ attracts those in $C_4$ below $\ell_1$ but miss those above $\ell_1$.
Hence, $b_2$ must be placed on $\ell_1$ to cover the points in $C_4$ above $\ell_1$.
For our purpose, we compare the slopes of $\ell_1$ and any line through $v_4$ and a point in $B_5$.
(See \figurename~\ref{fig:spirals2}a.)
Recall that $\rho > 2 + (r / 2 + 2)\epsilon$.
The slope of $\ell_1$ is at least
\[ \frac{\rho + \epsilon}{1+ 2\epsilon} > 1, \]
since $a_1 = 1$, $a_2 = \rho + \epsilon$, $a_3 = 1 + \epsilon$, and
the width $w_3$ of $C_3$ is at most $\epsilon$.
On the other hand, the slope of any line through $v_4$ and a point in $B_5$ is at most
\[ \frac{\rho + 3\epsilon}{\rho^2 + 2\epsilon} < 1,\]
since $a_4 = \rho + 2\epsilon$, $a_5 = \rho^2 + 2\epsilon$, and the height $w_5$ of $A_5$
is at most $\epsilon$.
This implies that $\ell_1$ cannot intersect $B_5$.
Thus, if $b_2\in B_5$, then $b_2$ fails to attract some points near $v_4$ and above $\ell_1$,
similarly as in \figurename~\ref{fig:spirals}a, so $b_2$ must lie in $A_5$.

For the inductive step, assume that $j \in \{2, \ldots, k-2\}$ and $b_j$  lies in $A_{3j-1}$.
If $r \leq 3j$, then $b_j$ attracts all points in $C_{3j} \cup A_{3j} \cup B_{3j}$,
that is, $b_j$ must be the last beacon in the greedy placement.
But this is not the case for our assumption that $j \leq k-2$.
We thus have $r \geq 3j + 1$.
Then, $C_{3j+1}$ cannot be completely covered by $b_j$, so the next beacon $b_{j+1}$
must cover $C_{3j+1}$ partially.
More precisely, $b_{j+1}$ must lie on the line $\ell_j$ through $v_{3j+1}$ and $b_j$.
Without loss of generality, we assume that the edge $v_{3j-1}v_{3j}$ is a right reflex edge.
Also, note that $r \geq 3j + 2$, since otherwise placing $b_{j+1}$ completes the greedy placement
and thus $k = j+1$, which is not the case.

Let $h(m) := \lfloor \frac{m}{2} \rfloor$, and
let $L$ be the positive number such that $a_{3j-2} = L + h(3j-2)$.
Recall that $a_{3j-2}$ is the length of edge $v_{3j-2}v_{3j-1}$.
We then have $a_{3j-1} = \rho L + h(3j-1)$, $a_{3j} = L + h(3j)$, $a_{3j+1} = \rho L + h(3j+1)$,
and $a_{3j+2} = \rho^2 L + h(3j+2)$.
See \figurename~\ref{fig:spirals2}(b).
Similarly to the above argument,
the slope of $\ell_js$ is at least
\[ \frac{a_{3j-1}/2}{a_{3j} + \epsilon} = \frac{(\rho L + \lfloor \frac{3j-1}{2} \rfloor)/2}
  {L+ (\lfloor \frac{3j}{2} \rfloor + 1)\epsilon} > 1,\]
since $b_{j-1}$ lies in $A_{3j-1}$ and
$\rho > 2+ (\frac{r}{2} + 2)\epsilon > 2 + (\frac{3j}{2} + 2)\epsilon / L$.
On the other hand, the slope of any line through $v_{3j+1}$ and any point in $B_{3j+2}$ is at most
\[ \frac{a_{3j+1} + \epsilon}{a_{3j+2}/2} < 1,\]
since $\rho > 2+ (\frac{r}{2} + 2)\epsilon >  2 + (\frac{3j+2}{2} + 2)\epsilon / \rho L$,
This implies that the next beacon $b_{j+1}$ also must be placed in $A_{3j+2}$. 
\end{proof}

The main result of this section follows:
\begin{lemma} \label{lem:coverage_lower_bound}
 The spiral $P_r$ defined above cannot be guarded by less than
 $\lceil \frac{r}{3} \rceil = \lfloor \frac{n}{6} \rfloor$ beacons,
 where $n$ denotes the number of vertices of $P_r$.
\end{lemma}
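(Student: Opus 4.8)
The plan is to deduce the bound directly from the structure of the greedy placement supplied by Lemma~\ref{lem:spiral_sub}. Writing $k = k(r)$ for the minimum number of beacons that guard $P_r$, the statement $k \geq \lceil \frac{r}{3}\rceil$ is equivalent to the inequality $r \leq 3k$, and since $n = 2r+4$ we have $\lceil \frac{r}{3}\rceil = \lfloor \frac{r+2}{3}\rfloor = \lfloor \frac{n}{6}\rfloor$, so proving $r \leq 3k$ settles the lemma. Intuitively, I want to show that each beacon advances the guarded portion of the spine by at most three of the triples $C_iA_iB_i$, so that $k$ beacons reach at most $B_{3k}$, and covering the final rectangle $B_r$ forces $r \leq 3k$.

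First I would fix a greedy placement $b_1, \ldots, b_k$ and record the reach of each beacon. We already know $b_1 \in C_2$, and the base case in the proof of Lemma~\ref{lem:spiral_sub} shows that $b_1$ attracts every point up to $B_3$ but misses the part of $C_4$ lying above the line $\ell_1$ through $v_4$ and $b_1$. By Lemma~\ref{lem:spiral_sub}, for $2 \leq i \leq k-1$ the beacon $b_i$ lies in $A_{3i-1}$, and the identical slope computation shows that $b_i$ attracts every point up to $B_{3i}$ while missing the part of $C_{3i+1}$ above the line $\ell_i$ through $v_{3i+1}$ and $b_i$. Consequently, $b_1, \ldots, b_{k-1}$ together guard $P_r$ only up to $B_{3(k-1)} = B_{3k-3}$, and the entire tail $C_{3k-2}, A_{3k-2}, B_{3k-2}, \ldots, C_r, A_r, B_r$ — including the portion of $C_{3k-2}$ above $\ell_{k-1}$ — must be guarded by the single remaining beacon $b_k$.

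The heart of the argument is to run the inductive step of Lemma~\ref{lem:spiral_sub} one last time, now for $b_k$. Suppose, for contradiction, that $r \geq 3k+1$; then every rectangle of index at most $3k+1$ is present in $P_r$. To cover the part of $C_{3k-2}$ missed by $b_{k-1}$, the beacon $b_k$ must lie on the line $\ell_{k-1}$, and the two slope inequalities driven by $\rho > 2 + (r/2+2)\epsilon$ force $b_k$ into $A_{3k-1}$ rather than $B_{3k-1}$, exactly as in the inductive step. But a beacon in $A_{3k-1}$ attracts points only up to $B_{3k}$ and misses the part of $C_{3k+1}$ above the corresponding line through $v_{3k+1}$; since $C_{3k+1}$ belongs to $P_r$, this leaves points unattracted by any of $b_1, \ldots, b_k$, contradicting that these beacons guard $P_r$. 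Hence $r \leq 3k$. The small cases are handled by the same reasoning and need only be noted separately because Lemma~\ref{lem:spiral_sub} is stated for $2 \leq i \leq k-1$: for $k=1$, the single beacon $b_1 \in C_2$ already fails to cover $C_4$ once $r \geq 4$, and for $k=2$ the base case places $b_2 \in A_5$, which fails to cover $C_7$ once $r \geq 7$.

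I expect the main obstacle to be bookkeeping rather than a new geometric idea: I must verify that the region of $C_{3k-2}$ missed by $b_{k-1}$ is genuinely nonempty and lies above $\ell_{k-1}$, so that $b_k$ is truly forced onto $\ell_{k-1}$ and thence into $A_{3k-1}$, and that the prescribed edge lengths $a_{3k-2}, \ldots, a_{3k+2}$ still satisfy the slope bounds $>1$ and $<1$ for the chosen $\rho$ and $\epsilon$. These are precisely the estimates of Lemma~\ref{lem:spiral_sub} with $j = k-1$, so I anticipate they transfer verbatim; the only genuinely new observation is that, because $b_k$ is the last beacon, there is no subsequent beacon to rescue $C_{3k+1}$, which converts the ``forced next placement'' of the induction into an outright contradiction.
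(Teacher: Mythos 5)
Your proposal is correct and follows the paper's intended route: the paper states this lemma without an explicit proof, presenting it as an immediate consequence of Lemma~\ref{lem:spiral_sub}, and your argument---running the inductive step of that lemma one final time for $b_k$ to conclude that $k$ beacons reach no further than the rectangle of index $3k$, hence $r \leq 3k$ and $k \geq \lceil \frac{r}{3}\rceil$---is exactly the derivation the paper leaves implicit. The only caveat is that your steps inherit whatever informality is already present in Lemma~\ref{lem:spiral_sub} itself (e.g., that $b_1$ must lie in $C_2$, and that covering the missed portion of $C_{3j+1}$ forces the next beacon onto the line $\ell_j$), so nothing new is missing relative to the paper.
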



\subsection{Proof of the upper bound for coverage}\label{subsec:upper_bound_coverage}

In this section, we prove the matching upper bound $\lceil \frac{r}{3} \rceil$.
Our proof is by induction on $r$.
The following lemma handles the base case.
\begin{lemma} \label{lem:coverage_base}
 Any rectilinear polygon $P$ with at most three reflex vertices can be guarded by
 a single beacon.  Moreover, the beacon  can be placed at a reflex vertex of $P$,
 provided that $P$ has at least one.
\end{lemma}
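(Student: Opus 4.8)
The plan is to prove Lemma~\ref{lem:coverage_base} by directly exhibiting a reflex vertex at which a single beacon guards the entire polygon, using the characterization of the beacon kernel established in Theorem~\ref{thm:kernel}. Recall that a point $p$ guards all of $P$ precisely when $p \in \Kernel(P) = \RKernel(P) \cap P$, where $\RKernel(P)$ is the intersection of the half-planes $H_e$ over all reflex edges $e$ of $P$. So it suffices to show that when $r \le 3$, some reflex vertex of $P$ lies in every such $H_e$ (and the degenerate cases $r=0$, where $P$ is a rectangle, are trivial since any point works).

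First I would dispose of the cases by reflex-edge structure. When $r \le 3$, the number of reflex edges is small, and the key observation is that $\RKernel(P)$ is an axis-parallel rectangle cut out by at most a few half-plane constraints, one per reflex edge. I would split on how many of the reflex vertices are incident to a reflex edge, i.e. on the sizes of $R_1$ and $R_2$ from the discussion preceding Lemma~\ref{lem:kernel_lem}. If $R_1 = \emptyset$ (no reflex edge at all), then $\RKernel(P) = \Plane$ and \emph{any} reflex vertex works by Theorem~\ref{thm:kernel}. Otherwise, since a reflex edge contributes two reflex vertices from $R_1$, with $r \le 3$ there can be at most one reflex edge (two reflex edges would need four reflex vertices). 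So the only remaining situation has exactly one reflex edge $e$ with endpoints $v_1, v_2 \in R_1$, possibly together with one extra reflex vertex $v_3 \in R_2$; here $\RKernel(P) = H_e$, the single half-plane supporting $e$.

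The heart of the argument is then to verify that one of the reflex vertices lies in $H_e \cap P$. I would argue that at least one endpoint of the reflex edge $e$ itself satisfies this: both $v_1$ and $v_2$ lie on the line bounding $H_e$, hence on its boundary, so they belong to the closed half-plane $H_e$, and being vertices of $P$ they lie in $P$. Thus placing the beacon at $v_1$ (or $v_2$) gives a point of $\Kernel(P)$, which guards $P$. The one subtlety to check is whether the beacon must be placed at a point whose attraction genuinely reaches \emph{every} point, including those on the far side of the extra reflex vertex $v_3$ when it is present; but this is exactly what membership in $\bigcap_{v\in R} C_v$ guarantees via Lemma~\ref{lem:kernel_biro}, and Lemma~\ref{lem:kernel_lem} tells us the constraint from $v_3 \in R_2$ is implied by those from $R_1$. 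So no separate verification for $v_3$ is needed.

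The step I expect to be the main obstacle is confirming that the reduction to Theorem~\ref{thm:kernel} is fully legitimate in the boundary-incidence case, namely that a beacon placed \emph{at} a reflex vertex $v_1$ lying on $\bd P$ (rather than in the interior) still attracts all points. Placing a beacon on the boundary, and in particular at a reflex vertex that is itself the apex generating a cone constraint, is a genuine edge case: I would need to check that the closed-cone formulation $C_v = N_1 \cup N_2$ and the closed half-plane $H_e$ in Theorem~\ref{thm:kernel} are stated so as to \emph{include} their boundaries, so that $v_1 \in H_e$ holds as an equality rather than being excluded. Since $C_v$ is defined as a union of \emph{closed} half-planes and $\RKernel(P)$ as an intersection of \emph{closed} half-planes, the endpoints of a reflex edge do lie in the kernel, and the lemma follows. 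If for some reason a strictly-interior beacon were required, I would instead perturb infinitesimally into $H_e \cap \intr P$ along the edge $e$, which remains valid since $\RKernel(P)$ has nonempty interior whenever $P$ is non-degenerate.
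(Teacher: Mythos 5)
There is a genuine gap in your case analysis: the claim that ``two reflex edges would need four reflex vertices'' is false, because two reflex edges can share a reflex vertex. Concretely, if the three reflex vertices $v_1, v_2, v_3$ are consecutive along $\bd P$ (as in a rectangular spiral), then both $v_1v_2$ and $v_2v_3$ are reflex edges, using only three reflex vertices. The paper handles exactly this case: there $\RKernel(P) = H_e \cap H_{e'}$ is a quarter-plane cone with apex at the shared vertex $v_2$, and the beacon must be placed at $v_2$. Your prescription of placing the beacon at an arbitrary endpoint of ``the'' reflex edge breaks down here --- the vertex $v_1$ lies on $\bd H_{v_1v_2}$ but can lie strictly outside $H_{v_2v_3}$, so a beacon at $v_1$ need not lie in $\Kernel(P)$ and need not guard $P$. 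The fix is easy (observe that with $r\le 3$ two reflex edges must be adjacent, and place the beacon at their common vertex), but as written your proof omits a case that actually occurs and whose naive resolution fails.

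The rest of your argument is sound and follows the same route as the paper: reduce everything to Theorem~\ref{thm:kernel}, note that $\RKernel(P)$ is an intersection of closed half-planes so that reflex-edge endpoints lie in it, and use Lemma~\ref{lem:kernel_lem} to discharge the constraints from vertices in $R_2$. Your worry about beacons placed on the boundary is also resolved correctly by the closedness of the half-planes, which is how the paper treats it as well.
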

\begin{proof}
Let $P$ be a rectilinear polygon with $r = r(P) \leq 3$.
If $P$ has no reflex edge, then $P$ is $xy$-monotone by Observation~\ref{obs:monotone},
and thus a beacon placed at any point in $P$ guards $P$
by Theorem~\ref{thm:kernel}.

Observe that $P$ has at most two reflex edges,
and this is possible only when its three reflex vertices are consecutive.
Assume that this is the case.
Then, the two reflex edges $e$ and $e'$ of $P$ must be adjacent and share a reflex vertex $v$.
Hence, the region $\RKernel(P) = H_e \cap H_{e'}$ forms a cone with the right angle at apex $v$.
Since $v$ is obviously contained in $P$,  Theorem~\ref{thm:kernel}
implies that $v\in \Kernel(P)$, so placing a beacon at $v$ is sufficient to guard $P$.

Now, suppose that $P$ has exactly one reflex edge $e$.
Then, $\RKernel = H_e$ forms a half-plane, and $e$ is contained in $P$.
We place a beacon $b$ at a reflex vertex incident to $e$.
Since $e \subset \Kernel(P)$ by Theorem~\ref{thm:kernel},
$b$ guards $P$.
\end{proof}

If $r \geq 4$, then we will partition $P$ using cuts.
A {\it normal} cut is a cut that is not incident to any vertex of $P$.
We will try to use normal cuts as often as possible,
as the new edges in the subpolygons created by a normal cut are convex,
and thus by Observation~\ref{obs:convex_edge},
these subpolygons can be handled separately.
We are more interested in normal cuts with an additional property:
A normal cut $c$ in $P$ is called \emph{safe} if
$r(P^+_c) \geq 1$, $r(P^-_c) \geq 1$, and
$\lceil \frac{r(P^+_c)}{3} \rceil + \lceil \frac{r(P^-_c)}{3} \rceil = \lceil \frac{r}{3} \rceil$.

A normal cut $c$ in $P$ is called an \emph{$m$-cut} if $r(P^-_c) \equiv m \pmod 3$.
We will abuse notation and write $a \equiv b$ instead of $a \equiv b \pmod 3$.
\begin{lemma} \label{lem:safe_cut}
 Let $c$ be any normal cut in $P$ such that $r(P^+_c) \geq 1$ and $r(P^-_c) \geq 1$.
 Then, $c$ is safe if and only if either $r \equiv 1$, or
 $c$ is a $0$-cut or an $r$-cut.
\end{lemma}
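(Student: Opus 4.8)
The plan is to turn this into a purely number-theoretic statement about the residues of $r(P^-_c)$ and $r(P^+_c)$ modulo $3$, and then verify it by a short case analysis. The essential geometric input, which I would establish first, is that a \emph{normal} cut preserves the reflex vertices: since $c$ is not incident to any vertex, each of the $r$ reflex vertices of $P$ keeps its $270^\circ$ local neighborhood intact and lands in exactly one of $P^-_c, P^+_c$, while the two endpoints of $c$ fall in the interiors of edges and produce only $90^\circ$ (convex) corners. Hence no reflex vertex is created or destroyed, and
\[
  r(P^-_c) + r(P^+_c) = r.
\]

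Writing $a := r(P^-_c)$ and $b := r(P^+_c)$, the hypotheses give $a, b \geq 1$ and $a + b = r$. By definition, $c$ is safe precisely when $\lceil a/3 \rceil + \lceil b/3 \rceil = \lceil (a+b)/3 \rceil$. Moreover $c$ is a $0$-cut iff $a \equiv 0$, and an $r$-cut iff $a \equiv r$, i.e.\ $b \equiv 0$; and $r \equiv 1$ iff $a + b \equiv 1$. So the statement to prove reduces to the arithmetic claim
\[
  \lceil a/3 \rceil + \lceil b/3 \rceil = \lceil (a+b)/3 \rceil
  \iff a \equiv 0 \ \text{ or }\ b \equiv 0 \ \text{ or }\ a + b \equiv 1.
\]

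To prove this claim I would set $\alpha := a \bmod 3$ and $\beta := b \bmod 3$. Writing $a = 3p + \alpha$ and $b = 3q + \beta$ and using $\lceil a/3\rceil = p + \lceil \alpha/3 \rceil$ (and similarly for $b$ and for $a+b$), the multiples of $3$ cancel and the identity collapses to the finite statement $\lceil \alpha/3 \rceil + \lceil \beta/3 \rceil = \lceil (\alpha+\beta)/3 \rceil$ with $\alpha,\beta \in \{0,1,2\}$. Checking the nine pairs, equality holds exactly when $\alpha = 0$, or $\beta = 0$, or $\alpha = \beta = 2$; in the three remaining cases, $(\alpha,\beta) \in \{(1,1),(1,2),(2,1)\}$, the left side equals $2$ while the right side equals $1$. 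Finally I would observe that, once $\alpha,\beta \neq 0$, the condition $\alpha = \beta = 2$ is equivalent to $\alpha + \beta \equiv 1$ (the only way two nonzero residues sum to $1 \bmod 3$ is $2+2$), which matches the clause $a + b = r \equiv 1$. This recovers exactly the three cases of the claim and hence the lemma.

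The nine-case check is entirely routine; the only step needing genuine care is the geometric reduction $r(P^-_c) + r(P^+_c) = r$, where both the rectilinearity and the \emph{normal} (non-incident-to-a-vertex) hypothesis are used, since a cut passing through a reflex vertex could split its $270^\circ$ angle and thereby change the reflex-vertex count.
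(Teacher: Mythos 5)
Your proof is correct and follows essentially the same route as the paper: both reduce safeness to the ceiling identity $\lceil r^+/3\rceil+\lceil r^-/3\rceil=\lceil r/3\rceil$ and settle it by a case analysis on the residues of $r^\pm$ modulo $3$ (the paper organizes the cases by $r\bmod 3$, you by an exhaustive nine-pair check, which comes to the same thing). Your explicit justification that a normal cut preserves the reflex-vertex count, $r(P^-_c)+r(P^+_c)=r$, is a point the paper simply asserts, so no gap there.
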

\begin{proof}
Let $r^+ := r(P^+_c)$ and $r^- := r(P^-_c)$.
Note that $r = r^+ + r^-$.
Since we assume that $r^+ \geq 1$ and $r^- \geq 1$,
the cut $c$ is safe if and only if
$\lceil \frac{r^+}{3} \rceil + \lceil \frac{r^-}{3} \rceil = \lceil \frac{r}{3} \rceil$.

First, suppose that $r \equiv 1$.
If $c$ is  a $0$-cut or a $1$-cut, 
then $r^+ \equiv 0$ or $r^- \equiv 0$, which implies 
$ \left\lceil \frac{r^+}{3} \right\rceil + \left\lceil \frac{r^-}{3} 
\right\rceil  = \left\lceil \frac{r^+ + r^-}{3} \right\rceil    
= \left\lceil \frac{r}{3}\right\rceil$.
If on the other  hand $c$ is a $2$-cut, then we have $r^+ \equiv r^- \equiv 2$, and
thus
\[ \left\lceil \frac{r^+}{3} \right\rceil + \left\lceil \frac{r^-}{3} \right\rceil = \frac{r^+ + 1}{3} + \frac{r^- + 1}{3}
   = \frac{r^+ + r^- + 2}{3} = \frac{r+2}{3} = \left\lceil \frac{r}{3} \right\rceil.\]

Now we assume that $r \not\equiv 1$, that is, $r \equiv 0$ or $r \equiv 2$.
Then $c$ is a $0$-cut or an $r$-cut if and only if  $r^+ \equiv 0$ or $r^- \equiv 0$, which 
is equivalent to 
$ \left\lceil \frac{r^+}{3} \right\rceil + \left\lceil \frac{r^-}{3} \right\rceil  = \left\lceil \frac{r^+ + r^-}{3}
 \right\rceil    = \left\lceil \frac{r}{3}\right\rceil$.
\end{proof}

If there is a safe cut $c$ in $P$ when $r \geq 4$,
then one can partition $P$ into two subpolygons $P^+_c$ and $P^-_c$ by $c$
and attain the target bound $\lceil \frac{r}{3}\rceil$ on the number of beacons
by handling each subpolygon by our induction hypothesis.
But, this is not always the case:
there exist rectilinear polygons $P$ that do not admit a safe cut
when $r \not\equiv 1$.

\begin{lemma} \label{lem:1-cut}
 Suppose that $P$ admits no safe cut and $r\geq 5$.
 Then, there exists a horizontal normal cut $c$ in $P$ such that
 either $c$ is a $1$-cut with $r(P^-_c) \geq 4$, or $c$ is a $2$-cut with $r(P^+_c) \geq 4$.
\end{lemma}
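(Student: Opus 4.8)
The plan is to reinterpret the hypothesis through Lemma~\ref{lem:safe_cut} as a congruence condition, reduce the goal to producing one horizontal cut with a controlled number of reflex vertices on one side, and then locate such a cut by an extremal sweeping argument anchored at the topmost or bottommost reflex feature.

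First I would rewrite ``no safe cut.'' Since $P$ has at least one separating normal cut when $r\geq 5$ (for instance a horizontal cut isolating a lowest reflex vertex has $1\le r(P^-_c)\le 2<r$, hence $r(P^+_c)\geq 1$), Lemma~\ref{lem:safe_cut} forces $r\not\equiv 1\pmod 3$: otherwise that cut would already be safe. So $r\equiv 0$ or $r\equiv 2\pmod 3$. For $r\not\equiv 1$, a separating normal cut is safe exactly when it is a $0$-cut or an $r$-cut, so the hypothesis says that \emph{no} separating normal cut $c$ has $r(P^-_c)\equiv 0$ or $r(P^+_c)\equiv 0\pmod 3$. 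I would then record the target reduction: a horizontal separating cut $c$ is of the desired form whenever $2\le r(P^-_c)\le r-2$. Indeed, such a cut has $r(P^-_c)\not\equiv 0$ by the hypothesis; if $r(P^-_c)\equiv 1$ then, being $\geq 2$, it is in fact $\geq 4$, a $1$-cut with $r(P^-_c)\geq 4$; and if $r(P^-_c)\equiv 2$ then it is not an $r$-cut, which forces $r\equiv 0$ and hence $r(P^-_c)\le r-4$, i.e.\ $r(P^+_c)\geq 4$, a $2$-cut of the required type.

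To produce the cut I would split on $r\bmod 3$ and argue from the extreme reflex features. When $r\equiv 2$, I would look at the \emph{topmost} structure: if the topmost reflex vertices span a (top) reflex edge, the horizontal cut just below it has $r(P^+_c)=2$ and $r(P^-_c)=r-2\equiv 0$, a safe $0$-cut, contradicting the hypothesis; hence the topmost reflex vertex is unique, and the horizontal cut just below it isolates it, giving $r(P^+_c)=1$ and $r(P^-_c)=r-1\geq 4$, which is $\equiv 1$ and therefore a $1$-cut with $r(P^-_c)\geq 4$. When $r\equiv 0$, I would sweep from the \emph{bottom}: if the lowest reflex feature is a reflex edge, the cut just above it already has $r(P^-_c)=2$ and we are done; otherwise isolate the unique lowest reflex vertex to obtain a cut $c_0$ with $r(P^-_c)=1$, and raise $c_0$ continuously while tracking $r(P^-_c)$ as its lower side grows. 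At the first height where $r(P^-_c)$ changes, the sweep either crosses a single reflex vertex (so $r(P^-_c)$ becomes $2$, a cut of the target form), crosses or splits at a reflex edge (so $r(P^-_c)$ would become $3\equiv 0$, a forbidden safe $0$-cut, which therefore cannot occur), or merges with another horizontal cut $c'$. In the merge case $c'$ is itself a horizontal separating cut, and a short count---its lower side is disjoint from that of $c_0$, and both miss the two reflex vertices of the merging edge---gives $1\le r(P^-_{c'})\le r-3$ with $r(P^-_{c'})\not\equiv 0$, so either $c'$ already satisfies $2\le r(P^-_{c'})\le r-2$ or $r(P^-_{c'})=1$ and the merged cut has $r(P^-_c)=4$. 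In every case a cut of the required form appears.

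The step I expect to be the main obstacle is making the sweep rigorous. A horizontal level set of $P$ is in general a union of several chords that appear, disappear, split, and merge as the level rises, and distinct chords at the same height carry different values of $r(P^-_c)$, so ``raising one cut and tracking $r(P^-_c)$'' must be set up carefully, e.g.\ through the horizontal (trapezoidal) decomposition of $P$ and the adjacencies of its faces. Two points deserve the most care: justifying that each elementary event changes $r(P^-_c)$ by exactly $+1$ or $+2$---here the general-position convention that no cut joins two reflex vertices is precisely what guarantees that two reflex vertices share a height only when they span a reflex edge---and controlling merge events, where the merged value must be related to the two values present just before the merge and the overflow case must be excluded. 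Finally, Observation~\ref{obs:convex_edge} ensures that the convex edges created by the chosen cut never interfere, so once the cut is found the two sides can be treated independently.
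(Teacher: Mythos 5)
Your opening reduction is correct and is a clean repackaging of Lemma~\ref{lem:safe_cut}: under the no-safe-cut hypothesis $r\equiv 0$ or $2 \pmod 3$, and any horizontal normal cut with $2\le r(P^-_c)\le r-2$ is automatically of the required form. The gap is in actually producing such a cut, and it stems from one recurring error: for a horizontal chord $c$, the subpolygon $P^+_c$ is a connected component of $P\setminus c$, not the set of points of $P$ above the supporting line of $c$, so your extremal anchors do not behave as claimed. Concretely, suppose the two topmost reflex vertices span a horizontal reflex edge $e$ whose supporting half-plane $H_e$ is the upper half-plane (a top reflex edge in the paper's terminology); this happens in an $\sqcap$-shaped polygon and in the spirals $P_r$ used for the paper's own lower bound, so it is not a degenerate case. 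Then the region immediately below $e$ is exterior to $P$: there is no single chord ``just below $e$,'' only one chord in each descending branch, and for either of these chords $P^+_c$ contains the two endpoints of $e$ \emph{plus every reflex vertex of the other branch} (it equals $2+r(\poc_{v'}(e))$ for the chord in the branch of $v$), which in a spiral is almost all of $r$. The claimed safe $0$-cut with $r(P^+_c)=2$ therefore need not exist, and the contradiction you rely on evaporates; the subsequent claim that the topmost reflex vertex is otherwise unique is also unjustified, since several topmost reflex vertices can sit in different columns without the segment between them being a chord. The same phenomenon breaks the sweep in the $r\equiv 0$ case: at a split event $r(P^-_c)$ does not jump by $+2$ but by $2$ plus the entire reflex count of the other post-split branch (which lies in the component of $P\setminus c$ below the chord you keep), so ``$r(P^-_c)$ would become $3$, a forbidden safe $0$-cut'' is unsupported, and the merge bookkeeping inherits the same problem.

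The paper avoids all of this by anchoring not at the topmost reflex vertex but at an arbitrary top convex edge $e_0$ and the \emph{first reflex vertex $v$ below $e_0$}: the region swept from $e_0$ down to just below $v$ is a genuine column, so the cut just below $v$ provably satisfies $r(P^+_c)=1$ unless $v$ is an endpoint of a horizontal reflex edge, and that exceptional case is eliminated by modular arithmetic on the two pockets of that reflex edge (whose reflex counts sum to $r-2$ or $r-3$), never by exhibiting a chord spanning the edge. To repair your argument you would need to re-anchor the extremal step the same way and replace the $+1/+2$ event accounting by these pocket-counting identities; as written, the proof fails on the very polygons from which the lower bound of Theorem~\ref{thm:covering} is built.
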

\begin{proof}
Since $P$ admits no safe cut, we know that $r \equiv 0$ or $r \equiv 2$
by Lemma~\ref{lem:safe_cut}.
We separately handle the cases where $r \equiv 0$ or $r\equiv 2$.

\begin{figure}[tb]
\centering
\includegraphics[width=\textwidth]{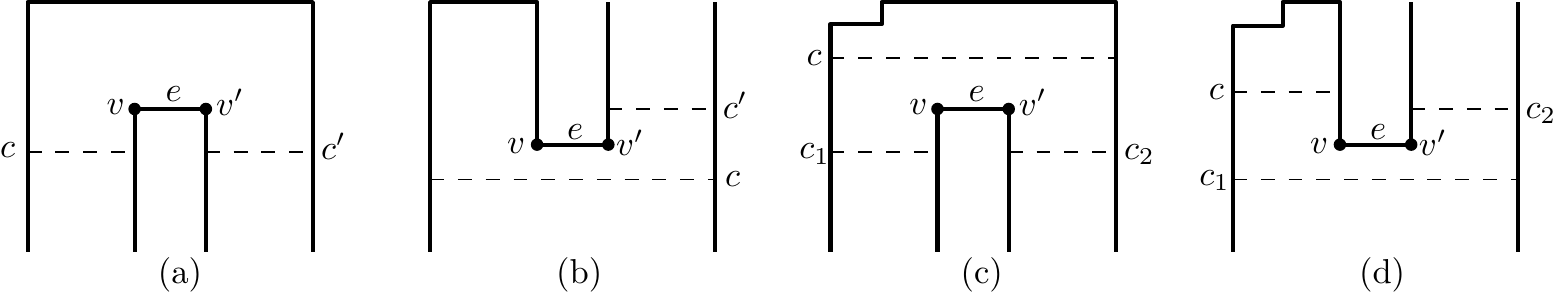}
\caption{Proof of Lemma~\ref{lem:1-cut}}
\label{fig:1-cut}
\end{figure}

First, assume that $r\equiv 2$.
In this case, we show a stronger claim (\figurename~\ref{fig:1-cut}a):
\begin{quote}
\textit{If $r\equiv 2$, then for any top convex edge $e_0$ of $P$,
the first reflex vertex $v$ below $e_0$ is not an endpoint of any horizontal reflex edge of $P$.}
\end{quote}
This automatically proves the lemma: A normal cut $c$ just below $v$ is a $1$-cut and
$r(P^-_c) = r - 1 \geq 4$.
Suppose to the contrary that it is not the case,
so there is a top convex edge $e_0$ of $P$ such that
the first reflex vertex $v$ below $e_0$ is an endpoint of a reflex edge $e$ of $P$.
Let $v'$ be the other endpoint of $e$.
There are two cases: Either $e$ is a top reflex edge (\figurename~\ref{fig:1-cut}b), 
or a bottom reflex edge (\figurename~\ref{fig:1-cut}c).

Consider the first case, where $e$ is a top reflex edge.
Let $c$ and $c'$ be normal cuts just below $v$ and $v'$, respectively.
We cannot have $r(P^-_c)=0$ as it would imply that  $r(P^+_{c'}) \equiv 0$, an hence
$c'$ would be a safe cut. Similarly, we have $r(P^-_c) \neq 0$.
So by Lemma~\ref{lem:safe_cut}, as there is no safe cut in $P$, 
both $c$ and $c'$ must be $1$-cuts, and thus $r(P^-_c) + r(P^-_{c'}) \equiv 2$.
It implies $r=r(P^-_c) + r(P^-_{c'})+2 \equiv 1$, a contradiction.

Now, consider the latter case where $e$ is a bottom reflex edge.
See \figurename~\ref{fig:1-cut}c.
Let $c$ be a normal cut just below $e$ and let $c'$ be a normal cut just above $v'$.
Since $r(P^-_c) + r(P^+_{c'}) = r - 2 \equiv 0$ and since
there is no safe cut in $P$, both $c$ and $c'$ must be $1$-cuts,
so $r(P^-_c) + r(P^+_{c'}) = r(P^-_c) + ( r - r(P^-_{c'})) \equiv 2$, a contradiction.
Thus, our claim for the case $r \equiv 2$ is true.

Assume now that $r\equiv 0$, and thus $r \geq 6$. By our assumption that $P$ has not safe 
cut, there is no $0$-cut $c'$ with $r(P^+_{c'}) \geq 1$ and $r(P^-_{c'}) \geq 1$.
Now suppose that the lemma is false.
Then, we have $r(P^-_c) = 1$ for any $1$-cut $c$ in $P$, and
$r(P^+_c) = 1$ for any $2$-cut $c$ in $P$.
Pick any $2$-cut $c$ in $P$.
If there is no $2$-cut in $P$, then we rotate $P$ by $180^\circ$
so that every $1$-cut is transformed into a $2$-cut.
Note that $r(P^+_c) = 1$ and $r(P^-_c) = r - 1 \geq 5$.
Let $v$ be the first reflex vertex below $c$.
If $v$ is not incident to a horizontal reflex edge,
then a normal cut $c’$ just below $v$ is a $1$-cut with $r(P^-_{c’}) \geq 4$,
a contradiction.
Thus, $v$ is incident to a horizontal reflex edge $e$.

There are two cases: either $e$ is a top reflex edge or a bottom reflex edge.
Assume that $e$ is a top reflex edge.
Let $c_1$ and $c_2$ be normal cuts just below $v$ and $v'$, respectively,
where $v'$ is the other vertex incident to $e$.
(See \figurename~\ref{fig:1-cut}c.)
For any $i \in \{1, 2\}$, if $c_i$ is a $2$-cut, then $r(P^+_{c_i}) \equiv 1$
and $r(P^+_{c_i}) \geq 4$, a contradiction.
Thus, neither $c_1$ nor $c_2$ can be a $2$-cut.
Moreover, since $r(P^-_{c_1}) + r(P^-_{c_2}) = r - 3 \equiv 0$,
we must have $r(P^-_{c_1}) \equiv r(P^-_{c_2}) \equiv 0$.
Since $P$ admits no safe cut, it implies that 
$r(P^-_{c_1}) = r(P^-_{c_2}) = 0$, and hence $r = 3$, a contradiction
to the assumption that $r\geq 6$.

Assume that $e$ is a bottom reflex edge.
Let $c_1$ be a normal cut just below $v$
and $c_2$ be a normal cut just above $v'$.
(See \figurename~\ref{fig:1-cut}c.)
In this case, $c_1$ cannot be a $2$-cut since $r(P^+_{c_1}) \geq 3$,
while $c_2$ cannot be a $1$-cut since $r(P^-_{c_2}) \geq 3$.
On the other hand, we have $r(P^-_{c_1}) + r(P^+_{c_2}) = r - 3 \equiv 0$,
and thus $r(P^-_{c_1}) \equiv r(P^-_{c_2})$.
So, we must have $r(P^-_{c_1}) \equiv r(P^-_{c_2}) \equiv 0$.
Since $P$ has no safe cut, it implies that  $r(P^-_{c_1}) = r(P^+_{c_2}) = 0$,
and thus $r = 3$, a contradiction.
\end{proof}

Now, we are ready to prove the main result of this section.
\begin{lemma} \label{lem:covering}
 Let $P$ be a simple rectilinear polygon $P$ with $n \geq 6$ vertices and
 $r \geq 1$ reflex vertices.
 Then, $\lfloor \frac{n}{6}\rfloor = \lceil \frac{r}{3} \rceil$ beacons are sufficient
 to guard $P$.  Moreover, all these beacons can be placed at reflex vertices of $P$.
\end{lemma}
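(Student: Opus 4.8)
The plan is to prove Lemma~\ref{lem:covering} by strong induction on $r = r(P)$, exactly as set up in the text. The base cases $1 \le r \le 3$ are handled directly by Lemma~\ref{lem:coverage_base}, which places a single beacon at a reflex vertex and guards $P$; note $\lceil r/3 \rceil = 1$ in this range, so the count matches. For the inductive step I assume $r \ge 4$ and that the claim holds for every rectilinear polygon with fewer reflex vertices. The strategy splits on whether $P$ admits a safe cut.

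If $P$ admits a safe cut $c$, the argument is immediate: by Lemma~\ref{lem:safe_cut}'s setup, $c$ is a normal cut with $r(P_c^+) \ge 1$ and $r(P_c^-) \ge 1$, so each subpolygon has strictly fewer reflex vertices than $P$ and the induction hypothesis applies to each. Since $c$ is normal, the new edge it creates is convex on both sides, so by Observation~\ref{obs:convex_edge} no beacon attraction path within either subpolygon hits that edge; hence attraction paths in $P_c^\pm$ are genuine attraction paths in $P$, and the union of the two guarding sets guards $P$. The total count is $\lceil r(P_c^+)/3\rceil + \lceil r(P_c^-)/3\rceil = \lceil r/3\rceil$ by the defining property of a safe cut, and all beacons sit at reflex vertices by the inductive hypothesis.

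The substantive case is when $P$ admits no safe cut. First I dispose of the small remaining exceptional count $r = 4$ separately (where Lemma~\ref{lem:1-cut} does not apply since it requires $r \ge 5$); here $\lceil r/3\rceil = 2$, and by Lemma~\ref{lem:safe_cut} the absence of a safe cut forces $r \equiv 0$ or $r \equiv 2$, so $r = 4$ means $r \equiv 1$ is excluded and one must show two beacons suffice by a short direct argument, likely cutting off a single reflex vertex and invoking the base case on the remainder. For $r \ge 5$, I invoke Lemma~\ref{lem:1-cut} to obtain a horizontal normal cut $c$ that is either a $1$-cut with $r(P_c^-) \ge 4$, or a $2$-cut with $r(P_c^+) \ge 4$; by symmetry (rotating $P$ if needed) I treat the $1$-cut case. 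Here $c$ is \emph{not} safe, so $\lceil r(P_c^+)/3\rceil + \lceil r(P_c^-)/3\rceil = \lceil r/3\rceil + 1$, meaning a naive recursion overspends by one beacon. The idea is to recover the lost beacon by placing a single shared beacon $b$ at a reflex vertex that simultaneously guards the small side $P_c^-$ and contributes to the large side: since $r(P_c^-) \equiv 1$ is small and the cut is positioned just below a reflex vertex not incident to a horizontal reflex edge, the subpolygon $P_c^-$ together with the cut region should be guardable by one beacon $b$ placed at a reflex vertex of $P$ lying in $\Kernel$ of the relevant subregion (appealing to Theorem~\ref{thm:kernel}), while the remainder of $P$ above is guarded recursively by $\lceil r/3\rceil - 1$ beacons.

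The main obstacle, and the delicate part of the whole proof, is exactly this beacon-sharing: unlike the safe-cut case, the cut here is chosen so that one subpolygon has $\equiv 0 \pmod 3$ reflex vertices only after the shared beacon's vertex is removed from consideration, and I must verify that the single beacon $b$ genuinely attracts every point on \emph{both} sides of $c$ within the whole polygon $P$ (not merely within a subpolygon), since a non-safe cut may create a reflex edge that attraction paths can hit. Concretely, I expect to place $b$ at the reflex vertex $v$ guaranteed by Lemma~\ref{lem:1-cut} (or its neighbor), argue via Theorem~\ref{thm:kernel} and the cone characterization that $b$ covers the portion of $P$ below $c$ plus the thin region immediately above, and then show the configuration of reflex vertices above $c$ reduces the recursive count by precisely one so the arithmetic $1 + (\lceil r/3\rceil - 1) = \lceil r/3\rceil$ closes. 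Getting the geometry of this shared region right — confirming $b$ lies in the beacon kernel of the merged subregion and that no attraction path escapes through the cut — is where the careful case analysis alluded to in the proof sketch (\figurename~\ref{fig:covering_intro}b) will be needed.
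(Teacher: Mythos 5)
Your base case and your safe-cut case match the paper exactly and are fine. The problem is the no-safe-cut case, which is where all the real work lies, and there your plan is both incomplete and built on a misreading of Lemma~\ref{lem:1-cut}. A ``$1$-cut'' $c$ satisfies $r(P_c^-)\equiv 1 \pmod 3$, and the lemma explicitly guarantees $r(P_c^-)\geq 4$; it is not a small side. Your proposed repair --- one shared beacon $b$ that ``simultaneously guards the small side $P_c^-$'' via Theorem~\ref{thm:kernel} --- therefore cannot work: a subpolygon with at least four reflex vertices (e.g., a spiral) generally has empty beacon kernel and needs at least two beacons, so the arithmetic $1+(\lceil r/3\rceil-1)$ never gets off the ground. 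The remainder of your sketch defers precisely the content that constitutes the proof (``confirming $b$ lies in the beacon kernel of the merged subregion \dots is where the careful case analysis \dots will be needed''), so the inductive step is not actually established.

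For comparison, the paper's argument in the no-safe-cut case does not use a single shared beacon at all. It picks the $1$-cut $c$ with $r(P_c^-)\geq 4$ \emph{minimizing} $r(P_c^-)$, shows the first reflex vertex $v$ below $c$ must be incident to a horizontal reflex edge $e$ (else a safe $0$-cut exists), and then branches on whether $e$ is a top or bottom reflex edge. The subcases variously: cut at a reflex vertex and argue no attraction path hits the resulting reflex edge because the induction hypothesis places all beacons at reflex vertices and one side has none in the relevant region; partition along vertical cuts at the endpoints of $e$ chosen so the residues mod $3$ make the counts add up; or recurse on two \emph{overlapping} subpolygons $P^-_{c_v}$ and $P^-_{d_v}$ whose union is $P$. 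None of this is recoverable from your outline, and the minimality of $r(P_c^-)$ --- which you do not use --- is what pins down $r(P^-_{c_1})=1$ in several subcases. A smaller issue: your worry about $r=4$ is unnecessary, since Lemma~\ref{lem:safe_cut} already forces $r\not\equiv 1$ when no safe cut exists, so $r\geq 5$ and Lemma~\ref{lem:1-cut} always applies; but even granting that, the gap above is fatal to the proposal as written.
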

\begin{proof}
Our proof is by induction on $r$.
The base case where $r \leq 3$ is already handled by Lemma~\ref{lem:coverage_base},
so  we assume that $r \geq 4$.
If $P$ admits a safe cut $c$, then we partition it into two subpolygons $P^+_c$ and $P^-_c$,
and  we handle each subpolygon recursively. 
Our guarding set for $P$ is the union of the guarding sets
for $P^+_c$ and $P^-_c$. As $c$ is safe, the total number of beacons we place is at most
 \[ \left\lceil \frac{r(P^+_c)}{3} \right\rceil + \left\lceil \frac{r(P^-_c)}{3} \right\rceil
    = \left\lceil \frac{r}{3} \right\rceil.\]
These beacons indeed guard $P$, because
$c$ is a convex edge of each subpolygon,
and thus by Observation~\ref{obs:convex_edge}, no beacon attraction path hits $c$.

Now we assume that $P$ admits no safe cut.
Then, by Lemma~\ref{lem:safe_cut}, we have $r \not\equiv 1$, so $r \geq 5$,
and there is no $0$-cut or $r$-cut with at least one reflex vertex on each side.
Consider the set $C$ of all $1$-cuts $c'$ in $P$ with $r(P^-_{c'}) \geq 4$.
By Lemma~\ref{lem:1-cut}, we may assume that $C$ is nonempty:
If $r \equiv 2$, then it immediately follows that $C \neq \emptyset$, 
and  if $r \equiv 0$ and $C = \emptyset$,
then we rotate $P$ by $180^\circ$.

Pick a $1$-cut $c \in C$ such that $r(P^-_c)$ is minimum.
Let $v$ be the first reflex vertex of $P$ below $c$.
If $v$ is not an endpoint of a horizontal reflex edge, then a normal cut $c'$ just below $v$
is a $0$-cut with $r(P^+_{c'}) \geq 1$ and $r(P^-_{c'}) \geq 1$,
so it is a safe cut, a contradiction to the assumption that $P$ admits no safe cut.
Hence, $v$ must be an endpoint of a horizontal reflex edge $e$.
We have two cases: Either $e$ is a top reflex edge, or a bottom reflex edge.

\begin{figure}[tb]
\centering
\includegraphics[width=\textwidth]{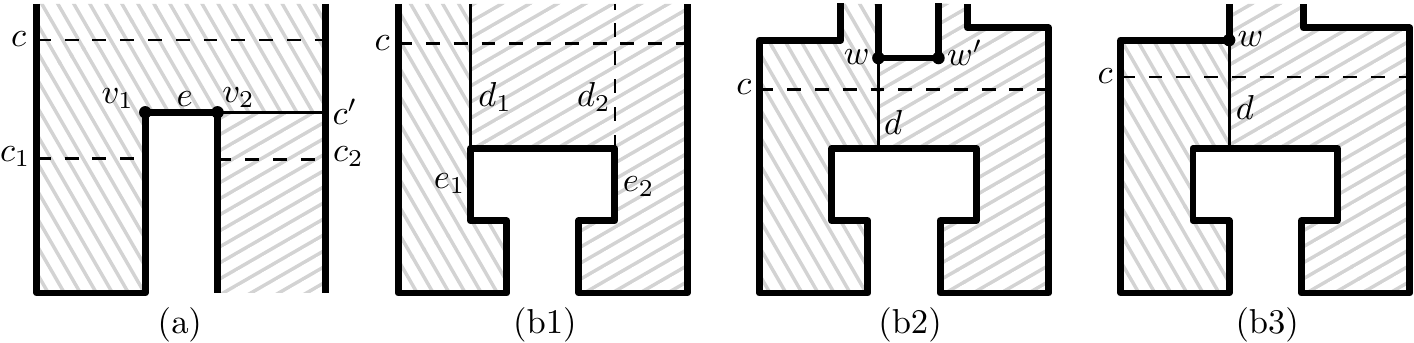}
\caption{Proof of Theorem~\ref{thm:covering} when $e$ is a top reflex edge.
(a) When $c_1$ is a $0$-cut and $c_2$ is a $2$-cut.
(b1)--(b3) When $c_1$ and $c_2$ are $1$-cuts.}
\label{fig:covering_top}
\end{figure}

\paragraph{When $e$ is a top reflex edge.}
Assume the former case where $e$ is a top reflex edge.
Let $v_1$ and $v_2$ be the left and right endpoint of $e$, respectively, and
$c_1$ and $c_2$ be normal cuts just below $v_1$ and $v_2$,
respectively.
Also, let $m_1,m_2\in \{0, 1, 2\}$ be such that
$c_1$ is an $m_1$-cut and $c_2$ is an $m_2$-cut.
We then observe that $m_1 + m_2 + 2 \equiv 1$, that is, $m_1 + m_2 \equiv 2$.
We treat separately the two possible cases: 
Case (a), where $(m_1,m_2)=(0, 2)$ or $(2, 0)$, and Case (b), where $(m_1,m_2)=(1,1)$.
(See \figurename~\ref{fig:covering_top}.)
\begin{enumerate}[(a)] 
\item Assume that $(m_1, m_2) = (0, 2)$, so $c_1$ is a $0$-cut and $c_2$ is a $2$-cut.
 (The case where $(m_1, m_2) = (2, 0)$ is symmetric, and can be handled in the same way.)
 As $c_1$ is not a safe cut, we have $r(P^-_{c_1}) = 0$.
 Consider the horizontal cut $c'$ at $v_2$.
 (See \figurename~\ref{fig:covering_top}a.)
 We have $r(P^-_{c'}) \equiv 2$, and $r(P^+_{c'}) + r(P^-_{c'}) = r - 1$.
 We then place beacons in $P^+_{c'}$ and $P^-_{c'}$ separately and recursively.
 The total number of beacons placed is at most
 \[ \left\lceil \frac{r(P^+_{c'})}{3} \right\rceil + \left\lceil \frac{r(P^-_{c'})}{3} \right\rceil
   = \left\lceil \frac{r(P^+_{c'})}{3} \right\rceil + \frac{r(P^-_{c'}) + 1}{3}
   = \left\lceil \frac{r(P^+_{c'})+r(P^-_{c'})+1}{3} \right\rceil = \left\lceil \frac{r}{3} \right\rceil.\]

 We still need to make sure that these beacons indeed guard $P$,
 Our induction hypothesis implies that all the beacons are placed at reflex vertices of $P$.
 So there is no beacon placed below the horizontal cut at $v_1$,
 and thus, by Observation~\ref{obs:convex_edge}, no beacon attraction path in $P^+_{c'}$ hits $c'$.
 Again by Observation~\ref{obs:convex_edge}, the beacons placed in $P^-_{c'}$ indeed guard
 the region $P^-_{c'}$ in $P$ since $c'$ is a convex edge of $P^-_{c'}$.
 This ensures that the beacons we placed separately in $P^+_{c'}$ and $P^-_{c'}$
 indeed guard $P$.

\item We now consider the case where $(m_1, m_2) = (1, 1)$.
 Then, we have $r(P^-_{c_1}) = r(P^-_{c_2}) = 1$ by our choice of $c$.
 Let $e_1$ and $e_2$ be the edges other than $e$  incident to $v_1$ and $v_2$, respectively.
 If $e_1$ is not a reflex edge, then a beacon placed at $v_2$ guards $P^-_c$
 since its kernel $\Kernel(P^-_c)$ is nonempty.
 The other part $P^+_c$ can be guarded by at most $\lceil \frac{r(P^+_c)}{3} \rceil$
 guards placed on reflex vertices of $P^+_c$.
 Since $c$ is a normal cut, these $1+\lceil \frac{r(P^+_c)}{3} \rceil$ beacons
 together guard $P$.
 As $r(P^+_c) = r - 4$, the number of beacons is bounded by
 \[ 1+\left\lceil \frac{r(P^+_c)}{3} \right\rceil = \left\lceil \frac{r(P^+_c)+3}{3} \right\rceil
  \leq \left\lceil \frac{r}{3} \right\rceil,\]
 as desired.
 The case where $e_2$ is not a reflex edge   can be handled symmetrically
 by placing a beacon at $v_1$.

 Thus, we now assume that both $e_1$ and $e_2$ are reflex edges.
 Let $d_1$ and $d_2$ be the vertical cuts at $v_1$ and $v_2$, respectively.
 We handle three subcases separately:
 (i) either $r(P^-_{d_i}) \equiv 0$ or $r(P^+_{d_i}) \equiv 0$ for some $i\in \{1,2\}$,
 (ii) $r \equiv 2$ and $r(P^-_{d_i}) \equiv r(P^+_{d_i}) \equiv 2$ for each $i \in \{1,2\}$,
 or (iii) $r\equiv 0$ and $r(P^-_{d_i}) \equiv r(P^+_{d_i}) \equiv 1$ for each $i \in \{1,2\}$.
 If we are not in Case (i), we have either Case (ii) or (iii).
 So these three cases cover all possible situations.
 \begin{enumerate}[(i)]
 \item Without loss of generality, we assume that $r(P^-_{d_1}) \equiv 0$ or $r(P^+_{d_1}) \equiv 0$.
 In this case,  we handle $P^-_{d_i}$ and $P^+_{d_i}$ recursively.
 The union of the two guarding sets of $P^-_{d_i}$ and $P^+_{d_i}$ guards $P$,
 as all these beacons are placed at reflex vertices.
 (See \figurename~\ref{fig:covering_top}(b1).)
 The number of  beacons is at most  $\lceil (r(P^-_{d_1})+r(P^+_{d_1}))/3 \rceil
 = \lceil \frac{r}{3} \rceil$,
 since $r(P^-_{d_1}) \geq 1$ and $r(P^+_{d_1}) \geq 1$.

 \item 
 Assume that $r \equiv 2$ and $r(P^-_{d_i}) \equiv r(P^+_{d_i}) \equiv 2$ for each $i=1,2$.
 We partition $P$ into $P^+_{d_1}$ and $P^-_{d_1}$ and handle them recursively.
 The total number of beacons is at most
 \[\left\lceil \frac{ r(P^+_{d_1})}{3} \right\rceil + \left\lceil \frac{r(P^-_{d_1})}{3} \right\rceil
   = \frac{r(P^+_{d_1})+1}{3} + \frac{r(P^-_{d_1})+1}{3} = \frac{r + 1}{3} =
   \left\lceil \frac{r}{3} \right\rceil.\]
 These beacons guard $P$, as they are all placed at reflex vertices of $P$.
 \item
 The remaining case is when $r \equiv 0$ and $r(P^-_{d_i}) \equiv r(P^+_{d_i}) \equiv 1$
 for each $i=1, 2$.
 Then a vertical cut just to the left of $d_1$ is a $0$-cut,
 and a vertical cut just to the right of $d_2$ is a $0$-cut.
 Since $P$ admits no safe cut, this implies that $r(P^-_{d_1}) = r(P^+_{d_2}) = 1$.

 Let $w$ be the first reflex vertex above $c$, and let $d$ be the vertical cut at $w$.
 We consider two cases: Either $w$ is incident to a bottom reflex edge, or not.
 In the former case, let $w'$ be the other endpoint of the bottom reflex edge.
 (See \figurename~\ref{fig:covering_top}(b2).)
 Since $c$ is a $1$-cut and $r\equiv 0$, one of the two horizontal cuts just above $w$ and $w'$
 must be either a $0$-cut or a $2$-cut.
 Without loss of generality, assume that the cut above $w$ is  either a $0$-cut or a $2$-cut.
 As $r \equiv 0$, it means that the number of reflex vertices above this cut is $0$ or $1$
 modulo 3, and hence $r(P_d^-) \equiv 0$ or $2$.

 In the latter case, where $w$ is not incident to a bottom reflex edge,
 we may assume without loss of generality that the horizontal edge incident to $w$
 is to the left of $w$. (See \figurename~\ref{fig:covering_top}(b3).)
 So we still have $r(P_d^-) \equiv 0$ or $2$.

 In both cases, we partition $P$ by the vertical cut $d$ at $w$ into $P^-_{d}$ and $P^+_{d}$.
 The endpoint of $d$ other than $w$ always lies on the reflex edge $e$,
 since $r(P^-_{d_1}) = r(P^+_{d_2}) = 1$.
 Since $r(P^-_{d}) \equiv 0$ or $2$, and $r(P^+_{d}) \equiv r(P^-_{d}) + 1$, and
 $e \equiv 0$,
 we have either $r(P^-_{d}) \equiv 0$ or $r(P^+_{d}) \equiv 0$.
 We handle $P^-_{d}$ and $P^+_{d}$, separately, and recursively.
 Then the total number of beacons placed in $P$ is at most
 \[ \left\lceil \frac{r(P^-_{d}) }{3} \right\rceil + \left\lceil \frac{r(P^+_{d}) }{3} \right\rceil
  = \left\lceil \frac{r(P^-_{d}) + r(P^+_{d})}{3} \right\rceil = \left\lceil \frac{r}{3} \right\rceil.\]
 We still need  to verify that these beacons guard $P$.
 As $r(P^-_{d_1})=1$, by our induction hypothesis, there must be a beacon at one of the 
 endpoints of $e_1$.
 Such a beacon attracts all points to the left of $d_1$, and thus
 the region $P^-_{d}$ is covered by the beacons in $P$.
 Since $d$ is a convex edge of $P^+_{d}$, no attraction path hits $d$ inside $P^+_{d}$
 either.
 \end{enumerate}

\end{enumerate}
This completes the proof for the case where $e$ is a top reflex edge.

\begin{figure}[tb]
\centering
\includegraphics[width=\textwidth]{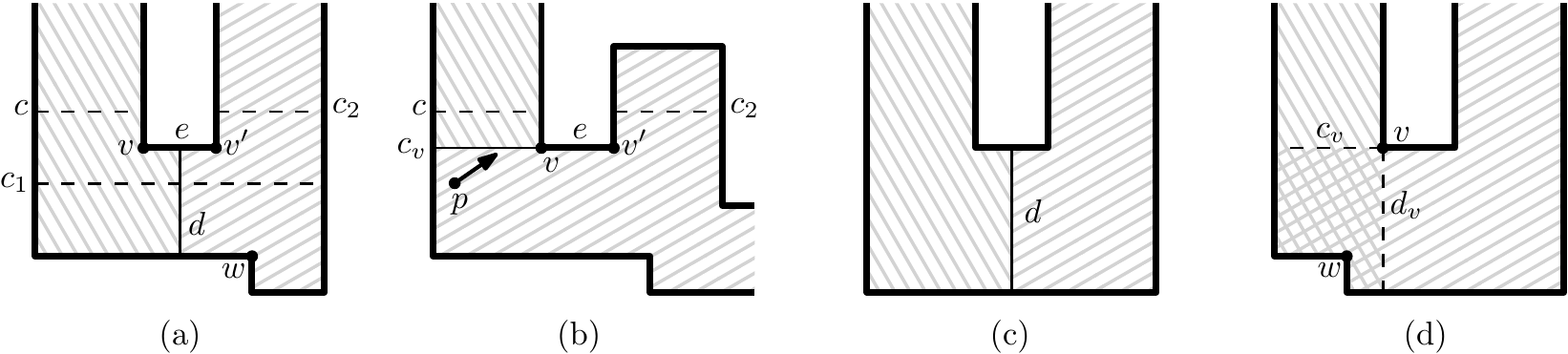}
\caption{Proof of Theorem~\ref{thm:covering} when $e$ is a bottom reflex edge.
(a) When $r \equiv 2$.
(b) When $r \equiv 0$ and $c_2$ is a $0$-cut.
(c) When $r \equiv 0$, $c_1$ is a $0$-cut, and $c_2$ is a $1$-cut.
(d) When $r \equiv 0$, $c_1$ is a $1$-cut, and $c_2$ is a $2$-cut. }
\label{fig:covering_bottom}
\end{figure}

\paragraph{When $e$ is a bottom reflex edge.}
We now assume that $e$ is a bottom reflex edge.
Let $v'$ be the other endpoint of $e$. (See \figurename~\ref{fig:covering_bottom}.)
Without loss of generality, we assume that $v$ is to the left of $v'$.
Let $c_1$ be a normal horizontal cut just below $e$ and
$c_2$ be a normal cut just above $v'$.
Also, let $m_1, m_2 \in \{0,1,2\}$ be such that $c_1$ is an $m_1$-cut
and $c_2$ is an $m_2$-cut.
So we have $(r-1) + (r-m_2) + 2 \equiv (r - m_1)$,
that is, $m_2 - m_1 \equiv r + 1$.
Recall that $c\in C$ has been chosen to be a $1$-cut
with the smallest value of $r(P^-_c)$ among all $1$-cuts $c'$ of $P$ with $r(P^-_{c'}) \geq 4$.

We first assume that $r \equiv 2$, and thus
 $m_1 \equiv m_2$.
 If $m_1 \equiv 2$, then $c_1$ is a safe cut.
 Similarly, if $m_2 \equiv 0$, then $c_2$ is a safe cut.
 Since $P$ admits no safe cut, we have that $m_1 \equiv m_2 \equiv 1$.
 So by our choice of $c$, we have $r(P^-_{c_1}) = 1$.
 Let $w$ be the unique reflex vertex in $P^-_{c_1}$.
 We pick any normal vertical cut $d$ at any point on $e$.
 (See \figurename~\ref{fig:covering_bottom}a.)
Since $d$ is not a safe cut, $d$ must be a $1$-cut, that is, $r(P^-_d) \equiv 1$.
On the other hand, we have $r(P^-_d) = r(P^+_c) + 2$ if $w \in P^-_d$,
and $r(P^-_d) = r(P^+_c) + 1$ if $w \notin P^-_d$.
As $c$ is a $1$-cut, it implies that either $r(P^-_d) \equiv 2$ or $r(P^-_d) \equiv 0$, a contradiction, so $r \not\equiv 2$.

We hence have $r \equiv 0$ and $m_2 \equiv m_1 + 1$. We first rule out $m_2 \equiv 0$.
So we assume, for sake of contradiction, that $m_2 \equiv 0$. We must have $r(P_{c_2}^+) = 0$, 
as otherwise $c_2$ would be safe cut.
We make a cut $c_v$ to the left of $v$.  (See \figurename~\ref{fig:covering_bottom}b.)
As $c_v$ is a 1-cut, we have $r(P_{c_v}^-) \equiv 0$, and $r(P_{c_v}^+) \equiv 2$. We recursively
construct guarding sets of beacons for $r(P_{c_v}^-)$ and $r(P_{c_v}^+)$. We claim that these
beacons together guard $P$. Indeed, the only way a point $p$ may not be covered would
be that $p$ lies in $P^-_{c_v}$ and its attraction path crosses $c_v$ from below.
But then $p$ would be attracted, within $P^-_{c_v}$, by the same beacon as $v'$. This
is impossible because there is no reflex vertex above the cut $c_2$, and by our induction
hypothesis, beacons are placed at reflex vertices. To complete the proof for
this case, we only need to bound the number of beacons we placed. It is at most
\[ \left\lceil \frac{r(P^-_{c_v}) }{3} \right\rceil + \left\lceil \frac{r(P^+_{c_v}) }{3} \right\rceil
  = \frac{r(P^-_{c_v}) }{3} +  \frac{r(P^+_{c_v})+1 }{3} = \frac r 3.
\]	

We now rule out the case where $(m_1, m_2) = (0, 1)$. 
If it were the case, then we would have $r(P^-_{c_1}) = 0$ since $P$ has no safe cut.
We pick any normal vertical cut $d$ at any point on $e$.
Since $d$ is a $0$-cut and both sides of $d$ contain at least one reflex vertex,
it is a safe cut, which contradicts our assumptions.
(See \figurename~\ref{fig:covering_bottom}c.)

We thus have $(m_1, m_2) = (1, 2)$.
Note that $r(P^-_{c_1}) = 1$ by our choice of $c$.
Let $w$ be the unique reflex vertex in $P^-_{c_1}$.
(See \figurename~\ref{fig:covering_bottom}d.)
Then $w$ must be to the left of $v$, because
otherwise, 
there would be a normal vertical cut $d$ at a point on $e$
such that $w$ lies to the right of $d$, and
$d$ would be a safe cut.

Let $c_v$ be the horizontal cut at $v$ and $d_v$ be the vertical cut at $v$.
We then handle  the two subpolygons $P^-_{c_v}$ and $P^-_{d_v}$, separately, in a recursive way,
though they partially overlap.
Due to the overlap, the union of the set of beacons placed separately in each subpolygon
guard the whole polygon $P$.
Since $r(P^-_{c_v}) \equiv r(P^-_{d_v}) \equiv 0$ and
$r(P^-_{c_v}) + r(P^-_{d_v}) = r$,
the number of beacons we placed is at most
\[ \frac{r(P^-_{c_v})}{3}  + \frac{r(P^-_{d_v})}{3} = \frac{r(P^-_{c_v})+r(P^-_{d_v})}{3}  =
 \frac{r}{3} = \left\lceil \frac{r}{3} \right\rceil.\]
This completes the proof.
\end{proof}

Our last result is to show that in the worst case, monotone rectilinear polygons require fewer
beacons than simple rectilinear polygons. It matches the lower bound by Biro~\cite{b-bbrg-13}.
\begin{theorem}\label{thm:coverage_monotone}
 For any rectilinear monotone polygon $P$ with $n$ vertices, $r$ of which are reflex,
 $\lfloor \frac{n+4}{8} \rfloor = \lfloor \frac{r}{4} \rfloor + 1$ beacons are sufficient to guard $P$,
 and sometimes necessary.
\end{theorem}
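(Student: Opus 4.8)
The plan is to prove sufficiency by induction on $r$, and to obtain the matching lower bound directly from Biro's extremal polygon. First I would record the arithmetic identity: since $n = 2r+4$, we have $\lfloor \frac{n+4}{8}\rfloor = \lfloor \frac{r+4}{4}\rfloor = \lfloor \frac{r}{4}\rfloor + 1$, so it suffices to place $\lfloor \frac{r}{4}\rfloor + 1$ beacons. Rotating if necessary, I would assume $P$ is $x$-monotone, so by Observation~\ref{obs:monotone} it has no vertical reflex edge and every reflex edge is a horizontal \emph{top} or \emph{bottom} reflex edge. Monotonicity yields two simplifications I would lean on throughout. By Theorem~\ref{thm:kernel}, for any subpolygon $Q$ obtained by cutting, $\Kernel(Q) = \RKernel(Q)\cap Q$, and here $\RKernel(Q)$ is simply the horizontal slab lying above every top reflex edge of $Q$ and below every bottom reflex edge of $Q$; hence $Q$ is coverable by a single beacon precisely when its highest top reflex edge is no higher than its lowest bottom reflex edge (and the slab meets $Q$). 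Moreover, any normal vertical cut in an $x$-monotone polygon meets $\bd P$ in a top and a bottom horizontal edge, so it becomes a convex edge on both sides; by Observation~\ref{obs:convex_edge} the two sides of such a cut can then be guarded independently.

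The induction itself proceeds as follows. The base case $r \le 3$ is Lemma~\ref{lem:coverage_base}, which places one beacon $=\lfloor \frac{r}{4}\rfloor + 1$ at a reflex vertex. For $r \ge 4$ I would sweep a vertical cut rightward and let $P^-_c$ be the maximal prefix whose reflex edges are slab-consistent, that is, the maximal one-beacon-coverable prefix. The clean case is when $P^-_c$ contains at least four reflex vertices, which happens whenever its two leftmost reflex edges are slab-consistent (each reflex edge carrying two reflex vertices), or when at least two isolated reflex vertices precede the first conflicting reflex edge. In that case I peel $P^-_c$ off with a single beacon in its nonempty kernel and recurse on $P^+_c$; writing $r_L = r(P^-_c) \ge 4$, the estimate $1 + (\lfloor \frac{r-r_L}{4}\rfloor + 1) \le \lfloor \frac{r}{4}\rfloor + 1$ closes the step, and Observation~\ref{obs:convex_edge} guarantees that the beacon of $P^-_c$ and the recursively placed beacons of $P^+_c$ together guard $P$.

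The main obstacle is the remaining case, where the maximal coverable prefix carries fewer than four reflex vertices: this is exactly when the two leftmost reflex edges $e_1,e_2$ conflict (one top, one bottom, with incompatible heights) and at most one isolated reflex vertex precedes $e_2$. Here independent peeling is too expensive, since a conflicting pair cut out in isolation needs two beacons to cover only four reflex vertices, overshooting $\lfloor \frac{r}{4}\rfloor + 1$ by one. The resolution, mirroring the proof of Lemma~\ref{lem:covering}, is to place beacons at reflex vertices and to choose the cut so that a single beacon serving the conflict region also attracts points on the far side of the cut; that is, the coverage is \emph{shared} across the cut rather than independent. The crux is to verify this sharing, using Observation~\ref{obs:convex_edge} together with the slab description of one-beacon coverage, through a short case analysis on the types and relative heights of $e_1$, $e_2$, and the first reflex vertex beyond the cut. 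This analysis should be a simplified analogue of the one in Lemma~\ref{lem:covering}, simpler because in the $x$-monotone setting every reflex edge is horizontal, so only the top/bottom height dichotomy matters; I expect this conflicting-pair sharing argument to be the hardest part. For the matching lower bound, I would observe that Biro's extremal construction~\cite{b-bbrg-13} forcing $\lfloor \frac{n+4}{8}\rfloor$ beacons is itself a monotone rectilinear polygon, so the bound is also \emph{sometimes necessary} for monotone polygons.
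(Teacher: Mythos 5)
Your setup --- the arithmetic identity, the reduction to $x$-monotone polygons, the slab description of $\RKernel$ via Theorem~\ref{thm:kernel}, and the observation that a vertical cut becomes a convex edge on both sides so that Observation~\ref{obs:convex_edge} decouples the two pieces --- is correct and consistent with the paper. The gap is in the inductive step: you explicitly leave the ``conflicting pair'' case unproved, and that case is the entire difficulty of your formulation. Because you peel off the \emph{maximal one-beacon-coverable prefix} with a normal cut, you need $r(P^-_c)\geq 4$, which fails precisely when the two leftmost reflex edges are one top and one bottom at incompatible heights (with at most one isolated reflex vertex before them); you then gesture at a cross-cut ``sharing'' argument analogous to the hard cases of Lemma~\ref{lem:covering} but do not carry it out. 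No such argument is needed, and none appears in the paper.

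The missing idea is to cut \emph{at a reflex vertex} rather than with a normal cut. Let $v_1,v_2,\dots$ be the right endpoints of the reflex edges ordered by $x$-coordinate, and let $c$ be the vertical cut at $v_2$. In $P^-_c$ the vertex $v_2$ becomes convex, so the second reflex edge $e_2$ is no longer a reflex edge of $P^-_c$: the left piece has exactly one reflex edge $e_1$, hence by Theorem~\ref{thm:kernel} its kernel contains $e_1$ and a single beacon on $e_1$ covers it --- whether or not $e_1$ and $e_2$ ``conflict.'' Meanwhile the cut removes at least four reflex vertices from the right piece (both endpoints of $e_1$, the left endpoint of $e_2$, and $v_2$ itself), so $r(P^+_c)\leq r-4$ and the count $1+\left(\left\lfloor\frac{r-4}{4}\right\rfloor+1\right)=\left\lfloor\frac{r}{4}\right\rfloor+1$ closes the induction. (Note that the correct accounting is $r-r(P^+_c)\geq 4$, not $r(P^-_c)\geq 4$; for a vertex cut these differ, and the left piece here may have only three reflex vertices.) Since $P$ has no vertical reflex edge, the new edge of $P^+_c$, namely $c$ fused with the vertical edge at $v_2$, is a convex edge, so Observation~\ref{obs:convex_edge} still applies and the two pieces can be guarded independently. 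Your remark for the lower bound (Biro's extremal coverage example is itself monotone) is fine.
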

\begin{proof}
Without loss of generality, we assume that $P$ is $x$-monotone.
Thus, $P$ has no vertical reflex edge by Observation~\ref{obs:monotone}.
Our proof is by induction on the number $r$ of reflex vertices.
If $P$ has at most one reflex edge $e$,
then we observe that any point on $e$ is contained in the kernel $\Kernel(P)$
by Theorem~\ref{thm:kernel}.
Thus, one beacon is sufficient to guard $P$.

Now, assume that $P$ has at least two reflex edges.
This implies that $r \geq 4$ since $P$ is $x$-monotone.
Let $v_1, v_2, \ldots, v_k$ be the right endpoints of the reflex edges
sorted from left to right.
Let $e_1$ and $e_2$ be the reflex edges that are incident to $v_1$ and $v_2$, respectively.
Let $c$ be the vertical cut at $v_2$.
We partition $P$ into $P^+_c$ and $P^-_c$ by $c$.
Then the left side subpolygon $P^-_c$ has at most one reflex edge $e_1$,
and thus can be guarded by a single beacon placed at any point on $e_1$.
The right side subpolygon $P^+_c$ has $r(P^+_c) = r - 4$ reflex vertices.
Thus, by induction, at most $\lfloor \frac{r-4}{4} \rfloor + 1$ beacons
can guard $P^+_c$.
The total number of beacons placed in $P$ is at most
 \[ 1 + \left\lfloor \frac{r-4}{4} \right\rfloor + 1 = \left\lfloor \frac{r}{4} \right\rfloor + 1,\]
as desired.

Finally, observe that cutting by $c$ always makes a new convex edge in $P^-_c$ and $P^+_c$
since there is no vertical reflex edge in $P$.
This implies that separately guarding $P^-_c$ and $P^+_c$ is sufficient to guard the whole $P$
by Observation~\ref{obs:convex_edge}.
\end{proof}

\section{Beacon-Based Routing} \label{sec:routing}

In this section, we give an improved upper bound for the beacon-based routing problem in a simple rectilinear polygon $P$ with $r$ reflex vertices. 

Our result in this section is as follows.
\begin{theorem}
\label{thm:routing}
 For any simple rectilinear polygon $P$ with $n\geq 4$ vertices, $r\geq 0$ of which are reflex,
 $\lfloor \frac{3n-4}{8}\rfloor-1=\lfloor \frac{3r}{4}\rfloor$ beacons are always sufficient
 to route between all pairs of points in $P$. There are simple rectilinear polygons in which $\lceil \frac{n}{4}\rceil-1=\lceil \frac{r}{2}\rceil$ beacons are necessary to route all point pairs for any $n\neq 6$.
\end{theorem}

\subsection{Proof of the lower bound for routing}
The polygon constructed by Biro et al.~\cite{bgikm-cccg-13} to show the lower bound $\lfloor \frac{r}{2}\rfloor(=\lfloor \frac{n}{4}\rfloor -1)$ is $x$-monotone. (See \figurename~\ref{fig:intro}a.) But we can construct non-monotone polygons for which $\lceil \frac{r}{2}\rceil (=\lceil \frac{n}{4}\rceil -1)$ beacons are necessary for any $r\neq 1$, i.e., $n\neq 6$. When $r=1$, $P$ is $xy$-monotone, so no beacon is needed.

We construct a spiral polygon $P_r$ with $r$ reflex vertices. The construction is similar to the one for the lower bound on the coverage problem, so we explain only the idea of the construction. For $r=0$, no beacon is needed, and for $r=2$, one beacon is necessary, so it suffices to construct $P_r$ for $r \geq 3$. 

\begin{figure}[tb]
\centering
\includegraphics[width=\textwidth]{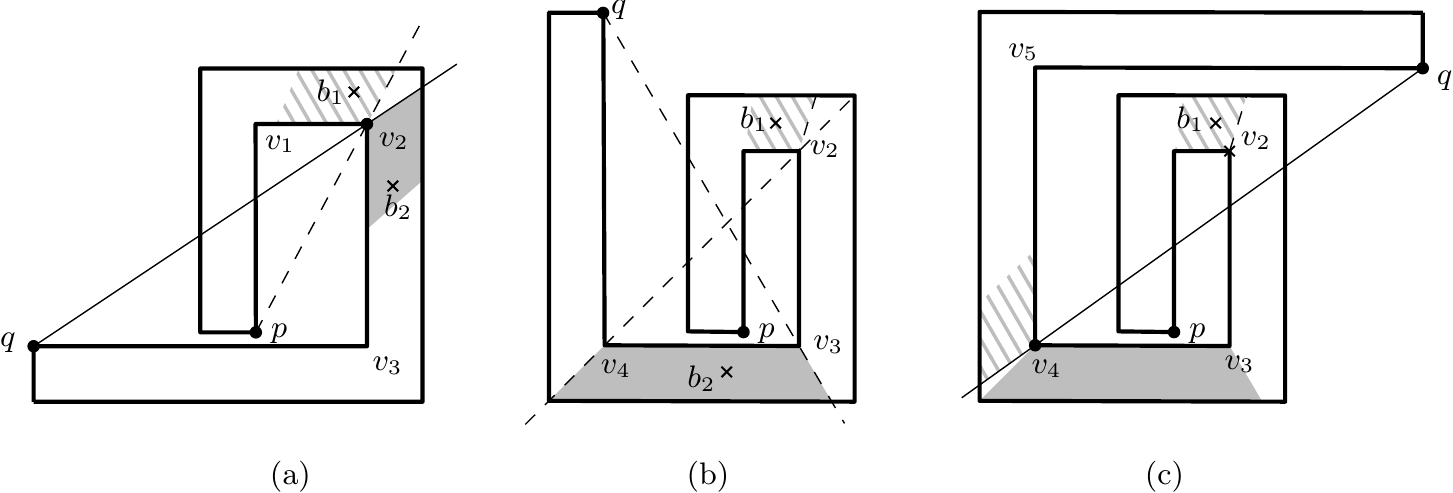}
\caption{Examples on the construction of $P_r$ for the lower bound on the routing problem. (a) $P_3$. (b) $P_4$. (c) $P_5$.}
\label{fig:lb_for_routing}
\end{figure}

As a base case for the construction, we consider $P_3$. See \figurename~\ref{fig:lb_for_routing}a. Let $p$ and $q$ be the end convex vertices of the spine of $P_r$. For a point $b_1$ to be attracted to $p$, $b_1$ must be in a polygonal region above the line $pv_2$, otherwise its beacon-based path to $p$ is blocked by the reflex edge connecting $v_2$ and $v_3$. If another point $b_2$ would be attracted to $q$, then $b_2$ must be in a polygonal region below the line $qv_2$. Since such two regions are disjoint except at $v_2$, one beacon is not sufficient. Note here that a beacon placed at $v_2$ is attracted neither to $p$ nor to $q$ because two paths toward $v_1$ and $v_3$ are both valid. So $P_3$ needs (at least) $\lceil \frac{3}{2}\rceil =2$ beacons. From $P_3$, we construct $P_r$ incrementally. 

Let us look at $P_4$. Two beacons are sufficient and necessary for $P_4$ as in \figurename~\ref{fig:lb_for_routing}b. We know that the second beacon $b_2$ should be placed in the region bounded by the line $qv_3$ and the line $v_4v_2$, otherwise $b_2$ cannot be attracted to $b_1$ due to the obstruction of the edge from $v_4$ to $q$. Then we can make $P_5$ with the endpoint $q$, as in \figurename~\ref{fig:lb_for_routing}c, such that the line $qv_4$ passes above the region where $b_2$ is placed.  This implies that $b_2$ cannot be attracted to $q$ because $b_2$ lies below the line $qv_4$, so the third beacon is necessary. It is not hard to check that this argument can be applied to construct $P_r$ from $P_{r-1}$ for any $r>3$. This shows the lower bound for the routing problem as follows.

\begin{lemma} \label{lem:coverage_lower_bound}
 The spiral $P_r$ defined above cannot be guarded by less than
 $\lceil \frac{r}{2} \rceil = \lceil \frac{n}{4}\rceil-1$ beacons,
 where $n$ denotes the number of vertices of $P_r$, i.e., $n=2r+4$, for any $r \neq 1$ or $n\neq 6$.
\end{lemma}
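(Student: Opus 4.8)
The statement is the routing lower bound of Theorem~\ref{thm:routing}: the word ``guarded'' here must be read as ``equipped with beacons that route every pair of points,'' since by Theorem~\ref{thm:covering} \emph{covering} any simple rectilinear polygon needs only $\lceil r/3\rceil<\lceil r/2\rceil$ beacons, so $\lceil r/2\rceil$ can only be a routing bound. My plan is to show that every beacon set $B$ that routes all pairs of points of $P_r$ satisfies $|B|\ge\lceil r/2\rceil$. It suffices to exhibit a single pair of points whose routing alone forces this many beacons: take $s$ and $t$ in sufficiently small neighbourhoods of the two spine endpoints $p=v_0$ and $q=v_{r+1}$. Because routing all pairs in particular routes $s$ to $t$, any lower bound on the length of a routing chain from $s$ to $t$ is a lower bound on $|B|$. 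This is how the ``every pair'' requirement enters the argument, through its hardest witness pair.

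Next I make the routing chain precise. A routing of $s$ to $t$ is a sequence $b_1,\dots,b_k\in B$ with $s$ attracted to $b_1$, each $b_i$ (viewed as a point) attracted to $b_{i+1}$, and $b_k$ attracted to $t$. I measure the progress of the chain along the spine with the order $\prec$ of Section~\ref{sec:coverage}, writing $\sigma(b)=s(b)$ for the foot of $b$ on the spine. The key is an \emph{advancement bound}: a single beacon can carry the route past at most two consecutive reflex vertices of the spiral. Its mechanism is the one exhibited for $P_3$: by Observation~\ref{obs:convex_edge} an attraction path can bend along a reflex edge $e$ only from inside the supporting half-plane $H_e$ and can never cross $e$, so the two reflex edges meeting at a reflex vertex split its neighbourhood into an entering and a leaving part. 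For the spiral this separates the points attracted to $p$ (above $pv_2$) from those attracted to $q$ (below $qv_2$): the two regions meet only at $v_2$, and $v_2$ is attracted to neither endpoint because both spine directions remain locally valid. I would prove that, for the chosen edge lengths, a beacon lying in the feasible region for one turn attracts points only up to a spine-position two reflex vertices ahead, so $\sigma(b_{i+1})$ exceeds $\sigma(b_i)$ by at most two reflex vertices in $\prec$.

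With the advancement bound I run an induction on $r$, stepping by two and tracking parity, mirroring the incremental construction $P_{r-1}\to P_r$. The base cases are $r=0$ (no reflex edge, hence $xy$-monotone by Observation~\ref{obs:monotone} and Theorem~\ref{thm:kernel}, so $0$ beacons), $r=2$ ($1$ beacon), and $r=3$ (the separation at $v_2$ above gives $2$). For the step I use the defining property of the construction: the edge lengths are chosen so that the line $q\,v_{r-1}$ from the relocated endpoint passes strictly above the feasible region into which the last beacon of any chain for $P_{r-1}$ was forced. Hence that last beacon can no longer reach the new endpoint, and one extra beacon is required for every two reflex vertices added. Combining the base case with the per-two-vertices increment shows that any chain from $s$ to $t$ has length at least $\lceil r/2\rceil$, so $|B|\ge\lceil r/2\rceil=\lceil n/4\rceil-1$. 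The excluded value $r=1$ (equivalently $n=6$) is exactly the degenerate case: $P_1$ has no reflex edge, is $xy$-monotone, and needs no beacon, so the formula $\lceil r/2\rceil=1$ overcounts; every other $r$ is covered.

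The hard part is the advancement bound. From the greedy attraction dynamics---a point that hits an edge slides along it in the distance-decreasing direction and may re-enter the interior at a vertex---I must argue that no single beacon can pull a point through two full turns of the spiral. This comes down to comparing the slopes of the critical lines joining $b$ to the reflex vertex at each turn against the slopes fixed by the edge-length sequence $(a_0,\dots,a_r)$, exactly in the spirit of the slope estimates in the proof of Lemma~\ref{lem:spiral_sub}, and it requires the rectangle aspect ratios to be large enough that each ``above/below'' separation is strict. A secondary difficulty is the parity bookkeeping, ensuring the increment yields $\lceil r/2\rceil$ rather than $\lfloor r/2\rfloor$; this is precisely where the placement of $q$, so that the decisive line passes \emph{above} the previous feasible region rather than through it, is essential.
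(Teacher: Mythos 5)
Your proposal matches the paper's own proof in all essentials: the same incremental spiral construction, the same $P_3$ base case via the two attraction regions above $pv_2$ and below $qv_2$ meeting only at $v_2$ (where neither endpoint attracts), and the same inductive step in which the new endpoint $q$ is placed so that the line $qv_{r-1}$ passes strictly above the region into which the last beacon of any chain for $P_{r-1}$ was forced --- and the paper, like you, defers the general-$r$ slope verification as a routine check in the style of Lemma~\ref{lem:spiral_sub}. One caution: your auxiliary ``advancement bound'' (at most two reflex vertices per beacon), if summed naively over the $k+1$ attraction hops of a chain from $s$ to $t$, yields only $k \geq r/2 - 1$, which is too weak even for $P_3$; the tight constant $\lceil r/2 \rceil$ comes precisely from the feasible-region induction you describe afterwards, so that argument, not the advancement bound, must carry the proof.
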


\subsection{Proof of the upper bound for routing}

We now explain how to place $\lceil \frac{3r}{4}\rceil$ beacons to route all pairs of points in $P$.

When $P$ is $xy$-monotone, we do not need to place any beacon in order to route between
a pair $(s,t)$ of points in $P$, since the target $t$ is regarded as a beacon.
We thus focus on the case where $P$ ifs not $xy$-monotone. 
Our approach is to cut $P$ by extending some of its edges, and
handle the resulting parts separately. More precisely,
for any reflex vertex $v$ of $P$  incident to an edge $e$,
we denote by $\cut_v(e)$ the cut obtained by extending $e$ through $v$.
So when  $e$ is horizontal, $\cut_v(e)$ is the horizontal cut at $v$, and
if $e$ is vertical, then $\cut_v(e)$ denotes the vertical cut at $v$.
The cut $\cut_v(e)$ splits $P$ into two subpolygons $P^+_{\cut_v(e)}$ and $P^-_{\cut_v(e)}$,
one of which does not contain $e$.
We call this subpolygon the \emph{pocket} of $e$ at $v$, denoted by $\poc_v(e)$.
For instance, if $e$ is a top reflex edge, then we have $\poc_v(e) = P^-_{\cut_v(e)}$.
So for any reflex edge $e$ with endpoints $v$ and $w$,
there are two cuts $\cut_v(e)$ and $\cut_w(e)$ extending $e$, 
and two pockets $\poc_v(e)$ and $\poc_w(e)$ of $e$.
(See \figurename~\ref{fig:monotone_pocket}a.)

Our solution to the routing problem is based on the following key lemma.

\begin{figure}[tb]
\centering
\includegraphics[width=\textwidth]{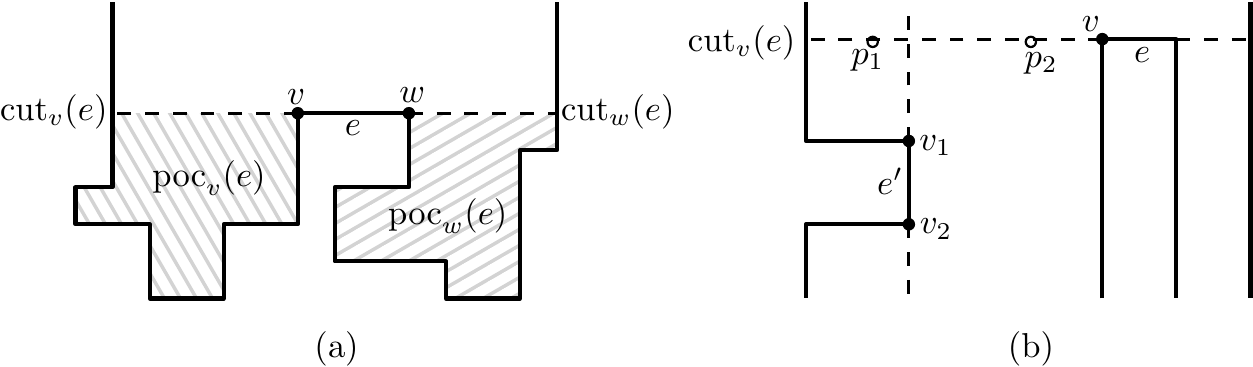}
\caption{(a) The two cuts $\cut_v(e)$ and $\cut_w(e)$ extending a reflex edge $e$
and the pockets of $e$, when $e$ is a top reflex edge.
(b) Proof of Lemma~\ref{lem:monotone_pocket}.
\label{fig:monotone_pocket}}
\end{figure}

\begin{lemma} \label{lem:monotone_pocket}
 Suppose that $P$ is not $xy$-monotone.
 Then there exist a reflex edge $e$ of $P$ and an endpoint $v$ of $e$ such that
 the pocket $\poc_v(e)$ of $e$ at $v$ is $xy$-monotone. 
\end{lemma}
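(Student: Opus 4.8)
The plan is to reduce the statement to a combinatorial claim and then extract an extremal pocket. First, recall that by Observation~\ref{obs:monotone} a simple rectilinear polygon is $xy$-monotone if and only if it has no reflex edge, and that a pocket $\poc_v(e)$ is itself a simple rectilinear polygon. Hence it suffices to exhibit a reflex edge $e$ and an endpoint $v$ for which $\poc_v(e)$ contains no reflex edge. A preliminary fact I would establish is that the reflex edges of a pocket $\poc_v(e)$ are exactly the reflex edges of $P$ lying strictly inside it: the cut $\cut_v(e)$ becomes a \emph{convex} edge of the pocket, since its far endpoint is a newly created convex vertex and the endpoint $v$ of $e$ turns from a $270^\circ$ vertex of $P$ into a $90^\circ$ vertex of the pocket. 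Thus the cut contributes no reflex edge and creates none.

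My approach is then to take the pocket $Q=\poc_v(e)$ of \emph{minimum area} over all reflex edges $e$ and both of their endpoints, and to prove that $Q$ is $xy$-monotone. Suppose not. By the reduction above, $Q$ contains a reflex edge $e'$ of $P$, with both endpoints strictly inside $Q$. I would then produce a pocket of $P$ strictly contained in $Q$, contradicting minimality; the natural candidates are the two pockets $\poc_{v'}(e')$ and $\poc_{w'}(e')$ of $e'$. Since a proper subregion of positive area has strictly smaller area, any such containment yields the contradiction.

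The geometry of $P$ enters in showing that some pocket associated with $e'$ lies inside $Q$. The crucial tool is that the bounding cut $\cut_v(e)$ of $Q$ is a chord of $P$, so $\bd P$ cannot cross it, and therefore any notch of $P$ that opens into $Q$ is trapped inside $Q$. I would organize the argument by the orientation of $e'$ relative to $\cut_v(e)$; say $\cut_v(e)$ is horizontal and bounds $Q$ from above. If the notch of $e'$ opens away from the chord, i.e.\ downward, leftward, or rightward, then extending $e'$ through the endpoint whose cut runs \emph{away} from $\cut_v(e)$ (downward for a top, left, or right reflex edge) keeps that cut below the chord, and the corresponding pocket of $e'$ is a proper subregion of $Q$. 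These cases are routine once the chord-trapping observation is in place.

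The main obstacle is the remaining case, in which $e'$ has the \emph{opposite} orientation to $e$, so its notch opens toward the chord. Then both cuts of $e'$ are parallel to $\cut_v(e)$, and each pocket of $e'$ a priori extends across the chord and escapes $Q$, so the direct containment argument fails. Here I would exploit that $P$ is simply connected and has no holes: an upward-opening notch cannot be capped inside the interior (a capped notch would be a hole), so it must reach the outer exterior; combined with the fact that this exterior region cannot cross the chord, $e'$ is forced to sit against $\cut_v(e)$ near one of its endpoints, which exposes a vertical reflex edge bounding the notch whose pocket is strictly contained in $Q$ by the previous paragraph — again contradicting minimality. Making this last case watertight, namely ruling out every configuration in which a minimum-area pocket could still contain an oppositely oriented reflex edge, is the delicate part of the proof, whereas all other cases follow cleanly from the chord-trapping property.
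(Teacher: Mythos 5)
Your overall strategy coincides with the paper's: pick an extremal pocket $Q=\poc_v(e)$ (you minimize area, the paper minimizes the number of vertices --- both work), and derive a contradiction by exhibiting a pocket of a reflex edge $e'$ of $Q$ that is properly contained in $Q$. The gap is in how you establish that containment. You argue about each pocket of $e'$ \emph{individually} and claim that whenever its bounding cut stays away from the chord $\cut_v(e)$, the pocket is trapped inside $Q$. That implication is false. A pocket of $e'$ is one of the two subpolygons determined by the chord $\cut_{v_i}(e')$, and even when $\cut_{v_i}(e')$ is disjoint from $\cut_v(e)$, that pocket can contain the entire chord $\cut_v(e)$ in its interior and hence all of $P\setminus Q$. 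Concretely, let $Q$ lie below a horizontal chord $\cut_v(e)$ and let $e'$ be a horizontal reflex edge of $Q$ whose ``chamber'' --- the side of $\cut_{v_1}(e')$ adjacent to $e'$ --- is a small room sealed off from the top of $Q$ and reachable from the rest of $P$ only through the other cut $\cut_{v_2}(e')$; then $\poc_{v_1}(e')$ is the complementary side and contains $\cut_v(e)$ together with everything beyond it. Such a configuration can be built with $e'$ of the \emph{same} orientation as $e$, i.e.\ in one of the cases you dismiss as routine, so your case split does not isolate the difficulty; conversely, the case you single out as hard (both cuts of $e'$ parallel to $\cut_v(e)$) is not genuinely harder, and your appeal to simple-connectedness and to a ``vertical reflex edge exposed against the chord'' is too vague to verify and does not close it.

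The fix --- and this is exactly the paper's argument --- is to reason about the two pockets of $e'$ \emph{jointly}. Each pocket of $e'$ contains points of $Q$ near its defining endpoint of $e'$, so if neither were contained in $Q$, then by connectedness each would contain a point of the chord $\cut_v(e)$. The two pockets are disjoint, so travelling along $\cut_v(e)$ from a point of one to a point of the other, one must cross the boundary of each; since $\bd P$ cannot meet the relative interior of a chord, these crossings must occur on $\cut_{v_1}(e')$ and $\cut_{v_2}(e')$. But those two cuts are disjoint segments of the single line supporting $e'$, so they cannot both meet $\cut_v(e)$ --- a contradiction. This one observation replaces your entire orientation case analysis; an individual pocket of $e'$ may well escape $Q$, but the two cannot both do so.
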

\begin{proof} 
Let $e$ be a reflex edge of $P$, and $v$ be an endpoint of $e$, such that
the number of vertices of the pocket $\poc_v(e)$ is minimum.
If the pocket $\poc_v(e)$ has no reflex edge, then
$\poc_v(e)$ is $xy$-monotone by Observation~\ref{obs:monotone}.

Suppose that $\poc_v(e)$ contains at least one reflex edge $e'$.
We claim that one of the two pockets of $e'$ is always contained in $\poc_v(e)$.
If our claim is true, then such a pocket of $e'$ has fewer vertices
than $\poc_v(e)$ does, a contradiction.

Let $v_1$ and $v_2$ be the two endpoints of $e'$. (See \figurename~\ref{fig:monotone_pocket}b.)
Suppose that our claim is false,
that is, neither $\poc_{v_1}(e')$ nor $\poc_{v_2}(e')$  is contained in $\poc_v(e)$.
It means that each pocket contains points on both sides of  $\cut_v(e)$.
Pockets are simple polygons, and hence they are connected. Thus, $\cut_v(e)$ contains
a point $p_1$ of $\poc_{v_1}(e')$ and a point $p_2$ of $\poc_{v_2}(e')$. So the boundaries
of  $\poc_{v_1}(e')$ and $\poc_{v_2}(e')$ must cross $\cut_v(e)$ between $p_1$ and $p_2$,
which implies that $\cut_{v_1}(e')$ and $\cut_{v_2}(e')$ cross $\cut_v(e)$. This
is impossible, as  $\cut_{v_1}(e')$ and $\cut_{v_2}(e')$ are
collinear.
\end{proof}

We are now ready to prove our upper bound $\lfloor \frac{3r}{4}\rfloor (=\lfloor \frac{3n-4}{8}\rfloor-1)$
on the beacon-based routing problem.
Our proof is by induction on the number $r = r(P)$ of reflex vertices of $P$
based on partitioning $P$ into $xy$-monotone subpolygons in a recursive manner.

\begin{lemma}
\label{lem:routing}
 For any simple rectilinear polygon $P$ with $n$ vertices, $r$ of which are reflex,
 $\lfloor \frac{3n-4}{8}\rfloor-1=\lfloor \frac{3r}{4}\rfloor$ beacons are always sufficient
 to route between all pairs of points in $P$. 
\end{lemma}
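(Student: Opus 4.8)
The plan is to prove Lemma~\ref{lem:routing} by induction on the number $r$ of reflex vertices, using Lemma~\ref{lem:monotone_pocket} to peel off an $xy$-monotone pocket at each step. When $P$ is $xy$-monotone (so $r=0$, or more generally has no reflex edge) we need no beacons at all, as noted before Lemma~\ref{lem:monotone_pocket}, since any target $t$ is itself a beacon and attracts any source $s$. This handles the base case. For the inductive step, assuming $P$ is not $xy$-monotone, I would invoke Lemma~\ref{lem:monotone_pocket} to obtain a reflex edge $e$ with endpoint $v$ such that the pocket $Q := \poc_v(e)$ is $xy$-monotone. Let $P' := P \setminus Q$ be the complementary subpolygon sharing the cut $\cut_v(e)$ as a convex edge on the $P'$ side. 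The plan is to route inside $P'$ recursively, and to spend a small fixed number of extra beacons to stitch $Q$ to $P'$ so that any pair with one or both endpoints in $Q$ can be routed.

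The central accounting is the following. Let $k$ be the number of reflex vertices of $P$ that lie in the (closed) pocket $Q$; the cut $\cut_v(e)$ converts $Q$ into an $xy$-monotone piece, and the edge $e$ together with $v$ "absorbs" some reflex vertices. I would argue that $r(P') = r - k$ for the appropriate count, and that routing $Q$ into the rest of $P$ costs at most $\lceil 3k/4\rceil$ beacons beyond the $\lfloor 3r(P')/4\rfloor = \lfloor 3(r-k)/4\rfloor$ beacons supplied by induction on $P'$, so that the total is bounded by $\lfloor 3r/4\rfloor$. The key subtlety is making the arithmetic $\lfloor 3(r-k)/4\rfloor + (\text{cost for }Q) \leq \lfloor 3r/4\rfloor$ work out, which forces the per-pocket cost to be essentially $3/4$ of its reflex vertices; since $Q$ is monotone one expects to need only a handful of "anchor" beacons on or near the cut and the reflex edge $e$ to route all of $Q$'s points to a common meeting point that $P'$ can reach. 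I would place these anchor beacons at reflex vertices, consistent with Observation~\ref{obs:convex_edge}, so that attraction paths never bend along the convex cut edge $\cut_v(e)$, guaranteeing that a path valid inside a subpolygon stays valid inside $P$.

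Concretely, for routing a source $s \in Q$ to a target $t \in P'$ (and vice versa), the plan is to first route $s$ to a designated beacon sitting at an endpoint of $e$ or on $\cut_v(e)$ inside $Q$, then hand off to the beacon sequence guaranteed by the recursion on $P'$; since $Q$ is $xy$-monotone, moving any point of $Q$ toward a beacon placed at the appropriate corner is clean, and the convexity of the new cut edge (Observation~\ref{obs:convex_edge}) ensures the hand-off across $\cut_v(e)$ works in the full polygon $P$, not merely within a subpolygon. For a pair $s,t$ both inside $Q$, monotonicity of $Q$ again lets me route with at most the anchor beacons. The case analysis will likely split on whether $e$ has one or two reflex endpoints inside $Q$, and on the parity/residue of $k$ modulo $4$, mirroring the modular bookkeeping seen in the coverage proof (Lemmas~\ref{lem:safe_cut}--\ref{lem:covering}).

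The main obstacle I anticipate is precisely this modular floor arithmetic: a naive peeling that removes an $xy$-monotone pocket might cost more than $\lceil 3k/4 \rceil$ beacons in the worst residue class, breaking the induction. To control this I expect to need a sharper choice of which pocket to peel — not merely a minimal one as in Lemma~\ref{lem:monotone_pocket}, but one whose reflex-vertex count $k$ lands in a favorable residue mod $4$, or to strengthen the induction hypothesis so that the recursion on $P'$ yields a routing scheme with a distinguished beacon already adjacent to the cut $\cut_v(e)$, eliminating the cost of a fresh hand-off beacon. A secondary obstacle is verifying that an attraction path inside $Q$ or inside $P'$ does not exit through $\cut_v(e)$ and fail in $P$; this is exactly what Observation~\ref{obs:convex_edge} is designed to rule out, provided every beacon is placed inside the half-plane supporting the cut, which the "place beacons at reflex vertices" convention should secure.
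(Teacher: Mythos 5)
Your outline assembles the right ingredients (induction on $r$, Lemma~\ref{lem:monotone_pocket}, Observation~\ref{obs:convex_edge}), but the decomposition you propose is too weak for the floor arithmetic to close, and you acknowledge the problem without resolving it. Peeling a single $xy$-monotone pocket $Q=\poc_v(e)$ and recursing on $P'=P\setminus Q$ absorbs only the one reflex vertex $v$, so if $Q$ has $k$ reflex vertices then $r(P')=r-k-1$ and your spare budget is $\lfloor \frac{3r}{4}\rfloor-\lfloor\frac{3(r-k-1)}{4}\rfloor$. In the worst case $k=0$ (a rectangular pocket) with $r\equiv 1\pmod 4$ this budget is exactly $0$, yet a rectangular pocket in general needs at least one dedicated, carefully positioned beacon to be attached to the rest of the polygon (a point deep in the pocket need not be attracted by, nor attract, any beacon that the recursion happens to place in $P'$). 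Your two proposed escapes --- choosing a pocket whose reflex count lands in a favorable residue class, or strengthening the induction hypothesis to guarantee a distinguished beacon adjacent to the cut --- are precisely where the difficulty lives, and neither is carried out; the first is not available to you (Lemma~\ref{lem:monotone_pocket} gives no control over $k$), and the second would require reproving the lemma in a stronger form.

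The paper closes this gap by a genuinely different decomposition: it cuts along \emph{both} extensions $\cut_v(e)$ and $\cut_{v'}(e)$ of the reflex edge $e$, obtaining three pieces $A=\poc_v(e)$, $C=\poc_{v'}(e)$, and $B=P\setminus(A\cup C)$ with $r(A)+r(B)+r(C)=r-2$, so that both reflex endpoints of $e$ are amortized at once. When $r(A)\geq 1$, two beacons at the cut endpoints suffice and the count is immediate. The hard case $r(A)=0$ occupies most of the paper's proof: one beacon is placed infinitesimally above $v'$ to serve the rectangle $A$, the subpolygon $B$ is handled recursively, and the second pocket $C$ is swept downward from $\cut_{v'}(e)$ until $xy$-monotonicity fails, then split into up to three subpolygons with two or three further beacons and a four-situation visibility analysis certifying that all the placed beacons can route to one another. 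None of this case analysis has a counterpart in your plan, so as written the proposal is an outline with an unclosed arithmetic gap rather than a proof.
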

\begin{proof}
Our proof is by induction on $r = r(P)$, the number of reflex vertices of $P$.
First consider the base case where $r \leq 1$.
In this case, $P$ is $xy$-monotone by Observation~\ref{obs:monotone},
so no beacon is required for $P$ as discussed above.
Hence, the upper bound holds.

Now, suppose that $P$ is not $xy$-monotone, and thus $r \geq 2$.
Then, Lemma~\ref{lem:monotone_pocket} implies the existence of
a pocket $\poc_v(e)$ of $P$ that is $xy$-monotone, where $v$ is an endpoint of a reflex edge $e$ of $P$.
Without loss of generality, we assume that $e$ is a top reflex edge and $v$ is the left endpoint of $e$.
Let $v'$ be the other endpoint of $e$, so
$\poc_v(e) = P^-_{\cut_v(e)}$ and $\poc_{v'}(e) = P^-_{\cut_{v'}(e)}$.
We consider three subpolygons, as in Figure~\ref{fig:routing}a,
$A := \poc_v(e)$, $C := \poc_{v'}(e)$, and $B := P \setminus (A \cup C)$. Let $p$ be the endpoint other than $v$ of $A\cap B$ and let $q$ be the end point than $v'$ of $C\cap B$. Note that $r(A) + r(C) + r(B) = r - 2$.
We split into two cases: either $r(A) = 0$, or $r(A)\geq 1$.

\begin{figure}[tb]
\centering
\includegraphics[width=0.8\textwidth]{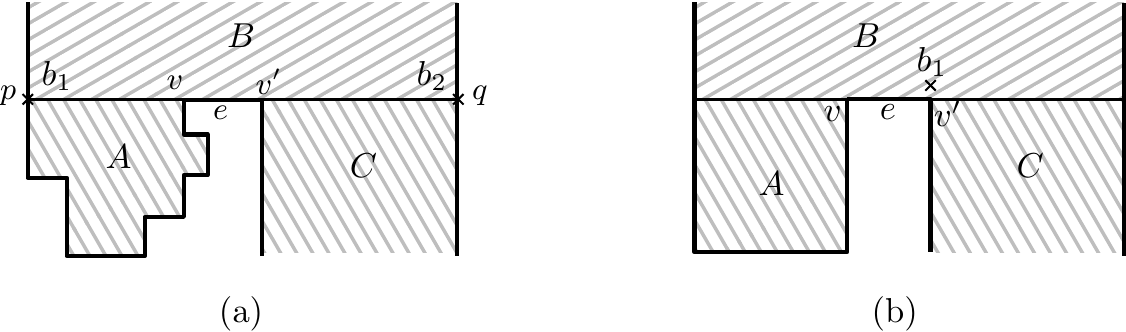}
\caption{Proof of Theorem~\ref{thm:routing}. (a) When $r(A) \geq 1$,
two beacons $b_1$ and $b_2$ are placed at $p$ and $q$ (marked by $\times$).
(b) When $r(A) = 0$, a beacon $b_1$ is first placed just above $v'$ (marked by $\times$), and other beacons will be placed more according to the shape of $C$.}
\label{fig:routing}
\end{figure}

We first consider the case that $r(A) \geq 1$. We place two beacons $b_1$ and $b_2$ at $p$ and $q$, respectively, then place $\lfloor \frac{3r(B)}{4}\rfloor$ beacons in $B$ and $\lfloor \frac{3r(C)}{4}\rfloor$ beacons in $C$ recursively. Since $r(A)\geq 1$, $r(B)+r(C)=r-r(A)-2\leq r-3$. Using this fact, we can bound the total number of beacons we have placed below by
\[ \left\lfloor \frac{3r(B)}{4}\right\rfloor + \left\lfloor \frac{3r(C)}{4}\right\rfloor + 2 \leq \left\lfloor \frac{3(r(B)+r(C))+8}{4} \right\rfloor \leq \left\lfloor \frac{3(r-3)+8}{4} \right\rfloor \leq \left\lfloor \frac{3r}{4} \right\rfloor.\]

We now check if any pair of $s$ and $t$ can be routed via these beacons. 
Segments $vp$, $v'q$, and $pq$ used for the partition are all the convex edges of the corresponding subpolygons, so by Observation~\ref{obs:convex_edge}, none of the segments are hit by any beacon-based routing path between two points in a subpolygon. This implies that once $s$ is routed to some point in (or on the boundary of) a subpolygon, it can be routed to any target $t$ in the subpolygon by the induction hypothesis. $A\cap B$ contains $b_1$ and $B\cap C$ contains $b_2$. Moreover, $b_1$ and $b_2$ can attract each other along $pq$, so any pair of $(s, t)$ for $s, t \in P$ can be routed via $b_1$ or via $b_2$ or both of them. This completes the case where $r(A)\geq 1$.

Now, we consider the other case where $r(A)=0$, which means $A$ is a rectangle. We place a beacon $b_1$ infinitesimally above $v'$ as in Fig.~\ref{fig:routing}b. Note that $b_1$ is placed inside $B$, but it can attract any point in $A$, and it can be attracted to any point in $A$ because it is located above the line connecting $v'$ and the lower right corner of $A$. We place $\lfloor \frac{3r(B)}{4}\rfloor$ beacons in $B$ recursively. For $C$, we need a more careful placement method as follows. 

We first suppose that $r(C) = 0$. Then no beacons inside $C$ are required because $b_1$ can attract any point in $C$ and it can be attracted to any point in $C$. Using this fact, we can easily verify that any two points in $A\cup C$ can be routed to each other via $b_1$. The route between $b_1$ and a point in $B$ is always possible by the induction hypothesis, which implies that any pair $(s, t)$ can be routed wherever $s$ and $t$ belong to. The number of beacons we have placed is at most
$$1+\left\lfloor \frac{3r(B)}{4}\right\rfloor \leq \left\lfloor \frac{3r(B)+4}{4}\right\rfloor \leq \left\lfloor\frac{3(r-2)+4}{4}\right\rfloor \leq \left\lfloor \frac{3r}{4}\right\rfloor,$$
since $r = r(B)+2$. Thus, from now on, we assume that $r(C) > 0$. 

For $r(C) > 0$, we partition $C$ into at most three smaller subpolygons as follows. We sweep the interior of $C$ with an initial sweeping line segment $\ell:= \cut_{v'}(e)$ downward as long as the swept region remains $xy$-monotone. See Fig.~\ref{fig:routing_C}. If the $xy$-monotonicity is violated, then there must be a reflex edge $e'$ that $\ell$ intersects. Then $e'$ would be either horizontal or vertical. 

\begin{figure}[tb]
\centering
\includegraphics[width=\textwidth]{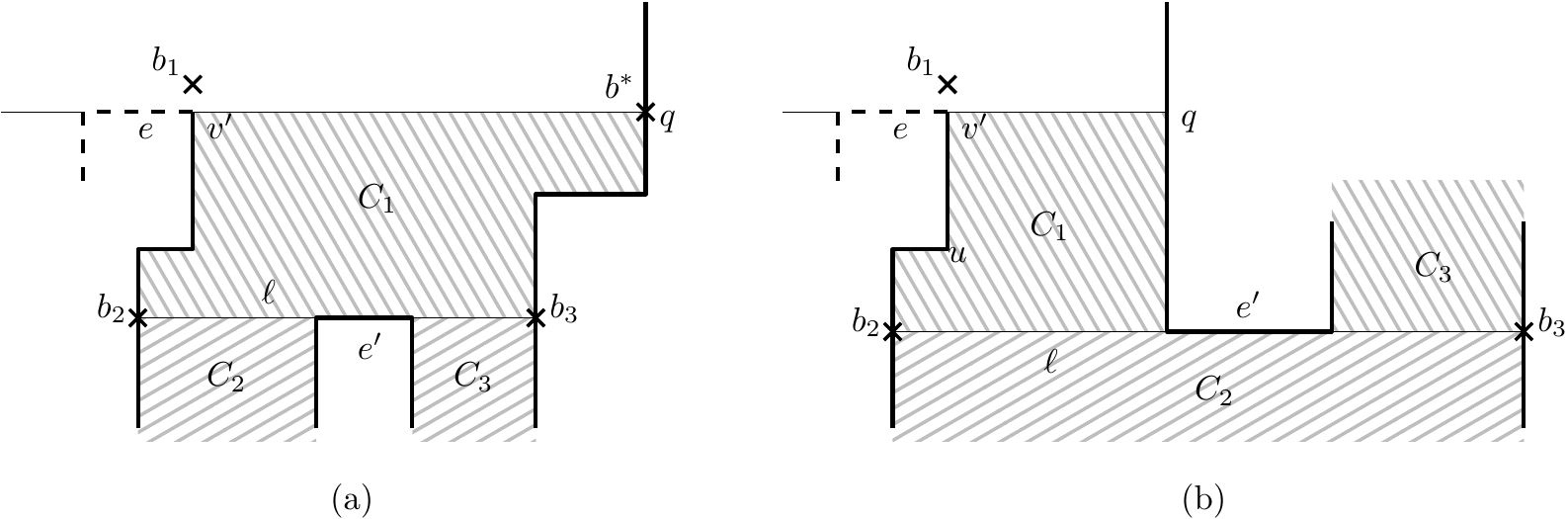}
\caption{The case where the reflex edge $e'$ is horizontal. (a) $e'$ is a bottom reflex edge. The beacon $b^*$ is placed because $r(C_1)\geq 2$. (b) $e'$ is a top reflex edge, which is connected from $q$. The beacon $b^*$ is not placed because $r(C) = 1 < 2$. The symmetric case where a top reflex edge $e'$ is connected from $v'$ is omitted in this figure.}
\label{fig:routing_C}
\end{figure}

Let us suppose that $e'$ is horizontal. For this case, $e'$ could be either a bottom reflex edge as in Fig.~\ref{fig:routing_C}a or a top reflex edge as in Fig.~\ref{fig:routing_C}b. Then we split $C$ into three subpolygons by cutting $C$ along the sweeping segment $\ell$ containing $e'$; $C_1$ is the $xy$-monotone piece, $C_2$ and $C_3$ are the other two pieces as in Fig.~\ref{fig:routing_C}a-b. We place the beacons in $C_2$ and $C_3$ (not in $C_1$) recursively, and place additional beacons as follows. We place two beacons $b_2$ and $b_3$ at two end points of $\ell$, where $b_2$ is assumed to be in the left of $b_3$. We place one more beacon $b^*$ at the point $q$ only when $r(C_1)\geq 2$. It holds that $r = r(B)+r(C_1)+r(C_2)+r(C_3)+4$. When $r(C_1)\geq 2$, the total number of beacons we have placed is
\[ 4+\left\lfloor \frac{3(r(B)+r(C_2)+r(C_3))}{4}\right\rfloor \leq \left\lfloor \frac{3(r-r(C_1)-4)+16}{4}\right\rfloor \leq \left\lfloor \frac{3r-2}{4} \right\rfloor \leq \left\lfloor \frac{3r}{4} \right\rfloor.\]
When $r(C_1)<2$, the beacon $b^*$ is not placed, so the number of the beacons is
\[ 3+\left\lfloor \frac{3(r(B)+r(C_2)+r(C_3))}{4}\right\rfloor \leq \left\lfloor \frac{3(r-4)+12}{4}\right\rfloor \leq \left\lfloor \frac{3r}{4} \right\rfloor.\]

Let us check if $s$ can be routed to $t$ in this placement. As in the previous case where $r(A)\geq 1$, all the segments used for the partition are convex edges in their associated subpolygons. So, it is sufficient to show that any pair of beacons from $\{b_1, b_2, b_3, b^*\}$ can be routed to each other. First, $b_2$ and $b_3$ are attracted to each other along the cut $\ell$, and so are $b_1$ and $b^*$ because they are visible each other. Second, $b_2$ or $b_3$ is on the boundary of $C_1$, and $C_1$ is $xy$-monotone, so if $r(C_1)\geq 2$, i.e., $b^*$ exists, then $b^*$ can be routed to $b_2$ or $b_3$. Otherwise, if $r(C_1)<2$ as in Fig.~\ref{fig:routing_C}b, then $C_1$ is either a rectangle or a union of two rectangles with the unique reflex vertex $u$. We here claim that $b_1$ can be attracted to $b_2$ or $b_3$, and $b_1$ can also attract $b_2$ or $b_3$. If $b_1$ can see directly the one of them, then it is done. Suppose that they are not visible from $b_1$. For this to happen, $u$ and an end vertex of $e'$ must obstruct the sight from $b_1$ to $b_2$ and to $b_3$. Then there are four situations as in Fig.~\ref{fig:routing_C1}. In the first three situations (Fig.~\ref{fig:routing_C1}a-c), $b_1$ can be clearly attracted to $b_2$ or $b_3$. For the last situation as in Fig.~\ref{fig:routing_C1}d, it cannot be attracted to $b_2$, but it can be attracted to $b_3$ via the end vertex of $e'$ and along the cut containing $e'$. Thus $b_1$ can always reach $b_2$ or $b_3$. The reverse attraction is also possible. If $b_1$ can see $b_2$ or $b_3$, say $b_2$, then $b_3$ first goes to $b_2$ then reaches $b_1$. Otherwise, only the last situation in Fig.~\ref{fig:routing_C1}d would be in trouble because $b_3$ cannot be attracted to $b_1$. But $b_2$ can be attracted to $b_1$, thus $b_3$ can reach $b_1$ via $b_2$. As a result, the beacons in $\{b_1, b_2, b_3, b^*\}$ can be routed to each other, which means that any pair $(s, t)$ can be routed in this partition.

\begin{figure}[tb]
\centering
\includegraphics[width=\textwidth]{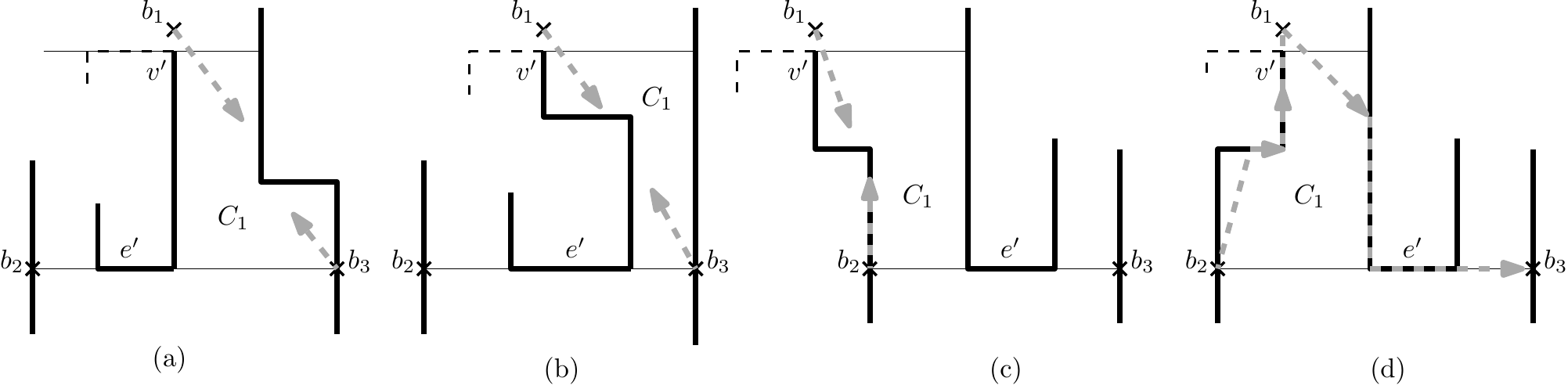}
\caption{Four different situations that $b_1$ sees neither $b_2$ nor $b_3$.}
\label{fig:routing_C1}
\end{figure}

We now consider the last case where $e'$ is a vertical reflex edge. See Fig.~\ref{fig:routing_last}. Let $w$ be the lower end vertex of $e'$, and let $e''$ be the horizontal edge incident to $w$ with the other end vertex $w'$. If $e''$ is a reflex edge, that is, $w'$ is reflex, then it should be a top reflex edge, thus we cut $C$ along $\ell$ into three pieces $C_1$, $C_2$, and $C_3$, where $C_3$ is a pocket $\poc_{w'}(e'')$, and $C_2 := C\setminus (C_1\cup C_3)$. We place two beacons $b_2$ and $b_3$ at two end points of $\ell$. We place $b^*$ at $q$ only when $r(C_1)\geq 2$. Finally, we place the beacons in $C_2$ and $C_3$ recursively. If $e''$ is not a reflex edge, i.e., $w'$ is convex, then we cut $C$ along $\cut_{w}(e'')$ into two pieces $C_1$ and $C_2$, where $C_2:= C\setminus C_1$. We place a beacon $b_2$ at the endpoint of $\cut_{w}(e'')$ (not at $w$), and one more beacon $b^*$ at $q$ regardless of the size of $r(C_1)$. Note that $b_2$ and $b^*$ are located both at the convex vertices of $C_1$.

\begin{figure}[tb]
\centering
\includegraphics[width=0.8\textwidth]{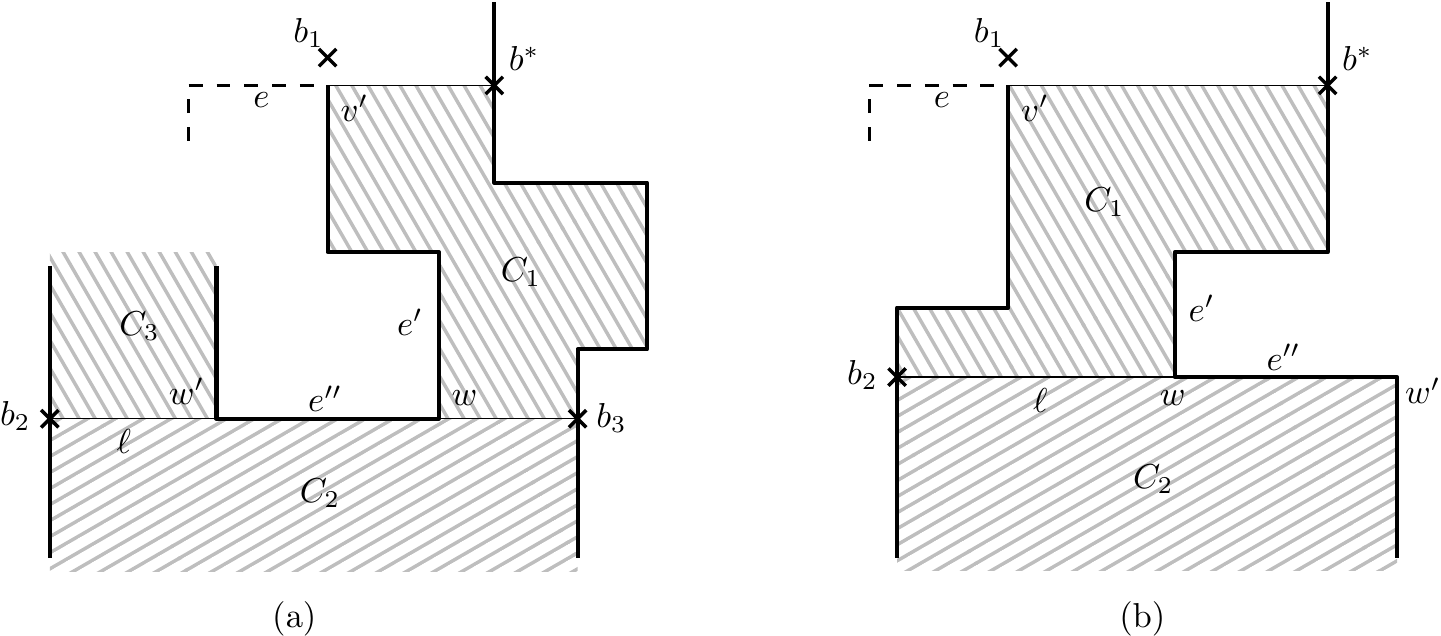}
\caption{The case where $e'$ is a vertical reflex edge. The symmetric cases are omitted in this figure.}
\label{fig:routing_last}
\end{figure}

The partition $(C_1, C_2, C_3)$ actually corresponds to the previous case where $e'$ is horizontal, so we can apply the same arguments to bound the number of beacons and to ensure that any pair $(s, t)$ can be routed. We now focus only on the partition $(C_1, C_2)$. We have placed three beacons $b_1, b_2, b^*$. Note here that $r(C_1)\geq 1$ because $C_1$ always contains the upper reflex vertex of $e'$. Using this with the fact that $r = r(B)+r(C_1)+r(C_2)+3$, we can bound the number of beacons by
\[
3+\left\lfloor \frac{3(r(B)+r(C_2))}{4}\right\rfloor \leq \left\lfloor \frac{3(r-r(C_1)-3)+12}{4}\right\rfloor \leq \left\lfloor \frac{3(r-4)+12}{4} \right\rfloor \leq \left\lfloor \frac{3r}{4} \right\rfloor.
\] Let us check if $s$ can be routed to $t$. The cuts used for the partition are again served as convex edges in associated subpolygons. Thus it suffices to prove that three beacons $b_1, b_2, b^*$ are routed to each other. The two beacons $b_1$ and $b^*$ are visible, so they can attract each other. Because $b^*$ and $b_2$ are both in the $xy$-monotone subpolygon $C_1$, they also attract each other. Thus the three beacons can attract each other. This completes the proof of the theorem.
\end{proof}

%

%
%
\section{Concluding Remarks} \label{sec:conclusion}

In this paper, we attempt to reduce the gaps between the lower and upper bounds on the number of beacons required in beacon-based coverage and routing problems for a simple rectilinear polygon $P$. For the coverage problem, we raised its lower bound, and presented an algorithm to place the same number of beacons to cover $P$. These results settle the open questions on the coverage problem. For the routing problem, we improved the lower and upper bounds, but there is still a gap between them, which is an immediate open question. Furthermore, we presented an optimal linear time algorithm of computing the beacon-based kernel of $P$. But it remains open to compute the \emph{inverse kernel} of $P$, which is defined as a set of points in $P$ that are attracted to all the points in $P$, in a subquadratic time.


\begin{thebibliography}{10}

\bibitem{b-bbrg-13}
M.~Biro.
\newblock {\em Beacon-based routing and guarding}.
\newblock Dissertation, Stony Brook University, 2013.

\bibitem{bgikm-cccg-13}
M.~Biro, J.~Gao, J.~Iwerks, I.~Kostitsyna, and J.~S.~B. Mitchell.
\newblock Combinatorics of beacon-based routing and coverage.
\newblock {\em Proc. the 25th Canadian Conf. Comput. Geom. (CCCG 2013)}, 2013.

\bibitem{bikm-wads-13}
M.~Biro, J.~Iwerks, I.~Kostitsyna, and J.~S.~B. Mitchell.
\newblock Beacon-based algorithms for geometric routing.
\newblock In {\em Proc. the 13th WADS (WADS 2013)}, volume 8037 of {\em LNCS},
  pages 158--169, 2013.

\bibitem{c-ctpg-75}
V.~Chav\'{a}tal.
\newblock A combinatorial theorem in plane geometry.
\newblock {\em J. Combinat. Theory Series B}, 18:39--41, 1975.

\bibitem{g-spragt-86}
E.~Gy\"{o}ri.
\newblock A short proof of the rectilinear art gallery theorem.
\newblock {\em SIAM J. on Algebraic and Discrete Methods}, 7(3), 1986.

\bibitem{ghks-ggprp-96}
E.~Gy\"{o}ri, F.~Hoffmann, K.~Kriegel, and T.~Shermer.
\newblock Generalized guarding and partitioning for rectilinear polygons.
\newblock {\em Comput. Geom.: Theory Appl.}, 6(1):21--44, 1996.

\bibitem{kkk-tgrfw-83}
J.~Kahn, M.~Klawe, and D.~Kleitman.
\newblock Traditional galleries require fewer watchmen.
\newblock {\em SIAM J. on Algebraic and Discrete Methods}, 4(2), 1983.

\bibitem{krs-cccg-14}
B.~Kouhestani, D.~Rappaport, and K.~Salmoaa.
\newblock Routing in a polygonal terrain with the shortest beacon watchtower.
\newblock {\em Proc. the 26th Canadian Conf. Comput. Geom. (CCCG 2014)}, 2014.

\bibitem{mp-agtgg-03}
T.~S. Michael and V.~Pinciu.
\newblock Art gallery theorems for guarded guards.
\newblock {\em Comput. Geom.: Theory Appl.}, 26:247--258, 2003.

\bibitem{o-apragt-83}
J.~O'Rourke.
\newblock An alternative proof of the rectilinear art gallery theorem.
\newblock {\em J. Geometry}, 21:118--130, 1983.

\bibitem{o-agta-87}
J.~O'Rourke.
\newblock {\em Art Gallery Theorems and Algorithms}.
\newblock International Series of Monographs on Computer Sciences. Oxford
  University Press, 1987.

\bibitem{s-rrag-92}
T.~Shermer.
\newblock Recent results in art galleries.
\newblock {\em IEEE Proceedings}, 90(9), 1992.

\bibitem{u-agip-00}
J.~Urrutia.
\newblock Art gallery and illumination problems.
\newblock In J.-R. Sack and J.~Urrutia, editors, {\em Handbook of Computational
  Geometry}, chapter~22, pages 973--1027. North-Holland, 2000.

\end{thebibliography}

\end{document}